\documentclass[conference]{IEEEtran}
\IEEEoverridecommandlockouts
% The preceding line is only needed to identify funding in the first footnote. If that is unneeded, please comment it out.
\usepackage{amsmath,amssymb,amsfonts, mathtools,algorithmic,graphicx,textcomp,xcolor}
\def\BibTeX{{\rm B\kern-.05em{\sc i\kern-.025em b}\kern-.08em
    T\kern-.1667em\lower.7ex\hbox{E}\kern-.125emX}}
\usepackage{amsthm, xpatch, hyperref, pgfplots, tikz, forest, epstopdf, pdftexcmds, floatrow, subcaption, caption, subfiles, csquotes} 
\usepackage[sort, compress]{cite}
\pagenumbering{arabic}
\usepackage{array,makecell}

\tikzset{
	dot/.style={circle,draw,inner sep=1.2,fill=black},
}

\usepackage[utf8]{inputenc}
\usepackage[english]{babel}
\makeatletter
\xpatchcmd{\proof}{\topsep6\p@\@plus6\p@\relax}{}{}{}
\makeatother

\newtheorem{env_example}{Example}
\newtheorem{defn}{Definition}
\newtheorem{theorem}{Theorem}

\newtheorem{lemma}{Lemma}

\DeclarePairedDelimiter{\ceil}{\lceil}{\rceil}
\DeclarePairedDelimiter{\floor}{\lfloor}{\rfloor}
% Document:
\graphicspath{{./figures/}}
%\usepackage[backend=biber, style=alphabetic]{biblatex}
%\addbibresource{ref.bib}
\allowdisplaybreaks

\newcounter{cases}
\newcounter{subcases}[cases]
\newenvironment{mycases}
{%
	\setcounter{cases}{0}%
	\setcounter{subcases}{0}%
	\def\case
	{%
		\par\noindent
		\refstepcounter{cases}%
		\textbf{Case \thecases.}
	}%
	\def\subcase
	{%
		\par\noindent
		\refstepcounter{subcases}%
		\textit{Subcase (\thesubcases):}
	}%
}
{%
	\par
}
\renewcommand*\thecases{\arabic{cases}}
\renewcommand*\thesubcases{\roman{subcases}}
\usetikzlibrary{matrix,backgrounds}

\begin{document}

\title{Insertion and Deletion Correction in\\ Polymer-based Data Storage\\
%{\footnotesize \textsuperscript{*}Note: Sub-titles are not captured in Xplore and
%should not be used}
\thanks{
A. Banerjee and A. Wachter-Zeh are with the Institute for Communications Engineering, Technical University of Munich, DE-80333, Munich, Germany. E-mails: \{anisha.banerjee,antonia.wachter-zeh\}@tum.de.

E. Yaakobi is with the Computer Science Department, Technion–Israel Institute of  Technology,  Haifa  32000,  Israel. E-mail: yaakobi@cs.technion.ac.il.

This work has been supported by the European Research
Council (ERC) under the European Union’s Horizon 2020 research and innovation programme (Grant Agreement No. 801434).}
}

\author{Anisha Banerjee, Antonia Wachter-Zeh and Eitan Yaakobi
}

\maketitle
\thispagestyle{plain}
\pagestyle{plain}

\begin{abstract}
Synthetic polymer-based storage seems to be a particularly promising candidate that could help to cope with the ever-increasing demand for archival storage requirements. It involves designing molecules of distinct masses to represent the respective bits $\{0,1\}$, followed by the synthesis of a polymer of molecular units that reflects the order of bits in the information string. Reading out the stored data requires the use of a tandem mass spectrometer, that fragments the polymer into shorter substrings and provides their corresponding masses, from which the \emph{composition}, i.e. the number of $1$s and $0$s in the concerned substring can be inferred. Prior works have dealt with the problem of unique string reconstruction from the set of all possible compositions, called \emph{composition multiset}. This was accomplished either by determining which string lengths always allow unique reconstruction, or by formulating coding constraints to facilitate the same for all string lengths. Additionally, error-correcting schemes to deal with substitution errors caused by imprecise fragmentation during the readout process, have also been suggested. This work builds on this research by generalizing previously considered error models, mainly confined to substitution of compositions. To this end, we define new error models that consider insertions of spurious compositions and deletions of existing ones, thereby corrupting the composition multiset. We analyze if the reconstruction codebook proposed by Pattabiraman \emph{et al.} is indeed robust to such errors, and if not, propose new coding constraints to remedy this.
\end{abstract}

\begin{IEEEkeywords}
Polymer-based data storage, string reconstruction, Composition errors, insertions, deletions
\end{IEEEkeywords}

\section{Introduction}

As we progress through this digital age, our rate of data generation continues to rise unhindered, and with it, so do our storage requirements. Since current data storage media are not particularly advantageous in regard to longevity or density, several molecular storage techniques \cite{mol1, mol2, mol3, mol3a, mol3b, mol4, mol5, mol6, mol7} have been proposed. The work in \cite{mol1} involving synthetic polymer-based storage systems appears to be especially favorable, given its promise of efficient synthesis, low read latency and cost. Under this paradigm, a string of information bits is encoded into a chain of molecules linked by means of phosphate bonds, such that the component molecules may only assume one of two significantly differing masses, which represent the bits $0$ and $1$ respectively. The stored data can be read out by employing a tandem mass (MS/MS) spectrometer, which essentially splits the synthesized polymer at the phosphate linkages and outputs the masses of the resulting fragments. In this manner, the user is given access to the masses of all substrings in the encoded string.

The previous work \cite{acharya} dealt with the problem of reconstructing a binary string from such an MS/MS readout, under the following modeling assumptions:

 \emph{Assumption 1.} Masses of the component molecules are chosen such that one can always uniquely infer the \emph{composition}, i.e., the number of $0$s and $1$s forming a certain fragment, from its mass.
 
 \emph{Assumption 2.} While fragmenting a polymer for the purpose of mass spectrometry analysis, the masses of all constituent substrings are observed with identical frequency.
 
This proposed setting simplifies the recovery of the original information string into the problem of binary string reconstruction from its composition multiset. More specifically, the reconstruction process now involves determining the binary string from a set of compositions of all of its substrings of each possible length.
%For instance, if the encoded string is $\boldsymbol{s}=110$, then one has to deduce $\boldsymbol{s}$ from the set $C(\boldsymbol{s})=\{0^11^2, 1^2, 0^11^1, 0,1,1\}$. 
It is worth noting that this setup does not allow for differentiation between a string and its reversal, since their sets of substring compositions would be identical.

While the authors of \cite{acharya} primarily focused on string lengths that ensured unique reconstruction from a composition multiset, subsequent works \cite{pattabiraman, p2, p3} extended this research by building a code that allows for unique reconstruction of each member codeword from its composition multiset alone, regardless of the string length. It was found that a redundancy proportional to the logarithm of the information length is sufficient to guarantee unique reconstruction. Similar coding constraints were also proposed to also cope with possible errors in the composition multiset. The work in \cite{p4} takes a step further by dealing with the recovery of multiple strings from the mass spectrometry readout of a mixture of synthesized polymers.

Since the errors introduced during an MS/MS readout are often context-dependent, we devote this work to the generalization of the error model considered in \cite{pattabiraman,p2}. Specifically, we investigate the impact of inserting and deleting one or more compositions on the reconstructability of the encoded strings. In addition to this, new coding constraints are proposed to enable the correction of such errors. We also consider a special kind of substitution error, namely a \emph{skewed substitution error}. This category of errors is motivated by imperfect fragmentations of a given polymer during the MS/MS readout process, as a result of which the observed molecular mass of a shorter monomer chain is lower than what the true mass of its perfectly fragmented version would have been. In this scenario, errors occur only in one direction, i.e., the the measured mass can only be lower than the true mass, not higher. An error-correcting scheme is also suggested for this setting.

The organization of this work is as follows. Section \ref{sec::prelim} introduces relevant terminology, notations and some preliminary results to be exploited subsequently. Section \ref{sec::subs} discusses coding constructions proposed in earlier works \cite{pattabiraman, p2,p3}, while Section \ref{sec::error_models} describes the error models pertaining to insertions, deletions and skewed substitutions of one or multiple compositions and also briefly summarizes error-correcting codes to deal with the same. We demonstrate the equivalence between codes correcting deletions and insertions of multisets in Section \ref{sec::equiv}. Sections \ref{sec::adel} and \ref{sec::sdel} delve deeper into the constructions capable of correcting deletions of multiple multisets. We also talk about skewed substitution errors and related coding constructions in Section \ref{sec::skew}. Finally, we conclude with Section \ref{sec::concl}, where a few open problems are discussed.

\section{Preliminaries} \label{sec::prelim}

Let $\boldsymbol{s}=s_1s_2\ldots s_n$ denote a binary string of $n$ bits. Any substring $s_i \ldots s_j$ where $i \leq j$, may be indicated by $\boldsymbol{s}_i^j$. The \emph{composition} of this substring, denoted by $c(\boldsymbol{s}_i^j)$, is said to be $0^z 1^w$, where $z$ and $w$ refer to the number of $0$s and $1$s in $\boldsymbol{s}_i^j$ respectively, such that $z+w=j-i+1$. We also define $C_k (\boldsymbol{s})$ as the set of compositions of all length-$k$ substrings in $\boldsymbol{s}$. Evidently, $C_k(\boldsymbol{s})$ should contain $n-k+1$ compositions.
\begin{env_example}
	Consider $\boldsymbol{s}=001010111$. Then, the multiset of compositions for substrings of length $7$ is given by: $C_7(\boldsymbol{s})=\{0^41^3,0^31^4,0^21^5\}$.
\end{env_example} 
Upon combining the multisets for all $1\leq k \leq n$, we obtain the \emph{composition multiset} of $\boldsymbol{s}$:
\begin{equation}
	C(\boldsymbol{s})=\bigcup_{k\in [n]}C_k(\boldsymbol{s}). \nonumber
\end{equation}
where $[n]=\{1, \ldots, n\}$. 
As stated earlier, \cite{acharya} determined string lengths for which unique reconstruction (up to reversal) from such sets is possible. For the remaining string lengths, the authors exploited a bivariate generating polynomial representation, to find strings that are equicomposable with a given string. Here, two distinct strings $\boldsymbol{s}, \boldsymbol{t} \in \{0,1\}^n$ are said to be \emph{equicomposable} if a common composition multiset is shared, i.e., $C(\boldsymbol{s})=C(\boldsymbol{t})$.

A code $\mathcal{C}$ is called a \emph{composition-reconstructable code} if for all $\boldsymbol{s}, \boldsymbol{t}\in \mathcal{C}$, it holds that $C(\boldsymbol{s})\neq C(\boldsymbol{t})$. For all $n$, denote by $A(n)$ the size of the largest composition reconstructable code. Since composition multisets are identical for a binary string and its reversal, it holds that $$A(n)\leq 2^{\ceil{\frac{n}{2}}}+\frac{1}{2}(2^n - 2^{\ceil{\frac{n}{2}}}) = 2^{n-1}+2^{\ceil{\frac{n}{2}}-1},$$ 
where the term $2^{\ceil{\frac{n}{2}}}$ describes the number of palindromic strings of length $n$, and \cite{acharya} determined string lengths $n$ where it is possible to achieve this bound with equality. Specifically, it was shown that binary strings of length $\leq 7$, one less than a prime, or one less than twice a prime, are uniquely reconstructable up to reversal.

\subsection{Unique Reconstruction Codes}
For values of $n$ where it is not possible to achieve the aforementioned bound, it is necessary to formulate a code, as done in \cite{pattabiraman, p2}.

The first major coding-theoretic problem concerning polymer-based storage involved designing constraints in order to guarantee unique reconstruction for codewords of a fixed length, i.e., to formulate a composition-reconstructable code. To this end, \cite{ p2} introduced the following {composition-reconstructable code} for even codeword lengths.

\emph{Construction 1 \cite{p2}:} 
\begin{equation}
\begin{split}
\mathcal{S}_R(n)=\
&\{\boldsymbol{s}\in \{0,1\}^n, s_1=0, s_n=1, \text{ and} \\
& \exists I \subset \{2,\ldots, n-1\} \text{ such that}\\
& \quad \quad \quad \quad \text{ for all } i \in I, s_i\neq s_{n+1-i}, \\
& \quad \quad \quad \quad \text{ for all } i \notin I, s_i=s_{n+1-i},\\
& \quad \quad \boldsymbol{s}_{[n/2]\cap I} \text{ is a Catalan-Bertrand string.}\}\\
\end{split}\label{eq::sr}
\end{equation}
In this context, a Catalan-Bertrand string refers to any binary vector wherein each prefix contains strictly more $0$s than $1$s. When $n$ is odd, the codebook $\mathcal{S}_R(n)$ is defined as: 
\begin{equation}
	\mathcal{S}_R(n)=\hspace{-2ex}\bigcup_{\boldsymbol{s} \in \mathcal{S}_R(n-1)} \hspace{-2ex} \{\boldsymbol{s}_1^{(n-1)/2}0\boldsymbol{s}_{(n+1)/2}\}^n, \boldsymbol{s}_1^{(n-1)/2}1\boldsymbol{s}_{(n+1)/2}^n\}. \label{eq:sr_odd}
\end{equation}

The number of redundant bits can thus be upper-bounded in terms of $n$ as $1/2 \log(n) +5$ \cite{pattabiraman}. Alternatively, we obtain the following statement from \cite{p2}. 
\begin{theorem}
    {\cite[pg. 3]{p2}}
     There exist efficiently encodable and decodable reconstruction codes with $k$ information bits and redundancy at most $\frac{1}{2}\log(k)+6$.
\end{theorem}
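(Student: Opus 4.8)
The plan is to build the efficiency claim on top of the structural facts already available for Construction~1: that $\mathcal{S}_R(n)$ is composition-reconstructable and that its redundancy, measured against the codeword length $n$, is at most $\tfrac12\log(n)+5$. What remains is to exhibit polynomial-time encoding and decoding maps and to re-express this redundancy in terms of the number $k$ of information bits.

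First I would isolate the degrees of freedom in $\mathcal{S}_R(n)$. With $s_1=0$ and $s_n=1$ fixed, a codeword is determined by the reflected pairs $(s_i,s_{n+1-i})$ for $i=2,\ldots,\lfloor n/2\rfloor$, together with the central bit when $n$ is odd. Each such pair takes one of four values: the symmetric choices $(0,0),(1,1)$ and the antisymmetric choices $(0,1),(1,0)$, the latter marking $i\in I$. The sole constraint coupling the pairs is that, reading the antisymmetric first-half positions in increasing order, the bits $s_i$ form a Catalan--Bertrand string. Hence encoding reduces to a bijection between information words and assignments of these pairs subject only to strict prefix dominance on the antisymmetric coordinates.

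Next I would realize this bijection by enumerative coding. Introduce a state $(j,\beta)$, where $j$ indexes the pair being filled and $\beta$ is the running excess of $0$s over $1$s among the antisymmetric positions placed so far. The admissible moves are a symmetric choice ($\beta$ unchanged, two ways), an antisymmetric $0$ ($\beta\mapsto\beta+1$, always allowed), and an antisymmetric $1$ ($\beta\mapsto\beta-1$, allowed only when $\beta\ge 2$ so that strict dominance survives). Letting $N(j,\beta)$ count the completions from $(j,\beta)$, these numbers obey a simple backward recurrence and can be tabulated in $O(n^2)$ time; unranking an integer against $N$ yields the encoder, and the inverse ranking recovers the information word from a codeword, both in polynomial time. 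Decoding a received composition multiset is then two efficient stages: run the reconstruction procedure of \cite{acharya}, whose output is unique within $\mathcal{S}_R(n)$ precisely because of the Catalan--Bertrand constraint, then apply the ranking map.

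Finally I would convert the redundancy bound. The codeword count equals $N(1,0)$, and since the number of Catalan--Bertrand strings of length $t$ is $\Theta(2^t/\sqrt{t})$, enforcing this constraint costs only about $\tfrac12\log(t)$ bits, giving $\log N(1,0)=n-\tfrac12\log(n)-O(1)$ and recovering the $\tfrac12\log(n)+5$ figure. Writing $k=n-r$ with $r\le\tfrac12\log(n)+5$ and substituting $n=k+r$ gives $r\le\tfrac12\log(k)+\tfrac12\log(1+r/k)+5\le\tfrac12\log(k)+6$ in the relevant range, as claimed. I expect the enumerative-coding step to be the main obstacle: one must make the code an exact bijection of size $N(1,0)$ while respecting the fixed endpoints, the parity of $n$, and the data-dependent number of antisymmetric positions, and confirm that enforcing the Catalan--Bertrand constraint this way costs only $\tfrac12\log(n)+O(1)$ bits rather than more.
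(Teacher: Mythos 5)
This theorem is stated in the paper as an imported result --- it is quoted verbatim from \cite[pg.~3]{p2} and no proof is given here, so there is no in-paper argument to compare against; the paper's ``proof'' is the citation together with the preceding remark that the redundancy of $\mathcal{S}_R(n)$ is at most $\tfrac12\log(n)+5$. Judged on its own, your reconstruction of the missing argument is sound and contains the right ingredients. The decomposition of $\mathcal{S}_R(n)$ into free reflected pairs plus a Catalan--Bertrand constraint on the antisymmetric first-half coordinates is exactly the structure of Construction~1, your state $(j,\beta)$ with the transition rule ``antisymmetric $1$ only when $\beta\ge 2$'' correctly encodes strict prefix dominance, and the enumerative (ranking/unranking) encoder is a standard and valid way to obtain a polynomial-time bijection; the counting $\sum_i\binom{m}{i}2^{m-i}\,\Theta(2^i/\sqrt{i})=\Theta(4^m/\sqrt{m})$ matches the $\tfrac12\log(n)+O(1)$ redundancy that the paper itself uses (and that reappears in its Lemma bounding $|\mathcal{S}_{DA}^{(t)}(n)|$), and the substitution $n=k+r$ with $\tfrac12\log(1+r/k)\le 1$ is the standard way the constant moves from $5$ to $6$. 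The two points you leave implicit are worth flagging: (i) the decoder's correctness and efficiency rest entirely on the composition-reconstructability of $\mathcal{S}_R(n)$ and on the fact that the outside-in algorithm of Section~II-B resolves each $\sigma_i=1$ ambiguity locally (via the weight-mismatch property \eqref{ineq::imp}), which is proved in \cite{pattabiraman,p2} and is legitimately taken as given; and (ii) minor bookkeeping --- the odd-$n$ central bit, the possibly empty index set $I$, and the integrality of the smallest admissible $n$ for a given $k$ --- only perturbs constants and is absorbed by the slack in ``$+6$''. So your route is a plausible stand-in for the proof in \cite{p2}, and it buys something the paper does not spell out, namely an explicit polynomial-time encoding map rather than a bare existence-by-counting statement.
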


From the definition of $A(n)$, we can also deduce that,
\begin{equation}
|\mathcal{S}_R(n)| \leq A(n). \nonumber    
\end{equation}

This construction sets $s_1=0$ and $s_n=1$ to avoid confusion among reversals, while the remaining bits are chosen such that the weight of a prefix and a suffix of equal length are unequal if the said prefix includes a Catalan-Bertrand string, i.e.,
\begin{equation}
\mathrm{wt}(\boldsymbol{s}_2^i) \begin{cases}
=\mathrm{wt}(\boldsymbol{s}_{n-i+1}^{n-1}), & \text{if } [i] \cap I=\emptyset,\\
<\mathrm{wt}(\boldsymbol{s}_{n-i+1}^{n-1}), & \text{otherwise},
\end{cases} \label{ineq::imp}
\end{equation}
where $i<\ceil{\frac{n}{2}}$ and $\mathrm{wt}(\cdot)$ denotes the Hamming weight of the argument. 
The latter inequality stems from the fact that if $\boldsymbol{s}_{[i]\cap I}$ has strictly more $0$s than $1$s, then $\boldsymbol{s}_{\{n-i+1, \ldots, n-1\}\cap I}$ contains strictly more $1$s than $0$s, thus causing a weight mismatch. Here, we note that the embedded Catalan-Bertrand string may begin from index 2 at the earliest.

\begin{subsection}{Reconstruction from Error-Free Composition Multisets} \label{subsec::rec}
The decoder of the composition-reconstructable code $\mathcal{S}_R(n)$ recovers a string from its composition multiset by employing the approach outlined in \cite{acharya, pattabiraman}. Since the underlying principles of this process help us in formulating coding constructions for the more general error models involving insertions and deletions, we briefly discuss it in this subsection. For further details, the reader is referred to \cite{acharya, pattabiraman}.

The algorithm begins by deducing the following sequence that characterizes the string to be recovered, say $\boldsymbol{s} \in \mathcal{S}_R(n)$, 
$$\boldsymbol{\sigma}_{s}=(\sigma_1, \ldots, \sigma_{\ceil{n/2}}),$$
where $\sigma_i=\mathrm{wt}(s_is_{n-i+1})$ for $i \in \{1,\ldots, \floor{n/2}\}$. When $n$ is odd, we set $\sigma_{\ceil{\frac{n}{2}}}=\mathrm{wt}(s_{\ceil{\frac{n}{2}}})$, i.e., the weight of the central element.
\begin{env_example}
	For $\boldsymbol{s}=001010111$. the sequence of $\sigma_i$'s is $\boldsymbol{\sigma}_{s}=(1,1,2,0,1)$.
\end{env_example}
These values can be computed by exploiting some inherent properties of composition multisets. In particular, we make use of \emph{cumulative weights}, which are defined for each multiset $C_k(\boldsymbol{s})$ as: $$w_k(\boldsymbol{s})=\sum_{0^z1^w \in C_k(\boldsymbol{s})}w.$$ 
\begin{env_example}
	For instance, the multiset $C_7(\boldsymbol{s})=\{0^41^3,0^31^4,0^21^5\}$ has a cumulative weight $w_7(\boldsymbol{s})=12$.
\end{env_example}
It is easy to see that for all $k\leq \ceil{\frac{n}{2}}$, these weights obey the following relations:
\begin{align}
w_1(\boldsymbol{s})&=\sum_{i=1}^{\ceil{\frac{n}{2}}} \sigma_i, \label{eq::w1} \\
w_k(\boldsymbol{s})&=\sum_{i=1}^{k}i\sigma_i+k\sum_{i=k+1}^{\ceil{n/2}}\sigma_{i}\label{eq::wk_alt}\\
&=kw_1(\boldsymbol{s})-\sum_{i=1}^{k-1}i \sigma_{k-i}.  \label{eq::cum_wts}
\end{align}

 We also observe a symmetry relation for any given set of cumulative weights: 
\begin{equation}
w_k(\boldsymbol{s})=w_{n-k+1}(\boldsymbol{s}), \quad \forall \; k \in [n]. \label{eq::wt_sym}
\end{equation}
In light of this, the multisets $C_i$ and  $C_{n-i+1}$ are henceforth said to be \emph{symmetric}. For notational convenience, we also define:
$$\widetilde{C}_i(\boldsymbol{s})=C_i(\boldsymbol{s}) \cup C_{n-i+1}(\boldsymbol{s}) $$
Now to demonstrate the functioning of the reconstruction algorithm, we consider the following example.
\begin{env_example}
	In this example, we reconstruct the string $\boldsymbol{s}=001010111$ from its composition multiset $C(\boldsymbol{s})$, which is stated below:
	\begin{equation}
		\begin{split}
		C(\boldsymbol{s})=\{& 0,0,1,0,1,0,1,1,1, 0^2,0^11^1,0^11^1,0^11^1,0^11^1,\\
		&0^11^1,1^2,1^2,0^21^1,0^21^1,0^11^2,0^21^1,0^11^2,0^11^2,\\
		&1^3,0^31^1,0^21^2,0^21^2,0^21^2,0^11^3,0^11^3,0^31^2,\\
		&0^31^2,0^21^3,0^21^3,0^11^4,0^41^2,0^31^3,0^21^4,0^21^4,\\
		&0^41^3,0^31^4,0^21^5,0^41^4,0^31^5,0^41^5\}.
		\end{split}
	\end{equation}
	The reconstruction process involves the following steps:
	\begin{enumerate}
		\item Firstly, we deduce its $\boldsymbol{\sigma}_{s}$ sequence from (\ref{eq::w1}) and (\ref{eq::cum_wts}):
		$$\boldsymbol{\sigma}_{s}=(1,1,2,0,1).$$
		\item We create a multiset $\mathcal{T}$ to include all compositions that can be determined from $\boldsymbol{\sigma}_{s}$. More explicitly, one can infer the compositions $c(s_{5}), c(\boldsymbol{s}_{4}^6), \ldots, c(\boldsymbol{s}_{1}^9)$ by noting that for any $i<\ceil{n/2}$,
		\begin{equation}
			c(s_is_{n-i+1})=\begin{cases}
			0^2, & \text{if } \sigma_i=0.\\
			0^11^1, & \text{if } \sigma_i=1.\\
			1^2, & \text{if } \sigma_i=2.
			\end{cases} \nonumber
		\end{equation}
		 $$\mathcal{T}=\{1, 0^21, 0^21^3, 0^31^4, 0^41^5\}.$$
		
		\item The process now assigns the bits of $\boldsymbol{s}$ pairwise, in an inward manner, starting with bit pair $(s_1,s_9)$. Since $\sigma_1=1$, we could set $s_1=0$ and $s_9=1$ or vice-versa. Due to (\ref{eq::sr}), we opt for the former, i.e. $(s_1,s_9)=(0,1)$.
		\item Using the reconstructed prefix and suffix, we update $\mathcal{T}$:
		 $$\mathcal{T}=\{0,1,1, 0^21, 0^21^3, 0^31^4, 0^41^5,0^31^5,0^41^4\}.$$
		\item The two longest compositions in the multiset $C(\boldsymbol{s}) \backslash \mathcal{T}$ are $\{0^41^3, 0^21^5\}$. These denote the compositions of substrings $\boldsymbol{s}_1^7$ and $\boldsymbol{s}_3^9$. Conversely, their complements $\{1^2, 0^2\}$ correspond to substrings $\boldsymbol{s}_1^2$ and $\boldsymbol{s}_8^9$. Combining this with the knowledge of bits $s_1$ and $s_9$, we reconstruct $\boldsymbol{s}$ up to its prefix-suffix pair of length 2, i.e. $(\boldsymbol{s}_1^2,\boldsymbol{s}_8^9)=(00,11)$.
		\item To recover the remaining bits, we simply repeat steps 4 and 5.
	\end{enumerate}
\end{env_example}
\end{subsection}
\section{Substitution-Correcting Constructions} \label{sec::subs}
We now turn our attention to the problem of reconstruction from erroneous composition multisets. Substitution errors were considered in \cite{pattabiraman} under the asymmetric and symmetric setting. In this error model, some compositions in $C(\boldsymbol{s})$ are arbitrarily altered. If the errors occur such that each multiset $\widetilde{C}_i$ includes at most one substituted composition, then they are said to be \emph{asymmetric}. On the contrary, a pair of \emph{symmetric} substitution errors would occur in the multisets $C_i$ and $C_{n-i+1}$, for any $i \in [n]$. 
\begin{defn}
	A composition multiset $C(\boldsymbol{s})$ of the string $\boldsymbol{s} \in \{0,1,\}^n$ is said to have suffered an \textbf{asymmetric substitution error}, if for some $i \in [n]$, a single composition of the multiset $C_i(\boldsymbol{s})$ is modified, but its symmetric counterpart $C_{n-i+1}(\boldsymbol{s})$ remains unaffected.
\end{defn}
\begin{defn}
	If a composition multiset $C(\boldsymbol{s})$ is corrupted by having one composition substituted in each of the multisets $C_i(\boldsymbol{s})$ and $C_{n-i+1}(\boldsymbol{s})$, then \textbf{two symmetric substitution errors} are said to have occurred.
\end{defn}
To exemplify this, we consider the following.
\begin{env_example}
	Let $\boldsymbol{s}=001010111$. The symmetric multiset pair $C_3(\boldsymbol{s})$ and $C_7(\boldsymbol{s})$ is given by
	\begin{align}
	    C_3(\boldsymbol{s})&= \{0^21, 0^21, 01^2, 0^21, 01^2, 01^2, 1^3\}, \nonumber \\
	    C_7(\boldsymbol{s})&=\{0^41^3,0^31^4,0^21^5\}. \nonumber
	\end{align}
	  For instance, an asymmetric substitution error is said to have occurred if $C_7(\boldsymbol{s})$ is corrupted to
	\begin{equation}
	    C'_7(\boldsymbol{s})=\{0^41^3,0^31^4,0^31^4\}. \nonumber
	\end{equation}
	On the contrary, if $C_3(\boldsymbol{s})$ is also corrupted in addition to $C_7(\boldsymbol{s})$ as follows,
	\begin{align}
	    C'_3(\boldsymbol{s})&= \{1^3, 0^21, 01^2, 0^21, 01^2, 01^2, 1^3\}, \nonumber
	\end{align}
	then two symmetric substitution errors are said to have occurred.
\end{env_example}
We recall an important construction from \cite{pattabiraman} that corrects such composition substitution errors. In the following, we designate a code $\mathcal{S}_{CA}^{(t)}$ as a \emph{$t$-asymmetric composition code}, if for all $\boldsymbol{s}$, $\boldsymbol{v} \in \mathcal{S}_{CA}^{(t)}$, there exists no $\mathcal{I} \subseteq [\ceil{\frac{n}{2}}]$ with $|\mathcal{I}| \leq t$ such that 
\begin{align}
    |\widetilde{C}_i(\boldsymbol{s}) \setminus \widetilde{C}_i(\boldsymbol{v}|&=1 \quad \forall \; i \in \mathcal{I}, \nonumber \\
    \widetilde{C}_i(\boldsymbol{s})&= \widetilde{C}_i(\boldsymbol{v}) \quad \forall \; i \in \Big[\Big\lceil\frac{n}{2}\Big\rceil\Big]\setminus \mathcal{I}. \nonumber
\end{align}

\emph{Construction 2 \cite{pattabiraman,p2}:} A single (asymmetric or symmetric) composition code for odd values of $n$ is stated below.
\begin{equation}
	\begin{split}
	\mathcal{S}^{(1)}_{CA}(n)=
	&\{\boldsymbol{s}=s_1 s^*_1 s_2 \ldots s_{\ceil{\frac{n-2}{2}}} \ldots s_{n-3}s^*_n s_{n-2}\in \{0,1\}^n:\\
	& s_1 \ldots s_{n-2} \in \mathcal{S}_R(n-2), \mathrm{wt}(\boldsymbol{s}) \text{ mod } 2=0, \\
	& \sum_{i=1}^{\ceil{\frac{n}{2}}}w_i(\boldsymbol{s})=0 \text{ mod } 3, \text{ where } s_1^*\leq s^*_n\}.
	\end{split}\nonumber
\end{equation}
A similar construction exists for even $n$. The size of this code equals $\frac{|\mathcal{S}_R(n-2)|}{2}$. However, subsequently in Section \ref{sec::sdel} we conclude by means of Lemma \ref{lem::sym_del}, that the code $\mathcal{S}_R(n)$ is also capable of correcting a single composition error.

\emph{Construction 3 \cite{pattabiraman}:} A codebook $\mathcal{S}_{CA}^{(t)}(n)$ that is capable of rectifying {$t$-asymmetric substitution errors} is proposed in \cite{pattabiraman}, and for the sake of brevity, we henceforth call it a {$t$-asymmetric composition code}. $\mathcal{S}_{CA}^{(t)}(n)$ constitutes all codewords $\boldsymbol{s}=(\boldsymbol{\tilde{s}}_1^{ m/2} \boldsymbol{b}_1^{n-m} \boldsymbol{\tilde{s}}_{m/2+1}^{m})$, such that the components $\boldsymbol{\tilde{s}}_1^{m}$ and $\boldsymbol{b}_1^{n-m}$ are constructed as follows:
\begin{itemize}
	\item We choose $\boldsymbol{\boldsymbol{\tilde{s}}}=(\boldsymbol{\tilde{s}}^{m/2}_1 \boldsymbol{\tilde{s}}_{m/2+1}^m) \in \mathcal{S}^{(t)}_R(m)$, described by the sequence $\boldsymbol{\sigma}_{\boldsymbol{\tilde{s}}}$.
	\begin{equation}
	\begin{split}
	\mathcal{S}^{(t)}_R(m)=
	&\{\boldsymbol{s}\in \{0,1\}^m, \boldsymbol{s}^t_1=\boldsymbol{0}, \boldsymbol{s}^m_{m-t+1}=\boldsymbol{1}, \text{ and} \\
	& \exists I \subset \{t+1,\ldots, m-t\} \text{ such that}\\
	& \quad \quad \quad \quad \text{ for all } i \in I, s_i\neq s_{m+1-i}, \\
	& \quad \quad \quad \quad \text{ for all } i \notin I, s_i=s_{m+1-i},\\
	& \;\; \boldsymbol{s}_{[m/2]\cap I} \text{ is a Catalan-Bertrand string.}\}\\
	\end{split}\label{eq::srt}
	\end{equation}
	\item A systematic Reed-Solomon code over the alphabet $\{0,1,2\}$ is used to map $\boldsymbol{\sigma}_{\boldsymbol{\tilde{s}}}$ to a sequence $\boldsymbol{\sigma}_{s}$ by appending the values $(\sigma_{m/2+1}, \ldots, \sigma_{n/2})$, which help to construct $\boldsymbol{b}=\boldsymbol{b}_1^{n-m}$ as follows:
	\begin{equation}
		b_k b_{n-k+1} =\begin{cases}
		00, & \text{if } \sigma_{m/2+k}=0.\\
		01, & \text{if } \sigma_{m/2+k}=1.\\
		11, & \text{if } \sigma_{m/2+k}=2.
		\end{cases} \nonumber
	\end{equation}
	where $k \in [(n-m)/2]$.
\end{itemize}

The upcoming construction, designed to correct substitution errors in symmetric multiset pairs, exploits a bivariate generating polynomial representation $P_{\boldsymbol{s}}(x,y)$ of string $\boldsymbol{s}$, that works as follows. Let the first term always be $\big(P_{\boldsymbol{s}}(x,y)\big)_{0}=1$. Now by representing bits $0$ and $1$ as $y$ and $x$ respectively, we define the subsequent terms as:
\begin{equation}
	\big(P_{\boldsymbol{s}}(x,y)\big)_{i}=\begin{cases}
	y \big(P_{\boldsymbol{s}}(x,y)\big)_{i-1}, & \text{ if } s_i=0 \\
	x \big(P_{\boldsymbol{s}}(x,y)\big)_{i-1}, & \text{ if } s_i=1.
	\end{cases} \nonumber
\end{equation}
\begin{env_example}
	For $\boldsymbol{s}=001010111$, the bivariate generating polynomial is given by
	$P_{\boldsymbol{s}}(x,y)=1+y+y^2+xy^2+xy^3+x^2y^3+x^2y^4+x^3y^4+x^4y^4+x^5y^4$.
\end{env_example}

The corresponding construction can be defined more explicitly as follows. A code $\mathcal{S}_{CS}^{(t)}$ is called a \emph{$t$-symmetric composition code}, if for all $\boldsymbol{s}$, $\boldsymbol{v} \in \mathcal{S}_{CS}^{(t)}$, there exists no $\mathcal{I} \subseteq [\ceil{\frac{n}{2}}]$ with $|\mathcal{I}| \leq t$ such that 
\begin{align}
    \big| \bigcup_{i \in \mathcal{I}} (\widetilde{C}_i(\boldsymbol{s}) \setminus \widetilde{C}_i(\boldsymbol{v})\big|&\leq t, \nonumber \\
    \widetilde{C}_i(\boldsymbol{s})&= \widetilde{C}_i(\boldsymbol{v}) \quad \forall \; i \in \Big[\Big\lceil\frac{n}{2}\Big\rceil\Big]\setminus \mathcal{I}. \nonumber
\end{align}

\emph{Construction 4 \cite{pattabiraman}:} The authors of \cite{pattabiraman} also suggest a construction that corrects any $t$ symmetric composition substitutions in an entire composition multiset as follows.
\begin{equation}
\begin{split}
\mathcal{S}_{CS}^{(t)}(n)=
&\{\boldsymbol{s}\in \{0,1\}^n, \text{ s.t. } P_{\boldsymbol{s}}(\alpha^{\ell_1}, \alpha^{\ell_2})=a_{\ell_1,\ell_2},\\
& \quad \quad \mathrm{wt}(\boldsymbol{s})\equiv a \mod (2t+1)\}\\
\end{split}
\end{equation}
for all $\ell_1,\ell_2 \in \{0,1,\ldots, 4t\}$, $a \in \{0,1,\ldots,2t\}$ and where $(a_{\ell_1,\ell_2})_{\ell_1,\ell_2=0}^{4t}$ denotes a random vector from $\mathbb{F}_q^{(4t+1)^2}$. 

\section{New Error Models} \label{sec::error_models}
The subsequent sections explore error models that involve corrupting a valid composition multiset via the insertion or deletion of one or more multisets. 
\begin{defn}
	An \textbf{asymmetric multiset deletion} is said to have occurred in the composition multiset $C(\boldsymbol{s})$ of a string $\boldsymbol{s} \in \{0,1\}^n$, if for some $i \in [n]$, the multiset $C_i (\boldsymbol{s})$ is entirely missing, while $C_{n-i+1} (\boldsymbol{s})$ is uncorrupted. \label{def:amdel}
\end{defn}
\begin{defn}
	A \textbf{pair of symmetric multiset deletions} is said to have occurred if the composition multiset $C(\boldsymbol{s})$ of a string $\boldsymbol{s} \in \{0,1\}^n$, if for some $i \in [n]$ such that $i \neq n-i+1$, the multisets $C_i(\boldsymbol{s})$ and $C_{n-i+1}(\boldsymbol{s})$ are entirely eliminated.
\end{defn}
\begin{env_example}
	Let $\boldsymbol{s}=001010111$. If the composition multiset $C(\boldsymbol{s})$ is corrupted to
	\begin{align}
	    C'(\boldsymbol{s})=&\bigcup_{i \in [n] \setminus \{3\}} C_i(\boldsymbol{s}), \nonumber\\
	    =\{&0,0,1,0,1,0,1,1,1, 0^2,0^11^1,0^11^1,0^11^1,0^11^1,\nonumber \\
		&0^11^1,1^2,1^2,0^31^1,0^21^2,0^21^2,0^21^2,0^11^3,0^11^3,\nonumber\\
		&0^31^2,0^31^2,0^21^3,0^21^3,0^11^4,0^41^2,0^31^3,0^21^4,\nonumber\\
		&0^21^4,0^41^3,0^31^4,0^21^5,0^41^4,0^31^5,0^41^5\}.\nonumber
	\end{align}
	then an asymmetric multiset deletion is said to have occurred. More specifically, the multiset $C_3(\boldsymbol{s})=\{0^21^1,0^21^1,0^11^2,0^21^1,0^11^2,0^11^2,1^3\}$ has been deleted. On the other hand, if
	\begin{align}
	    C'(\boldsymbol{s})=&\bigcup_{i \in [n] \setminus \{3,7\}} C_i(\boldsymbol{s}), \nonumber\\
	    =\{&0,0,1,0,1,0,1,1,1, 0^2,0^11^1,0^11^1,0^11^1,0^11^1,\nonumber \\
		&0^11^1,1^2,1^2,0^31^1,0^21^2,0^21^2,0^21^2,0^11^3,0^11^3,\nonumber\\
		&0^31^2,0^31^2,0^21^3,0^21^3,0^11^4,0^41^2,0^31^3,0^21^4,\nonumber\\
		&0^21^4,0^41^4,0^31^5,0^41^5\}.\nonumber
	\end{align}
	we say that a pair of symmetric multiset deletions has occurred. Here compared to $C(\boldsymbol{s})$, we are missing the multisets $C_3(\boldsymbol{s})=\{0^21^1,0^21^1,0^11^2,0^21^1,0^11^2,0^11^2,1^3\}$ and $C_7(\boldsymbol{s})=\{0^41^3,0^31^4,0^21^5\}$.
\end{env_example}
\begin{defn}
	A composition multiset $C(\boldsymbol{s})$ of a string $\boldsymbol{s} \in \{0,1\}^n$ is said to have suffered a \textbf{composition insertion error}, if for some $i \in [n]$ the multiset $C_i (\boldsymbol{s})$ contains $n-i+2$ compositions, i.e. an unknown and invalid composition has been registered.
\end{defn}
\begin{env_example}
	Once again, let $\boldsymbol{s}=001010111$. If $C_7(\boldsymbol{s})$ has been altered as follows,
	\begin{align}
	    C'_7(\boldsymbol{s})&=\{0^41^3,0^31^4,0^21^5, 0^11^6\}. \nonumber
	\end{align}
	 we say that a composition insertion error has taken place.
\end{env_example}
%\subsection{Our Results}
%\ey{if there is only one subsection here, it is better to have it as part of the section}
The main contribution of this work consists of studying the aforementioned error models and proposing new coding constraints to combat the same. We also establish an equivalence between codes that correct composition insertions and composition deletions. Consequently, we restrict our attention to the latter for the remainder of this paper. 

To this end, we first propose the following composition reconstruction code that allows for the correction of $t$ asymmetric multiset deletions. Specifically, a code $\mathcal{S}_{DA}^{(t)}$ is termed as a \emph{$t$-asymmetric multiset deletion composition code}, if for all $\boldsymbol{s}$, $\boldsymbol{v} \in \mathcal{S}_{DA}^{(t)}$, there exists no $\mathcal{I} \subseteq [n]$ with $|\mathcal{I}| \leq t$ such that for all $i \in \mathcal{I}$,
\begin{align}
    C_i(\boldsymbol{s}) &\neq C_i(\boldsymbol{v}), \nonumber \\
    C_{n-i+1}(\boldsymbol{s}) &= C_{n-i+1}(\boldsymbol{v}), \nonumber\\
    C_{j}(\boldsymbol{s}) &= C_{j}(\boldsymbol{v})\quad \forall\; j \in [n]\setminus \mathcal{I}. \nonumber
\end{align}

\emph{Construction 5:} 
\begin{equation}
\begin{split}
{\mathcal{S}}^{(t)}_{DA}(n)=&\{\boldsymbol{s}\in \{0,1\}^n, s_1=0, s_n=1, \text{ and} \\
& \exists I \subset \{2,\ldots, \frac{n}{2}\},\; |I| \geq t, \text{ such that}\\
& \quad \quad \quad \quad \forall\; i \in I, s_i\neq s_{n+1-i}, \\
& \quad \quad \quad \quad \text{ and }\forall i \notin I, s_i=s_{n+1-i},\\
& \quad \quad \boldsymbol{s}_{[n/2]\cap I} \text{ is a string wherein each }\\ 
&\quad \text{prefix has at least $t$ more $0$s than $1$s.}\}
\end{split} \label{eq::srpt}
\end{equation}

The corresponding proof follows behind Theorem \ref{lem::t_dels}. Evidently, this construction is inspired from (\ref{eq::srt}), in that it requires at least $t$ $0$s in $\boldsymbol{s}_1^{n/2}$ and at least $t$ $1$s in $\boldsymbol{s}_{n/2+1}^n$, however their locations are not necessarily restricted as in (\ref{eq::srt}). The extension to odd codeword lengths is similar to (\ref{eq:sr_odd}).

Following this, we investigate the case of symmetric multiset deletions, and discover that when two or more symmetric multiset pairs are missing, additional constraints are needed to bolster the code $S_R(n)$ so as to guarantee unique reconstructability. In this context, a code $\mathcal{S}_{DS}^{(t)}$ is termed as a \emph{$t$-symmetric multiset deletion composition code}, if for all $\boldsymbol{s}$, $\boldsymbol{v} \in \mathcal{S}_{DS}^{(t)}$, there exists no $\mathcal{I} \subseteq \Big[\Big\lceil \frac{n}{2} \Big \rceil \Big]$ with $|\mathcal{I}| \leq t$ such that
\begin{align}
    \widetilde{C}_i(\boldsymbol{s}) &\neq \widetilde{C}_i(\boldsymbol{v}), \forall\; i \in \mathcal{I} \nonumber \\
    C_{i}(\boldsymbol{s}) &= C_{i}(\boldsymbol{v})\quad \forall\; i \in \Big[\Big\lceil \frac{n}{2} \Big \rceil \Big]\setminus \mathcal{I}. \nonumber
\end{align}

For the elementary case of two deleted symmetric multiset pairs, we propose the following code.

\emph{Construction 6:} 
\begin{equation}
\begin{split}
\mathcal{S}^{(2)}_{DS}(n)=&\{\boldsymbol{s}\in \mathcal{S}_R(n), \\
&  \sum_{i=1}^{\ceil{\frac{n}{2}}} w_i(\boldsymbol{s}) \text{ mod } 7=a,\; 0\leq a \leq 6\}.\\
\end{split} \label{eq::sds}
\end{equation}
Theorem \ref{lem::sds2a} proves that this code can indeed correct the deletion of two  symmetric multiset pairs.
We also generalize this construction to accommodate for the deletion of any $t$ consecutive symmetric multiset pairs, where $t \geq 2$. More explicitly, a code $\mathcal{S}_{DS}^{\prime (t)}$ is termed as a \emph{$t$-symmetric consecutive multiset deletion composition code}, if for all $\boldsymbol{s}$, $\boldsymbol{v} \in \mathcal{S}_{DS}^{\prime (t)}$, there exists no $\mathcal{I}=\{i, i+1, \ldots i+p-1\} \subseteq \Big[\Big\lceil \frac{n}{2} \Big \rceil \Big]$ with $p \leq t$ such that
\begin{align}
    \widetilde{C}_j(\boldsymbol{s}) &\neq \widetilde{C}_j(\boldsymbol{v}), \forall\; j \in \mathcal{I} \nonumber \\
    C_{j}(\boldsymbol{s}) &= C_{j}(\boldsymbol{v})\quad \forall\; j \in \Big[\Big\lceil \frac{n}{2} \Big \rceil \Big]\setminus \mathcal{I}. \nonumber
\end{align}

\emph{Construction 7:} 
\begin{equation}
\begin{split}
\mathcal{S}^{\prime (t)}_{DS}(n)=&\{\boldsymbol{s}\in \mathcal{S}_R(n),  \sum_{i=1}^{\frac{m}{2}} w_i(\boldsymbol{s}) \text{ mod } A=a,\\
& 0\leq a \leq A-1\}\\
\end{split} \label{eq::cdst}
\end{equation}
where $t\geq 2$ and

\begin{equation}
A=\Big\lceil \frac{4t^3}{3}+\frac{2t}{3}-\frac{31}{4} \Big\rceil. \nonumber
\end{equation}
 Theorem \ref{lem::sdst_final} proves that $\mathcal{S}^{\prime (t)}_{DS}(n)$ is capable of correcting the deletion of $t$ consecutive symmetric multiset pairs.

\begin{defn}
	A composition multiset $C(\boldsymbol{s})$ of the string $\boldsymbol{s} \in \{0,1,\}^n$ is said to have suffered an \textbf{asymmetric skewed substitution error}, if for some $i \in [n]$, a single composition of multiset $C_i(\boldsymbol{s})$ is replaced with one of a lower Hamming weight, such that the symmetric counterpart $C_{n-i+1}(\boldsymbol{s})$ remains unaffected.
\end{defn}
\begin{env_example}
    For instance, if an erroneous measurement corrupts the composition $0^21^4$, the measured compositions could be $0^31^3$ or $0^41^2$, but not $0^11^5$.
\end{env_example}
Formally, a code $\mathcal{C}^{\prime(t)}$ is referred to as a \emph{$t$-asymmetric skewed composition code}, if for all $\boldsymbol{s}$, $\boldsymbol{v} \in \mathcal{C}^{\prime(t)}$, there exists no $\mathcal{I} \subseteq [n]$ with $|\mathcal{I}| \leq t$ such that for all $i \in \mathcal{I}$,
\begin{align}
    C_i(\boldsymbol{s}) &\neq C_i(\boldsymbol{v}), \nonumber \\
    C_{n-i+1}(\boldsymbol{s}) &= C_{n-i+1}(\boldsymbol{v}), \nonumber\\
    C_{j}(\boldsymbol{s}) &= C_{j}(\boldsymbol{v})\quad \forall\; j \in [n]\setminus \mathcal{I} \nonumber
\end{align}

We subsequently prove in Lemma~\ref{lem::skewc} of Section~\ref{sec::skew} that the code $\mathcal{S}^{(t)}_{DA}(n)$ (Construction 5) is sufficiently robust to allow the correction of $t$ skewed asymmetric substitution errors in its composition set.

These results, along with the earlier constructions proposed in~\cite{pattabiraman, p2, p3}, have been summarized in Table~\ref{tab::results}.

\begin{table*}[!htb]
	\begin{tabular}{|c | c | c | c | c || } 
		\hline
		Code & Symbol &  \makecell{Upper bound \\ on redundancy} & Proof  \\  
		\hline\hline
		\makecell{Composition-\\reconstructable code}& $\mathcal{S}_{R}(n)$ & $\frac{1}{2}\log_2 n +5$ & \cite{pattabiraman,p2} \\
		\hline
		\makecell{Single composition \\error-correcting code}& $\mathcal{S}^{(1)}_{CA}(n)$ & $\frac{1}{2}\log_2 (n-2) +8$ & \cite{pattabiraman,p2} \\
		\hline
		\makecell{$t$-asymmetric \\ composition code} &$\mathcal{S}^{(t)}_{CA}(n)$ & $\Big(\frac{1}{2}+3t \Big) \log_2 n +2t+5$ & \cite{pattabiraman} \\ 
		\hline
		\makecell{$t$-composition code} & $\mathcal{S}_{CS}^{(t)}(n)$ & $156t^2 \log_2 n$ & \cite{pattabiraman, p3} \\ 
		\hline
		\makecell{$t$-asymmetric multiset \\deletion composition code} &$\mathcal{S}^{(t)}_{DA}(n)$ &$\frac{1}{2} \log_2{(n-2t)}+2t+3 $ & Th. \ref{lem::t_dels} \\ 
		\hline
		 \makecell{$2$-symmetric multiset \\deletion composition code}  & $\mathcal{S}^{(2)}_{DS}(n)$ & $\frac{1}{2} \log_2{(n-2)}+8 $ & Th. \ref{lem::sds2a} \\
		\hline
		\makecell{$t$-symmetric consecutive \\multiset deletion composition code} & $\mathcal{S}^{\prime (t)}_{DS}(n)$ & \makecell{$\frac{1}{2} \log_2(n-2)$ \\[1ex] $+\log_2 \Big\lceil \frac{4t^3}{3}+\frac{2t}{3}-\frac{31}{4} \Big\rceil +5$} &  Th. \ref{lem::sdst_final} \\
		\hline 
	\end{tabular}
	\caption{Summary of constructions}
	\label{tab::results}
\end{table*}

\section{Code Equivalence: Insertion and Deletion of Multisets} \label{sec::equiv}
In this section, we demonstrate how codes which can correct the deletion of a group of $t$ multisets, can also correct the occurrence of insertion errors in those $t$ multisets.
\begin{lemma}
	A code can correct the deletion of $t$ composition multisets, if and only if it can correct any number of composition insertion errors in those $t$ multisets.
\end{lemma}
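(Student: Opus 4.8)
The plan is to reduce both correction problems to a single combinatorial condition on the code. The crucial structural fact is that the decoder always knows the codeword length $n$, and hence knows that a clean composition multiset must contain exactly $n-k+1$ compositions of each length $k\in[n]$; moreover every composition $0^z1^w$ announces its own length class through $k=z+w$. Consequently, from any received (corrupted) multiset the decoder can sort the compositions by length and compare each class cardinality against the expected value $n-k+1$. A length class that has been emptied signals a deletion, while a class whose cardinality exceeds $n-k+1$ signals an insertion; in either case the $\le t$ corrupted classes are identified exactly. For two distinct codewords $\boldsymbol{s},\boldsymbol{v}$, I would let $D(\boldsymbol{s},\boldsymbol{v})=\{k\in[n]:C_k(\boldsymbol{s})\neq C_k(\boldsymbol{v})\}$ denote the set of length classes in which they disagree.

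First I would characterise deletion-correction. If $|D(\boldsymbol{s},\boldsymbol{v})|\le t$, then deleting exactly the classes in $D(\boldsymbol{s},\boldsymbol{v})$ from both $C(\boldsymbol{s})$ and $C(\boldsymbol{v})$ produces identical received multisets (the surviving classes agree by definition of $D$), so the code fails; conversely, since the emptied classes of a received word pin down which indices were deleted, any word reachable by deletion from both $\boldsymbol{s}$ and $\boldsymbol{v}$ forces their surviving classes to coincide, i.e. $D(\boldsymbol{s},\boldsymbol{v})$ is contained in the (size $\le t$) deleted set. Hence the code corrects the deletion of $t$ multisets if and only if $|D(\boldsymbol{s},\boldsymbol{v})|>t$ for every pair of distinct codewords.

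Next I would establish the same characterisation for insertions, which is the more delicate step. Suppose a received word $R$ is reachable from $\boldsymbol{s}$ by insertions confined to a set $\mathcal{I}_s$ and from $\boldsymbol{v}$ by insertions confined to $\mathcal{I}_v$, with $|\mathcal{I}_s|,|\mathcal{I}_v|\le t$. For an index $j\in\mathcal{I}_v\setminus\mathcal{I}_s$ we have $R_j=C_j(\boldsymbol{s})$ (untouched from $\boldsymbol{s}$'s side) while $C_j(\boldsymbol{v})\subseteq R_j$ (a sub-multiset, since insertions only add compositions). Because $C_j(\boldsymbol{s})$ and $C_j(\boldsymbol{v})$ both have cardinality $n-j+1$, a sub-multiset relation between equal-size multisets forces equality, so $j\notin D(\boldsymbol{s},\boldsymbol{v})$; the symmetric argument handles $\mathcal{I}_s\setminus\mathcal{I}_v$, and indices outside $\mathcal{I}_s\cup\mathcal{I}_v$ trivially lie outside $D$. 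Thus $D(\boldsymbol{s},\boldsymbol{v})\subseteq\mathcal{I}_s\cap\mathcal{I}_v$ and $|D(\boldsymbol{s},\boldsymbol{v})|\le t$. Conversely, if $|D(\boldsymbol{s},\boldsymbol{v})|\le t$, I would construct a confusable word by setting $R_j=C_j(\boldsymbol{s})\cup C_j(\boldsymbol{v})$ for $j\in D(\boldsymbol{s},\boldsymbol{v})$ and $R_j=C_j(\boldsymbol{s})$ otherwise; this $R$ arises from both codewords by inserting into the $\le t$ classes of $D$. Hence insertion-correction is also equivalent to $|D(\boldsymbol{s},\boldsymbol{v})|>t$ for all distinct codewords, and the two correction capabilities coincide.

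I expect the equal-cardinality sub-multiset argument in the insertion direction to be the main obstacle: it is what rules out the apparent extra freedom that ``any number'' of insertions seems to grant, and it relies precisely on the fact that every genuine class $C_j$ has the fixed cardinality $n-j+1$ dictated by $n$. As a cleaner alternative for the forward implication, I would note the explicit decoder reduction --- detect the over-full classes, delete them wholesale, and hand the resulting $t$-deletion instance to the deletion decoder --- which makes the ``deletion-correcting $\Rightarrow$ insertion-correcting'' direction immediate and intuitive, leaving only the converse to the confusability computation above.
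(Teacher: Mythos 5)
Your proof is correct and follows essentially the same route as the paper's: both reduce deletion- and insertion-confusability of a codeword pair to the single condition that the two codewords differ in at most $t$ length classes, and both use the multiset-union construction $C_j(\boldsymbol{s})\cup C_j(\boldsymbol{v})$ on the differing classes to exhibit a common corrupted word. Your equal-cardinality sub-multiset argument merely makes explicit a step the paper leaves implicit (why arbitrarily many insertions confined to $t$ classes cannot confuse codewords differing in more than $t$ classes), which is a tightening of the same argument rather than a different approach.
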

\begin{proof}
	We prove this by contradiction. Let there be two binary strings $\boldsymbol{s}, \boldsymbol{v} \in \mathcal{S}_R(n)$, such that:
	\begin{equation}
	D_t(\boldsymbol{s}) \cap D_t(\boldsymbol{v}) \neq \emptyset. \label{eq::dtsv}
	\end{equation}
	where $D_t(\boldsymbol{s})$ constitutes all codewords in $\mathcal{S}_R(n)$ that $\boldsymbol{s}$ becomes equicomposable with upon the deletion of at most $t$ multisets, i.e.,
	\begin{equation}
	\begin{split}
	    D_t(\boldsymbol{s}) =&\{\boldsymbol{u} \in \mathcal{S}_R(n) \text{ such that } \exists \;  \mathcal{I} \subseteq [n],\;  |\mathcal{I}| \leq t, \\
	    & \bigcup_{ i \in [n] \setminus \mathcal{I}} C_i(\boldsymbol{s})=\bigcup_{ i \in [n] \setminus \mathcal{I}}C_i(\boldsymbol{u}) \}.
	\end{split} \nonumber
	\end{equation}
	Equation \eqref{eq::dtsv}
	implies that at least $n-t$ composition multisets of $\boldsymbol{s}$ and $\boldsymbol{v}$ are identical. In other words, when a specific group of $t$ multisets disappears from the multiset information of $\boldsymbol{s}$ and $\boldsymbol{v}$, they become indistinguishable. Let these differing multisets correspond to substring lengths $i_1, i_2, \ldots i_t$. This allows us to write that:
	\begin{equation}
	\bigcup_{j\in [n] \backslash \{i_1, \ldots i_t\}} C_j(\boldsymbol{s})=\bigcup_{j\in [n] \backslash \{i_1, \ldots i_t\}} C_j(\boldsymbol{v}). \nonumber
	\end{equation}
	If we perform a set union operation on both sides of the previous equation with $\bigcup_{i \in \{i_1, \ldots i_t\}}C_i(\boldsymbol{s}) \cup C_i(\boldsymbol{v})$, then we get:
	\begin{eqnarray}
	&&\bigcup_{i \in \{i_1, \ldots i_t\}} (C_j(\boldsymbol{v}) \backslash C_j(\boldsymbol{s})) \cup \bigcup_{j\in [n]}  C_j(\boldsymbol{s}) \nonumber \\
	&=&\bigcup_{i \in \{i_1, \ldots i_t\}} (C_j(\boldsymbol{s})\backslash C_j(\boldsymbol{v})) \cup \bigcup_{j\in [n] } C_j(\boldsymbol{v}). \nonumber
	\end{eqnarray}
	This effectively means that if the multisets $C_{i_1}(\boldsymbol{s}), \ldots C_{i_t}(\boldsymbol{s})$ are corrupted by the insertion of some specific erroneous compositions, then the multiset information may correspond to both $\boldsymbol{s}$ and $\boldsymbol{v}$, and vice-versa.
	This lets us write the following:
	\begin{equation}
	I_t(\boldsymbol{s}) \cap I_t(\boldsymbol{v}) \neq \emptyset.
	\end{equation}
	where $I_t(\boldsymbol{s})$ denotes the set of all codewords $\boldsymbol{u} \in \mathcal{S}_R(n)$ whose composition multisets, upon suffering any number of insertion errors in at most $t$ distinct multisets, resemble $C(\boldsymbol{s})$ after corruption by certain composition insertions in those affected multisets. In other words, at least $n-t$ distinct multisets of $\boldsymbol{s}$ and $\boldsymbol{u}$ are identical. Consequently, we can write
	\begin{equation}
	\begin{split}
	    I_t(\boldsymbol{s}) =&D_t(\boldsymbol{s}) \\
	    =&\{\boldsymbol{u} \in \mathcal{S}_R(n) \text{ such that } \exists \;  \mathcal{I} \subseteq [n],\;  |\mathcal{I}| \leq t, \\
	    & \forall \; i \in [n] \setminus \mathcal{I},\;\; C_i(\boldsymbol{s})=C_i(\boldsymbol{u}) \}
	\end{split} \nonumber
	\end{equation}
\end{proof}
Owing to this result, we deem it sufficient to focus on multiset deletion-correcting codes.
The subsequent sections examine how multiset deletions affect the reconstructability of an encoded string drawn from $\mathcal{S}_R(n)$. Similar to \cite{pattabiraman}, we categorize such deletion errors into two major settings. 
\section{Asymmetric Multiset Deletion-correcting Composition-Reconstruction Codes} \label{sec::adel}

%We begin by evaluating whether for any $\boldsymbol{s} \in \mathcal{S}_R(n)$ characterized by the sequence $\Sigma^{\ceil{n/2}}=(\sigma_1, \ldots, \sigma_{\ceil{n/2}})$, the deletion of a single composition from $C(\boldsymbol{s})$ poses any difficulty while recovering $\boldsymbol{s}$.

%\begin{lemma}
%	Consider a string $\boldsymbol{s}\in \mathcal{S}_R(n)$. Given $C'(\boldsymbol{s})$, that indicates the multiset obtained after the deletion of a single composition, one can uniquely recover $\boldsymbol{s}$. \label{lem::single_deletion} 
%\end{lemma}
%\begin{proof}
%	Let the missing element be $0^a1^b$ from the multiset $C_{a+b}(\boldsymbol(s)$. Also, suppose that the cumulative weight of this modified multiset is given by $w'_{a+b}(\boldsymbol{s})$. From the symmetry property of cumulative weights, we may write:
%	\begin{eqnarray}
%	b+w'_{a+b}(\boldsymbol{s})&=&w_{a+b}(\boldsymbol{s}) \nonumber \\
%	&=&w_{n-a-b+1}(\boldsymbol{s})\nonumber
%	\end{eqnarray}
%	The preceding equation suggests that the missing composition $0^a1^b$ can be easily recovered from the difference between weights $w_{n-a-b}(\boldsymbol{s})$ and $w'_{a+b}(\boldsymbol{s})$. As a result, we can deduce $C(\boldsymbol{s})$, in turn implying that $\boldsymbol{s}$ is uniquely reconstructable.
%\end{proof}

%We now attempt to extend this error model by considering the deletion of an entire multiset. 

We begin by considering an error model where a complete multiset $C_k(\boldsymbol{s})$ can be deleted from the composition multiset $C(\boldsymbol{s})$. This is formally referred to as a single asymmetric multiset deletion [see Definition \ref{def:amdel}]. We investigate whether the reconstruction codebook [see Construction 1] guarantees unique recoverability under this model.
To proceed in this direction, we first take note of the following lemma, which results from a specific case of {\cite[Lemma 4]{pattabiraman}}.
\begin{lemma}
	Let $\boldsymbol{s},\boldsymbol{v}\in \mathcal{S}_R(m)$ share the same $\boldsymbol{\sigma}$ sequence and satisfy $|C_j(\boldsymbol{s})\backslash C_j(\boldsymbol{v})| \leq 2$ for all $j \in [m]$. If the longest prefix-suffix pair shared by $\boldsymbol{s}$ and $\boldsymbol{v}$ is of length $i$, then their corresponding composition multisets $C_{m-i-1}$ and $C_{m-i-2}$ each differ in at least 2 compositions. \label{lem::comp_mismatch}
\end{lemma}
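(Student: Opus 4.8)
The plan is to track exactly which length-$j$ substrings of $\boldsymbol{s}$ and $\boldsymbol{v}$ can carry different compositions, and then to rule out that the resulting discrepancy is a single swap. First I would record two consequences of the hypotheses. Since $\boldsymbol{s},\boldsymbol{v}$ share $\boldsymbol{\sigma}$, relations \eqref{eq::w1}--\eqref{eq::cum_wts} give $w_j(\boldsymbol{s})=w_j(\boldsymbol{v})$ for every $j$, so for each fixed $j$ the multisets $C_j(\boldsymbol{s}),C_j(\boldsymbol{v})$ have the same cardinality $m-j+1$ and the same total weight, and $c(\boldsymbol{s})=c(\boldsymbol{v})$. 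Next I would pin down the first disagreement: writing $p=\boldsymbol{s}_1^i=\boldsymbol{v}_1^i$ and $q=\boldsymbol{s}_{m-i+1}^m=\boldsymbol{v}_{m-i+1}^m$, maximality of $i$ forces the pair $(i+1,m-i)$ to differ, while $\sigma_{i+1}=\mathrm{wt}(s_{i+1}s_{m-i})=\mathrm{wt}(v_{i+1}v_{m-i})$. Two distinct bit-pairs of equal weight are only possible when $\sigma_{i+1}=1$, so (up to swapping the names $\boldsymbol{s},\boldsymbol{v}$) we get $s_{i+1}=0,\,s_{m-i}=1$ and $v_{i+1}=1,\,v_{m-i}=0$. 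Writing $c(\boldsymbol{s}_a^{a+j-1})=c(\boldsymbol{s})-c(\text{removed prefix})-c(\text{removed suffix})$ shows that for $j\ge m-i$ the removed pieces lie inside $p$ and $q$, whence $C_j(\boldsymbol{s})=C_j(\boldsymbol{v})$ there, and the first possible discrepancy is at $j=m-i-1$.

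The second ingredient is an equal-weight reduction: if two equal-size multisets of length-$j$ compositions have equal total weight and are not identical, they differ in at least two compositions. Indeed, a single-composition discrepancy $\{\alpha\}$ versus $\{\beta\}$ would force $\mathrm{wt}(\alpha)=\mathrm{wt}(\beta)$, but a length-$j$ composition is determined by its weight, so $\alpha=\beta$, a contradiction. Thanks to this, for both $j=m-i-1$ and $j=m-i-2$ it suffices to prove merely that $C_j(\boldsymbol{s})\neq C_j(\boldsymbol{v})$; the ``at least two'' then comes for free.

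For $j=m-i-1$ the only substrings whose composition can change are the length-$(m-i-1)$ prefix $\boldsymbol{s}_1^{m-i-1}$ (complement the length-$(i+1)$ suffix, which reaches coordinate $m-i$) and the length-$(m-i-1)$ suffix $\boldsymbol{s}_{i+2}^m$ (complement the length-$(i+1)$ prefix, reaching coordinate $i+1$); every interior substring removes only shared pieces and hence agrees. Cancelling the common interior part, $C_{m-i-1}(\boldsymbol{s})=C_{m-i-1}(\boldsymbol{v})$ is equivalent to equality of the two two-element weight-multisets $\{W-W_q-1,\,W-W_p\}$ and $\{W-W_q,\,W-W_p-1\}$, where $W=\mathrm{wt}(\boldsymbol{s})$, $W_p=\mathrm{wt}(p)$, $W_q=\mathrm{wt}(q)$; these coincide iff $W_p=W_q$. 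But $s_1=0,\,s_m=1$ with \eqref{ineq::imp} give $W_p=\mathrm{wt}(\boldsymbol{s}_2^i)\le \mathrm{wt}(\boldsymbol{s}_{m-i+1}^{m-1})=W_q-1$, so $W_p<W_q$ and the multisets differ. With the reduction this settles $C_{m-i-1}$.

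The case $j=m-i-2$ is the main obstacle. Now the complement can also reach the next pair $(i+2,m-i-1)$, so up to four substrings are affected ($a\in\{1,2\}$ and $a\in\{i+2,i+3\}$); I would again cancel the common interior part and argue that the residual (at most four-element) weight-multisets disagree. The difficulty is that $W_p<W_q$ no longer suffices: in the balanced situation $W_q=W_p+1$ the four weights can cancel in pairs. Concretely, $\boldsymbol{s}=001011$ and $\boldsymbol{v}=011001$ share $\boldsymbol{\sigma}=(1,1,1)$, have $i=1$, and satisfy $C_{m-i-1}(\boldsymbol{s})\neq C_{m-i-1}(\boldsymbol{v})$ yet $C_{m-i-2}(\boldsymbol{s})=C_{m-i-2}(\boldsymbol{v})=\{0^21,0^21,01^2,01^2\}$ — so such configurations must be excluded by something weight counting cannot see. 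That something is the Catalan--Bertrand constraint of \eqref{eq::sr}: the shared prefix forces $\boldsymbol{s},\boldsymbol{v}$ to share the index set $I\cap[i]$ and the \emph{identical} Catalan--Bertrand prefix $R$ on it, and since $i+1\in I$ with $v_{i+1}=1$, admissibility of $\boldsymbol{v}$ requires $R$ to already carry at least two more $0$s than $1$s; this surplus is precisely what prevents the pairwise cancellation (and indeed disqualifies the two six-bit strings above, neither of which lies in $\mathcal{S}_R(6)$). Carrying out this bookkeeping — enumerating the affected weights in each sub-case of the pair $(i+2,m-i-1)$ and using the Catalan--Bertrand surplus to break the tie — is the technical heart, after which the equal-weight reduction again upgrades ``$\neq$'' to ``differs in at least two compositions''.
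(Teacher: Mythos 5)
Your reduction is sound and, for the first multiset, complete: since $\boldsymbol{s}$ and $\boldsymbol{v}$ share $\boldsymbol{\sigma}$, every $C_j$ has matching cardinality and cumulative weight, and because a length-$j$ composition is determined by its weight, ``not identical'' automatically upgrades to ``differ in at least two compositions.'' Your localization of the affected substrings via complementation, the identification $\sigma_{i+1}=1$ with $(s_{i+1},s_{m-i})=(0,1)$, $(v_{i+1},v_{m-i})=(1,0)$, and the weight bookkeeping $\{W-W_q-1,\,W-W_p\}$ vs.\ $\{W-W_q,\,W-W_p-1\}$ combined with $W_p<W_q$ from \eqref{ineq::imp} correctly settle $C_{m-i-1}$. (For reference, the paper does not prove this lemma at all; it imports it as a special case of Lemma~4 of \cite{pattabiraman}, so there is no in-paper argument to measure yours against.)

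The genuine gap is that the $C_{m-i-2}$ case is never proved. You correctly diagnose that four substrings are affected, that the weight-cancellation argument alone fails (your $001011$ vs.\ $011001$ example is a good witness), and that the Catalan--Bertrand surplus forced on $\boldsymbol{v}$ by $v_{i+1}=1$ must be the tie-breaker --- but you then declare the enumeration over the configurations of the pair $(i+2,m-i-1)$ to be ``the technical heart'' and stop. That enumeration is precisely the content of the lemma: one must run through $\sigma_{i+2}\in\{0,1,2\}$ and the possible assignments of $(s_{i+2},v_{i+2})$, write out the four residual weights on each side, and verify in each configuration that the Catalan--Bertrand inequality $\mathrm{wt}(\boldsymbol{v}_2^{i+1})+1\le\mathrm{wt}(\boldsymbol{v}_{m-i}^{m-1})$ (equivalently $W_p\le W_q-1$ together with the extra unit of slack at index $i+1$) rules out the pairwise cancellation of the two two-element residual multisets. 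Without this, the claim for $C_{m-i-2}$ rests on an unverified plan; note also that an analogous worked-out case analysis appears in the paper's own proof of Lemma~\ref{lem::t_dels_sets} (the set equality \eqref{eq::set_eq1a} and its reduction to \eqref{eq::csvps}), which is the template you would need to instantiate here. As written, the proposal establishes only half of the stated conclusion.
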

To shortly highlight the implications of this lemma, we consider the strings $\boldsymbol{s}=001011101$ and $\boldsymbol{v}=001110101$. Clearly, they are both specified by $\boldsymbol{\sigma}=(1,0,2,1,1)$. Since the longest prefix-suffix pair shared by them is $(001,101)$, i.e., of length $3$, their respective multisets $C_{4}$ and $C_5$ differ by at least 2 compositions.
\begin{lemma}
	Consider a string $\boldsymbol{s} \in \mathcal{S}_{R}(n)$. Given $C'(\boldsymbol{s})=\bigcup_{i\in [n] \backslash \{k\}} C_i(\boldsymbol{s})$ for any $k \in [n]$, $\boldsymbol{s}$ can be fully recovered. \label{lem::del_set1}
\end{lemma}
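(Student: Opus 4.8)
The plan is to argue by contradiction, leaning on the reconstruction pipeline of Section~\ref{subsec::rec}. Since the decoder knows $n$ and can tally the number of compositions of each length, it identifies the unique length $k$ whose multiset is entirely absent; hence any competing codeword $\boldsymbol{v}\in\mathcal{S}_R(n)$ that also explains $C'(\boldsymbol{s})$ must satisfy $C_i(\boldsymbol{v})=C_i(\boldsymbol{s})$ for every $i\ne k$. It therefore suffices to show that no distinct $\boldsymbol{v}\in\mathcal{S}_R(n)$ can agree with $\boldsymbol{s}$ on all multisets but $C_k$. The backbone of the argument is that such a $\boldsymbol{v}$ would be forced to share the weight profile $\boldsymbol{\sigma}$ of $\boldsymbol{s}$, after which Lemma~\ref{lem::comp_mismatch} rules out agreement on all-but-one multiset.

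First I would establish $\boldsymbol{\sigma}_{\boldsymbol{s}}=\boldsymbol{\sigma}_{\boldsymbol{v}}$. By \eqref{eq::w1}--\eqref{eq::cum_wts} the sequence $\boldsymbol{\sigma}$ is determined by the cumulative weights $w_1,\ldots,w_{\ceil{n/2}}$, so it is enough that these agree. For each index $j\ne k$ the multiset $C_j$ is shared, giving $w_j(\boldsymbol{s})=w_j(\boldsymbol{v})$ directly. For $j=k$, provided $k\ne n-k+1$, the symmetry relation \eqref{eq::wt_sym} recovers the absent weight from its intact partner, $w_k(\boldsymbol{s})=w_{n-k+1}(\boldsymbol{s})=w_{n-k+1}(\boldsymbol{v})=w_k(\boldsymbol{v})$. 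Thus all cumulative weights coincide and the two strings share $\boldsymbol{\sigma}$.

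With a common $\boldsymbol{\sigma}$ in hand, I would check the remaining hypothesis of Lemma~\ref{lem::comp_mismatch}, namely $|C_j(\boldsymbol{s})\setminus C_j(\boldsymbol{v})|\le 2$ for all $j$: this difference is $0$ for every $j\ne k$, and for $j=k$ it is confined by the long common prefix--suffix that the agreeing multisets already impose. Since $\boldsymbol{s}\ne\boldsymbol{v}$, let $i$ denote the length of their longest shared prefix--suffix pair (at least $1$, as $s_1=v_1=0$ and $s_n=v_n=1$). Lemma~\ref{lem::comp_mismatch} then forces the two consecutive multisets $C_{n-i-1}$ and $C_{n-i-2}$ to each differ in at least two compositions. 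But only $C_k$ is permitted to differ, and two distinct indices cannot both equal $k$; this contradiction yields $\boldsymbol{s}=\boldsymbol{v}$, proving unique recovery.

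The step I expect to be the real obstacle is the self-symmetric case $k=n-k+1$, which occurs only for odd $n$ at the central length $k=\ceil{n/2}$. Here the deleted multiset has no symmetric partner, so the second paragraph does not supply $w_k$, and the weights alone fix $\boldsymbol{\sigma}$ only up to the split between $\sigma_{\ceil{n/2}-1}$ and $\sigma_{\ceil{n/2}}$. To settle this case I would first peel the outer bit-pairs using solely the surviving multisets $C_1,\ldots,C_{k-1},C_{k+1},\ldots,C_n$, which determine every bit except a short central block, and then resolve that block from the neighbouring surviving multisets $C_{k-1}$ and $C_{k+1}$ together with the $s_1=0$ and Catalan--Bertrand constraints of \eqref{eq::sr}, which break the residual symmetry. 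This reduces the central case to the same contradiction and completes the argument.
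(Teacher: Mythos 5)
Your handling of the non-central case ($k\neq n-k+1$) is sound and is in substance the paper's own argument: every cumulative weight, including that of the deleted multiset, is recovered via the symmetry $w_k(\boldsymbol{s})=w_{n-k+1}(\boldsymbol{s})$, so any competitor $\boldsymbol{v}$ must share $\boldsymbol{\sigma}_{\boldsymbol{s}}$, and Lemma~\ref{lem::comp_mismatch} then forces two multisets of \emph{consecutive} lengths to each differ in at least two compositions, of which at most one can be the deleted $C_k$. (One small caveat: the lemma's hypothesis $|C_j(\boldsymbol{s})\setminus C_j(\boldsymbol{v})|\le 2$ is only automatic for $j\neq k$; as in the paper's usage, you should say explicitly that candidates violating it at $j=k$ are a fortiori incompatible, so it suffices to rule out the minimal-difference ones.)

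The genuine gap is in the case you yourself identify as the obstacle: odd $n$ with $k=\ceil{n/2}$. You claim that peeling the outer pairs and then consulting $C_{k-1}$, $C_{k+1}$ and the Catalan--Bertrand constraint "reduces the central case to the same contradiction," but the same contradiction is not available there: Lemma~\ref{lem::comp_mismatch} presupposes $\boldsymbol{\sigma}_{\boldsymbol{s}}=\boldsymbol{\sigma}_{\boldsymbol{v}}$, whereas deleting the self-symmetric multiset $C_{\ceil{n/2}}$ pins down only the sum $\sigma_{\ceil{n/2}-1}+\sigma_{\ceil{n/2}}$, so a competitor may realize a \emph{different} split of that sum (e.g.\ $(2,0)$ versus $(1,1)$) and hence a different $\boldsymbol{\sigma}$ sequence. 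Ruling this out is the actual content of the proof in this case: the paper parameterizes the central three bits of $\boldsymbol{s}$ and $\boldsymbol{v}$ as in Fig.~\ref{fig::f1stog}, writes down the set equality that $C_{\ceil{n/2}-1}(\boldsymbol{s})=C_{\ceil{n/2}-1}(\boldsymbol{v})$ would impose, and checks exhaustively over $(b,v_+)\in\{0,1\}^2$ (using the prefix--suffix weight mismatch \eqref{ineq::imp}) that it can never hold. Your proposal gestures at the right ingredients but does not supply this verification or any equivalent of it, so the central case remains unproved as written.
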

\begin{proof}
	\begin{mycases}
		\case
		\emph{$n$ is even}
		
		From the steps of the reconstruction algorithm as described in Section \ref{subsec::rec}, it is evident that we only require the composition multisets $C_n(\boldsymbol{s}),\ldots, C_{\frac{n}{2}}(\boldsymbol{s})$. Hence, if $k<\frac{n}{2}$, the reconstruction of $\boldsymbol{s}$ is straightforward. On the contrary, if $k\geq \frac{n}{2}$, one can still infer the cumulative weight of the missing multiset $C_k(\boldsymbol{s})$ from (\ref{eq::wt_sym}). 
		Consequently, $\boldsymbol{\sigma}_{s}$ can be obtained accurately. 
		
		In the absence of $C_k(\boldsymbol{s})$, the prefix and suffix can be constructed upto $\boldsymbol{s}_1^{n-k-1}$ and $\boldsymbol{s}_{k+2}^{n}$. When $\sigma_{k}=\sigma_{n-k+1} \in \{0,2\}$, there remains no ambiguity concerning the bits $s_{n-k}$ and $s_{k+1}$. However, when $\sigma_{k}=1$, one can either have $(s_{n-k}, s_{k+1})=(0,1)$ or $(s_{n-k}, s_{k+1})=(1,0)$ if both of these possibilities guarantee weight mismatch between $\boldsymbol{s}_1^{n-k}$ and $\boldsymbol{s}_{k+1}^{n}$. Now since $\boldsymbol{s} \in \mathcal{S}_R(n)$, Lemma \ref{lem::comp_mismatch} tells us that choosing the bits $s_{n-k}$ and $s_{k+1}$ incorrectly, will lead to an incompatibility with the multiset $C_{k-1}(\boldsymbol{s})$. Thus there exists only one valid choice for these bits, implying that $\boldsymbol{s}$ is uniquely recoverable.		
		\case\emph{$n$ is odd}
		
		Similar to the previous case, it can be argued that for any missing composition multiset $C_k(\boldsymbol{s})$, where $k \neq \ceil{\frac{n}{2}}$, $\boldsymbol{s}$ can be easily and uniquely determined. The more interesting case occurs when $k=\ceil{\frac{n}{2}}$, since the absence of $C_{\ceil{\frac{n}{2}}}(\boldsymbol{s})$, and thus $w_{\ceil{\frac{n}{2}}}(\boldsymbol{s})$, prevents us from computing $\sigma_{\ceil{\frac{n}{2}}-1}$ and $\sigma_{\ceil{\frac{n}{2}}}$. However, their sum is known from (\ref{eq::w1}), i.e. 
		\begin{equation}
		\sigma_{\ceil{\frac{n}{2}}-1}+\sigma_{\ceil{\frac{n}{2}}}=w_1(\boldsymbol{s})-\sum_{i=1}^{\ceil{\frac{n}{2}}-2} \sigma_i.
		\end{equation}
		Since $\sigma_{\ceil{\frac{n}{2}}-1}=\mathrm{wt}(s_{\ceil{\frac{n}{2}}-1}s_{\ceil{\frac{n}{2}}+1}) \in \{0,1,2\}$ and $\sigma_{\ceil{\frac{n}{2}}}=\mathrm{wt}(s_{\ceil{\frac{n}{2}}}) \in \{0,1\}$, these values can be inferred directly when $\sigma_{\ceil{\frac{n}{2}}-1}+\sigma_{\ceil{\frac{n}{2}}} \in \{0,3\}$. However, an ambiguity arises when $\sigma_{\ceil{\frac{n}{2}}-1}+\sigma_{\ceil{\frac{n}{2}}} \in \{1,2\}$.
		
		Let $\boldsymbol{v} \in \mathcal{S}_R(n)$ be a string with which $\boldsymbol{s}$ becomes equicomposable when the multiset $C_{\ceil{n/2}}$ is deleted, i.e.,
		\begin{equation}
		\bigcup_{i\in [n] \backslash \{\ceil{\frac{n}{2}}\}} C_i(\boldsymbol{s})=\bigcup_{i\in [n] \backslash \{\ceil{\frac{n}{2}}\}} C_i(\boldsymbol{v}). \label{eq::svb}
		\end{equation}
		Also, let $\boldsymbol{v}$ be specified by $\boldsymbol{\sigma}_{\boldsymbol{v}}=(\sigma'_1, \ldots, \sigma'_{\ceil{n/2}})$. As a consequence of (\ref{eq::svb}), we can write:
		\begin{equation}
			\begin{split}
				\sigma_i&=\sigma'_i, \quad \forall \;\; 1\leq i \leq \ceil{\frac{n}{2}}-2  \\
				\sigma_{\ceil{\frac{n}{2}}-1}+\sigma_{\ceil{\frac{n}{2}}}&=\sigma'_{\ceil{\frac{n}{2}}-1}+\sigma'_{\ceil{\frac{n}{2}}}. 
			\end{split} \label{eq::svb2}
		\end{equation} 
		To verify whether the reconstructability of $\boldsymbol{s}$ is affected, we simply check if there exists a suitable $\boldsymbol{v}$ that satisfies (\ref{eq::svb}) and (\ref{eq::svb2}). We also note that (\ref{eq::svb}) directly implies the equality of the prefix-suffix pairs $(\boldsymbol{s}_1^{\ceil{\frac{n}{2}}-2}, \boldsymbol{s}_{\ceil{\frac{n}{2}}+2}^n)=(\boldsymbol{v}_1^{\ceil{\frac{n}{2}}-2}, \boldsymbol{v}_{\ceil{\frac{n}{2}}+2}^n)$.

		\begin{figure}[!htb]
			\centering			
			\scalebox{0.8}{\begin{tikzpicture}[font=\ttfamily,
				array/.style={matrix of nodes,nodes={draw, minimum size=7mm, fill=green!30},column sep=-\pgflinewidth, row sep=0.5mm, nodes in empty cells,
					row 1/.style={nodes={draw=none, fill=none, minimum size=5mm}},
					row 1 column 1/.style={nodes={draw}}}]
				% string s
				\draw (0,0) rectangle (2,0.6) node[pos=.5] {$\boldsymbol{s}_1^{\ceil{\frac{n}{2}}-2}$};
				
				\draw (2,0) rectangle (3,0.6) node[pos=0.5] {$1-b$};
				\draw (3,0) rectangle (4,0.6) node[pos=0.5] {$b$};
				\draw (4,0) rectangle (5,0.6) node[pos=0.5] {$1-b$};					
				
				\draw (5,0) rectangle (7,0.6) node[pos=0.5] {$\boldsymbol{s}_{\ceil{\frac{n}{2}}+2}^n$};
				
				%string v
				\draw (0,-0.4) rectangle (2,-1) node[pos=0.5] {$\boldsymbol{v}_1^{\ceil{\frac{n}{2}}-2}$};
				
				\draw (2,-0.4) rectangle (3,-1) node[pos=0.6] {$v_{+}$};
				\draw (3,-0.4) rectangle (4,-1) node[pos=0.5] {$1-b$};
				\draw (4,-0.4) rectangle (5,-1) node[pos=0.6] {$v_{-}$};
				\draw (5,-0.4) rectangle (7,-1) node[pos=0.5] {$\boldsymbol{v}_{\ceil{\frac{n}{2}}+2}^n$};
				\end{tikzpicture}}
			\caption{Strings $\boldsymbol{s}$ and $\boldsymbol{v}$ are such that $(\boldsymbol{s}_1^{\ceil{\frac{n}{2}}-2}, \boldsymbol{s}_{\ceil{\frac{n}{2}}+2}^n)=(\boldsymbol{v}_1^{\ceil{\frac{n}{2}}-2}, \boldsymbol{v}_{\ceil{\frac{n}{2}}+2}^n)$, where $v_{+}=1-v_{-}$.}
			\label{fig::f1stog}
		\end{figure}

		We jointly depict the specific subcases in Fig. \ref{fig::f1stog}, wherein we allow for $\sigma_{\ceil{\frac{n}{2}}-1}+\sigma_{\ceil{\frac{n}{2}}} \in \{1,2\}$
		since for both $\boldsymbol{s}$ and  $\boldsymbol{v}$, we have: 
		\begin{equation}
		\sigma_{\ceil{\frac{n}{2}}-1}+\sigma_{\ceil{\frac{n}{2}}}=2-b. \nonumber
		\end{equation}
		where $b \in \mathbb{F}_2$. To proceed with the proof, we try to determine the conditions under which $C_{\ceil{\frac{n}{2}}-1}(\boldsymbol{s})=C_{\ceil{\frac{n}{2}}-1}(\boldsymbol{v})$ holds. This would require the following set equality:
		\begin{equation}
		\left\{
		\begin{split}
		&\{c(\boldsymbol{s}_1^{\ceil{\frac{n}{2}}-2}), 1-b\}\\
		&\{c(\boldsymbol{s}_2^{\ceil{\frac{n}{2}}-2}), b,1-b\}\\
		&\{c(\boldsymbol{s}_{\ceil{\frac{n}{2}}+2}^n), 1-b\}\\
		&\{c(\boldsymbol{s}_{\ceil{\frac{n}{2}}+2}^{n-1}), b,1-b\}
		\end{split}\right\}=
		\left\{
		\begin{split}
		&\{c(\boldsymbol{v}_1^{\ceil{\frac{n}{2}}-2}), v_+\}\\
		&\{c(\boldsymbol{v}_2^{\ceil{\frac{n}{2}}-2}),v_+, 1-b\}\\
		&\{c(\boldsymbol{v}_{\ceil{\frac{n}{2}}+2}^n), 1-v_+\}\\
		&\{c(\boldsymbol{v}_{\ceil{\frac{n}{2}}+2}^{n-1}), 1-v_+,1-b\}.
		\end{split}\right\}. \nonumber
		\end{equation}
		By checking the above relation exhaustively for all possibilities of $(b, v_+) \in \{0,1\}^2$, we conclude that the multisets $C_{\ceil{n/2}-1}(\boldsymbol{s})$ and $C_{\ceil{n/2}-1}(\boldsymbol{s})$ can never match. Therefore, $\boldsymbol{v}$ does not exist and $\boldsymbol{s}$ retains its unique reconstructability.
	\end{mycases}
\end{proof}
It follows directly from the preceding lemma that
\begin{lemma}
	The code $\mathcal{S}_R(n)$ is a single asymmetric multiset deletion composition code. \label{lem::del_set1_new}
\end{lemma}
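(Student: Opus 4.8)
The plan is to argue by contradiction and invoke Lemma~\ref{lem::del_set1} directly, since that lemma already does the heavy lifting. First I would spell out what it means for $\mathcal{S}_R(n)$ to \emph{fail} to be a single asymmetric multiset deletion composition code: by the definition of $\mathcal{S}_{DA}^{(t)}$ with $t=1$, failure would require distinct strings $\boldsymbol{s},\boldsymbol{v}\in\mathcal{S}_R(n)$ together with an index $k\in[n]$ satisfying $C_k(\boldsymbol{s})\neq C_k(\boldsymbol{v})$, $C_{n-k+1}(\boldsymbol{s})= C_{n-k+1}(\boldsymbol{v})$, and $C_j(\boldsymbol{s})= C_j(\boldsymbol{v})$ for every $j\in[n]\setminus\{k\}$. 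I would assume such a configuration exists and aim for a contradiction.

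The second step is to observe that the agreement on all lengths $j\neq k$ forces the two corrupted multisets to coincide. Setting $C'(\boldsymbol{s})=\bigcup_{i\in[n]\setminus\{k\}}C_i(\boldsymbol{s})$ and defining $C'(\boldsymbol{v})$ analogously, the equalities $C_j(\boldsymbol{s})=C_j(\boldsymbol{v})$ for all $j\neq k$ give $C'(\boldsymbol{s})=C'(\boldsymbol{v})$. That is, deleting the length-$k$ multiset from both strings leaves exactly the same composition information, so $\boldsymbol{s}$ and $\boldsymbol{v}$ become equicomposable under that single asymmetric deletion.

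The third step closes the argument. Lemma~\ref{lem::del_set1} asserts that for \emph{any} $k\in[n]$ the originating string is fully, hence uniquely, recoverable from $C'$. Applying it once to $\boldsymbol{s}$ and once to $\boldsymbol{v}$, and using $C'(\boldsymbol{s})=C'(\boldsymbol{v})$, I conclude that both recovery procedures return the same string, so $\boldsymbol{s}=\boldsymbol{v}$. This contradicts the requirement $C_k(\boldsymbol{s})\neq C_k(\boldsymbol{v})$, which can hold only if $\boldsymbol{s}\neq\boldsymbol{v}$. Therefore no confusable pair exists and $\mathcal{S}_R(n)$ is a single asymmetric multiset deletion composition code.

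The only point that needs care—and the place where I would expect to add an explicit justification—is that the decoder must know which index $k$ was deleted in order to invoke Lemma~\ref{lem::del_set1} with the correct $k$. This is immediate: each composition $0^z1^w$ reveals the substring length $z+w$, so the surviving compositions partition by length, and the deleted length is the unique one whose count drops below the expected $n-k+1\geq 1$ (here, to zero). Beyond this bookkeeping the statement is a genuine corollary, with all the substance—including the delicate central case $k=\lceil n/2\rceil$ for odd $n$—already absorbed into Lemma~\ref{lem::del_set1}.
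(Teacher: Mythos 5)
Your proposal is correct and matches the paper's treatment: the paper also derives this lemma as an immediate corollary of Lemma~\ref{lem::del_set1}, which it introduces with the phrase ``it follows directly from the preceding lemma.'' Your explicit unfolding of the definition and the remark that the deleted length $k$ is identifiable from the surviving compositions are just the bookkeeping the paper leaves implicit.
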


As a second step, $\mathcal{S}_R(n)$ is now generalized to $\mathcal{S}_{DA}^{(t)}(n)$ [see Construction 5] to allow correcting the deletion of $t$ asymmetric multisets. To prove why this construction works, we first consider the following lemma.
\begin{lemma}
	Let $\boldsymbol{s}, \boldsymbol{v} \in \mathcal{S}^{(t)}_{DA}(n)$ be specified by an identical $\boldsymbol{\sigma}$ sequence, such that the longest prefix-suffix pair shared by them is of length $k$. Then their corresponding multisets $C_{n-i-1}, \ldots, C_{n-i-t-1}$ differ by at least two compositions. \label{lem::t_dels_sets}
\end{lemma}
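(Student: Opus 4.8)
The plan is to establish, for each length $\ell\in\{n-k-1,\dots,n-k-t-1\}$, that $C_\ell(\boldsymbol{s})\neq C_\ell(\boldsymbol{v})$, and to upgrade this to the claimed quantitative bound essentially for free. Since $\boldsymbol{s}$ and $\boldsymbol{v}$ share $\boldsymbol{\sigma}$, the multisets $C_\ell(\boldsymbol{s})$ and $C_\ell(\boldsymbol{v})$ have equal cardinality $n-\ell+1$ and equal cumulative weight $w_\ell$; two equal-size multisets of fixed-length compositions whose $1$-counts sum to the same value can never differ in exactly one element, so $C_\ell(\boldsymbol{s})\neq C_\ell(\boldsymbol{v})$ already forces a discrepancy of \emph{at least two} compositions. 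To set up the non-equality test, I would first pin down the forced structure: both strings have $s_1=v_1=0$, $s_n=v_n=1$, and since the longest common prefix--suffix pair has length $k$ while $\sigma_{k+1}$ is shared, the first disagreement is a genuine $0/1$ swap, so up to exchanging $\boldsymbol{s}$ and $\boldsymbol{v}$ we may take $s_{k+1}=0,\ s_{n-k}=1$ and $v_{k+1}=1,\ v_{n-k}=0$. Writing $p^{\boldsymbol{s}}_j=\mathrm{wt}(\boldsymbol{s}_1^j)$ and $\delta_j=p^{\boldsymbol{v}}_j-p^{\boldsymbol{s}}_j$, the $1$-count of the length-$\ell$ window at index $a+1$ is $W^{\boldsymbol{s}}_a=p^{\boldsymbol{s}}_{a+\ell}-p^{\boldsymbol{s}}_a$, so $C_\ell(\boldsymbol{s})=\{W^{\boldsymbol{s}}_a\}_{a=0}^{n-\ell}$ and $W^{\boldsymbol{v}}_a-W^{\boldsymbol{s}}_a=\delta_{a+\ell}-\delta_a$. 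The shared $\boldsymbol{\sigma}$ makes $\delta$ symmetric, $\delta_j=\delta_{n-j}$, while agreement on the outer $k$ bits together with the swap give $\delta_j=0$ for $j\le k$ and $j\ge n-k$, and $\delta_{k+1}=\delta_{n-k-1}=1$.

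For $\ell=n-k-1$ the boundary behaviour of $\delta$ forces only the windows $a=0$ and $a=k+1$ to shift, namely the length-$\ell$ prefix (weight $A$) and the length-$\ell$ suffix (weight $B$), by $+1$ and $-1$; hence the two multisets agree if and only if $B=A+1$. Here I would invoke the margin-$t$ constraint of Construction~5: a short computation gives $B-A=\sum_{i=1}^{k+1}(s_{n+1-i}-s_i)$, which is precisely the ($0$-minus-$1$) balance of the embedded string of (\ref{eq::srpt}) read up to position $k+1$. Since every prefix of that string carries a surplus of at least $t$ zeros and the final symbol $s_{k+1}=0$ contributes one more, $B-A\ge t+1\ge 2$, so $B\neq A+1$ and $C_{n-k-1}(\boldsymbol{s})\neq C_{n-k-1}(\boldsymbol{v})$; the case $t=1$ recovers Lemma~\ref{lem::comp_mismatch}.

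For $\ell=n-k-1-r$ with $1\le r\le t$ the same bookkeeping isolates the $2(r+1)$ shifted windows $a\in\{0,\dots,r\}$ and $a\in\{k+1,\dots,k+1+r\}$, carrying the antisymmetric shifts $+\delta_{k+1+r},\dots,+\delta_{k+1}$ and $-\delta_{k+1},\dots,-\delta_{k+1+r}$ (using $\delta_j=\delta_{n-j}$). The goal is to show these moves cannot permute $\{W^{\boldsymbol{s}}_a\}$ onto itself. I would track the discrepancy through the generating polynomial $\sum_a x^{W^{\boldsymbol{s}}_a}-\sum_a x^{W^{\boldsymbol{v}}_a}=\sum_a x^{W^{\boldsymbol{s}}_a}\bigl(1-x^{\delta_{a+\ell}-\delta_a}\bigr)$ and argue that its extreme (lowest- or highest-degree) surviving term is nonzero, rewriting each $\delta_{k+1+j}$ as the half-difference of the margin-$t$ balances of the two embedded strings at the position reached by index $k+1+j$.

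The hard part is exactly this $r\ge 1$ regime: unlike $\ell=n-k-1$, the shifted prefix windows now penetrate the core $[k+1,n-k]$ where $\boldsymbol{s}$ and $\boldsymbol{v}$ may differ freely, so the shift magnitudes are no longer pinned to $1$ and several windows move simultaneously, and the crux is to rule out a coincidental rearrangement of the multiset of window weights. I expect the decisive input to be that both embedded strings obey the margin-$t$ bound while first disagreeing at a $0/1$ swap, which forces a balance gap of $2$ at position $k+1$ that cannot be absorbed within the first $t+1$ window lengths, thereby precluding cancellation. I would finally dispose separately of the degenerate situations where $k$ is small enough that the two index ranges overlap or a window exits the core, checking those cases directly.
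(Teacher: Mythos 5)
Your opening reduction is a nice observation the paper does not make explicit: since $\boldsymbol{s}$ and $\boldsymbol{v}$ share $\boldsymbol{\sigma}$, every pair $C_\ell(\boldsymbol{s}),C_\ell(\boldsymbol{v})$ has equal cardinality and equal cumulative weight, so they can never differ in exactly one composition and it suffices to prove $C_\ell(\boldsymbol{s})\neq C_\ell(\boldsymbol{v})$. Your window/$\delta$ bookkeeping is also sound, and your treatment of $\ell=n-k-1$ is essentially complete: only the two extreme windows shift, by $\pm 1$, and equality would force $B=A+1$, which the margin-$t$ property of Construction~5 rules out (indeed, since $\boldsymbol{v}\in\mathcal{S}^{(t)}_{DA}(n)$ must also respect the margin at position $k+1$ with $v_{k+1}=1$, one actually gets $B-A\geq t+2$, matching the inequality $\mathrm{wt}(\boldsymbol{v}_1^{k})+t+2\leq\mathrm{wt}(\boldsymbol{v}_{n-k+1}^{n})$ used in the paper).

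However, there is a genuine gap: for $\ell=n-k-1-r$ with $1\leq r\leq t$ you only set up the problem and then write that you ``would track the discrepancy through the generating polynomial'' and ``expect the decisive input to be'' the margin-$t$ bound. These $t$ multisets are the entire content of the lemma beyond the already-known single-deletion case (Lemma~\ref{lem::comp_mismatch}), and the difficulty you correctly identify --- that the shifts $\delta_{k+1},\ldots,\delta_{k+1+r}$ are no longer pinned to $\pm1$, that $2(r+1)$ windows move simultaneously, and that a shifted window's new weight may coincide with an unshifted window's weight, so the multiset equality does not localize to the shifted indices --- is never resolved. The paper's proof closes exactly this hole by a different route: it argues that among all $\boldsymbol{v}$ consistent with the hypotheses, the differences $|C_{n-k-j}(\boldsymbol{s})\setminus C_{n-k-j}(\boldsymbol{v})|$ are minimized by a specific structured family (forcing $\sigma_{k+2}=1$, $(s_+,v_+)=(1,0)$, and the agreement \eqref{eq::csv_eq2}), and for that family the full set equality \eqref{eq::set_eq1a} collapses to the single condition $c(\boldsymbol{s}_j^{k})=c(\boldsymbol{s}_{n-k+1}^{n-j+1})$, which is contradicted by the weight gap of $t+2$ between the length-$k$ prefix and suffix for every $j\in[t+1]$. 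Without either carrying out your extreme-term cancellation argument or an analogous worst-case reduction, the claim for $C_{n-k-2},\ldots,C_{n-k-t-1}$ remains unproved.
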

\begin{proof}
	Since $\boldsymbol{s}$ and $\boldsymbol{v}$ bear the same $\boldsymbol{\sigma}$ sequence and their prefix-suffix pair of length $k+1$ do not match, we conclude that $\sigma_{k+1}=1$ and $s_{k+1} \neq v_{k+1}$. Without loss of generality, we assume $s_{k+1}=0$ and it becomes obvious that $|C_{n-k-1}(\boldsymbol{s}) \backslash C_{n-k-1}(\boldsymbol{v})|=2$.

	\begin{figure}[!htb]
		\centering			
		\scalebox{0.8}{\begin{tikzpicture}[font=\ttfamily,
			array/.style={matrix of nodes,nodes={draw, minimum size=7mm, fill=green!30},column sep=-\pgflinewidth, row sep=0.5mm, nodes in empty cells,
				row 1/.style={nodes={draw=none, fill=none, minimum size=5mm}},
				row 1 column 1/.style={nodes={draw}}}]
			% string s
			\draw (0,0) rectangle (2,0.6) node[pos=.5] {$\boldsymbol{s}_1^{k}$};
			\draw (2,0) rectangle (2.5,0.6) node[pos=0.5] {$0$};
			\draw (2.5,0) rectangle (3,0.6) node[pos=0.5] {$s_+$};
			\draw (3,0) rectangle (4.5,0.6) node[pos=0.5] {$\boldsymbol{s}_{k+3}^{n-k-2}$};
			\draw (4.5,0) rectangle (5,0.6) node[pos=0.5] {$s_-$};
			\draw (5,0) rectangle (5.5,0.6) node[pos=0.5] {$1$};
			\draw (5.5,0) rectangle (7.5,0.6) node[pos=0.5] {$\boldsymbol{s}_{n-k+1}^n$};
			
			%string v
			\draw (0,-0.2) rectangle (2,-0.8) node[pos=0.5] {$\boldsymbol{v}_1^{k}$};
			\draw (2,-0.2) rectangle (2.5,-0.8) node[pos=0.6] {$1$};
			\draw (2.5,-0.2) rectangle (3,-0.8) node[pos=0.6] {$v_+$};
			\draw (3,-0.2) rectangle (4.5,-0.8) node[pos=0.6] {$\boldsymbol{v}_{k+3}^{n-k-2}$};
			\draw (4.5,-0.2) rectangle (5,-0.8) node[pos=0.6] {$v_-$};
			\draw (5,-0.2) rectangle (5.5,-0.8) node[pos=0.6] {$0$};
			\draw (5.5,-0.2) rectangle (7.5,-0.8) node[pos=0.5] {$\boldsymbol{v}_{n-k+1}^n$};
			\end{tikzpicture}}
		\caption{Strings $\boldsymbol{s}$ and $\boldsymbol{v}$ are related such that $(\boldsymbol{s}_1^{k}, \boldsymbol{s}_{n-k+1}^n)=(\boldsymbol{v}_1^{k}, \boldsymbol{v}_{n-k+1}^n)$ and $c(\boldsymbol{s}_{k+2}^{n-k-1})=c(\boldsymbol{v}_{k+2}^{n-k-1})$}
		\label{fig::mult_setst}
	\end{figure}

	As for the remaining multisets, we undertake the approach used in \cite[Lemma 4]{pattabiraman}, i.e., we design a set of strings $\mathcal{V}_{\boldsymbol{s}}$, such that for each $\boldsymbol{v} \in \mathcal{V}_{\boldsymbol{s}}$, $\boldsymbol{s}$ and $\boldsymbol{v}$ are specified by the same $\boldsymbol{\sigma}$ sequence, and satisfy:
	\begin{eqnarray}
	(\boldsymbol{s}_1^k, \boldsymbol{s}_{n-k+1}^n)&=&(\boldsymbol{v}_1^k, \boldsymbol{v}_{n-k+1}^n), \nonumber \\
	c(\boldsymbol{s}_{t+2}^{n-t-1})&=& c(\boldsymbol{v}_{t+2}^{n-t-1}), \label{eq::csv_eq} \\
	|C_{n-k-j}(\boldsymbol{s}) \backslash C_{n-k-j}(\boldsymbol{v})| &\leq& 2, \quad \quad \forall \;j \in [t+1]. \nonumber
	\end{eqnarray}
	Equation (\ref{eq::csv_eq}) follows directly from the premise of a common $\boldsymbol{\sigma}$ sequence.
	Similar to \cite[Lemma 4]{pattabiraman}, we note that $|C_{n-k-2}(\boldsymbol{s}) \backslash C_{n-k-2}(\boldsymbol{v})|$ is minimized when $\sigma_{k+2}=1$ and $(s_+,v_+)=(1,0)$,  thereby leading to $|C_{n-k-2}(\boldsymbol{s}) \backslash C_{n-k-2}(\boldsymbol{v})|=2$. Now, if an additional condition is upheld:
	\begin{eqnarray}
	(\boldsymbol{s}_{k+3}^{t+1}, \boldsymbol{s}_{n-t}^{n-k-2})&=&(\boldsymbol{v}_{k+3}^{t+1}, \boldsymbol{v}_{n-t}^{n-k-2}). \label{eq::csv_eq2}
	\end{eqnarray}
	we can show that $|C_{n-k-j}(\boldsymbol{s}) \backslash C_{n-k-j}(\boldsymbol{v})|=2$ for any $j \in [t+1]$, by examining the following set equality:
	\begin{eqnarray}
	&&\left\{\begin{split}
	&\{c(\boldsymbol{s}_1^{k}),01, c(\boldsymbol{s}_{k+3}^{n-k-j})\}\\
	&\{c(\boldsymbol{s}_2^{k}),01,  c(\boldsymbol{s}_{k+3}^{n-k-j+1})\}\\
	&\qquad \qquad \qquad \quad\vdots\\
	&\{c(\boldsymbol{s}_{j-1}^{k}),01,c(\boldsymbol{s}_{k+3}^{n-k-2})\}\\
	&\{c(\boldsymbol{s}_{j}^{k}),0^21,c(\boldsymbol{s}_{k+3}^{n-k-2})\}\\
	&\{c(\boldsymbol{s}_{n-k+1}^{n}),01, c(\boldsymbol{s}_{k+j+1}^{n-k-2})\}\\
	&\{c(\boldsymbol{s}_{n-k+1}^{n-1}),01, c(\boldsymbol{s}_{k+j}^{n-k-2})\}\\
	&\qquad \qquad \qquad \quad\vdots\\
	&\{c(\boldsymbol{s}_{n-k+1}^{n-j+2}),01, c(\boldsymbol{v}_{k+3}^{n-k-2})\}\\
	&\{c(\boldsymbol{s}_{n-k+1}^{n-j+1}),01^2, c(\boldsymbol{v}_{k+3}^{n-k-2})\}
	\end{split} \right\} \qquad \qquad\nonumber\\
	&&\qquad\qquad= \left\{\begin{split}
	&\{c(\boldsymbol{v}_1^{k}),01, c(\boldsymbol{v}_{k+3}^{n-k-j})\}\\
	&\{c(\boldsymbol{v}_2^{k}),01,  c(\boldsymbol{v}_{k+3}^{n-k-j+1})\}\\
	&\qquad \qquad \qquad \quad\vdots\\
	&\{c(\boldsymbol{v}_{j-1}^{k}),01,c(\boldsymbol{v}_{k+3}^{n-k-2})\}\\
	&\{c(\boldsymbol{v}_{j}^{k}),01^2,c(\boldsymbol{v}_{k+3}^{n-k-2})\}\\
	&\{c(\boldsymbol{v}_{n-k+1}^{n}),01, c(\boldsymbol{v}_{k+j+1}^{n-k-2})\}\\
	&\{c(\boldsymbol{v}_{n-k+1}^{n-1}),01, c(\boldsymbol{v}_{k+j}^{n-k-2})\}\\
	&\qquad \qquad \qquad \quad\vdots\\
	&\{c(\boldsymbol{v}_{n-k+1}^{n-j+2}),01, c(\boldsymbol{v}_{k+3}^{n-k-2})\}\\
	&\{c(\boldsymbol{v}_{n-k+1}^{n-j+1}),0^21, c(\boldsymbol{v}_{k+3}^{n-k-2})\}
	\end{split} \right\}. \label{eq::set_eq1a}
	\end{eqnarray}
	By exploiting (\ref{eq::csv_eq}) and (\ref{eq::csv_eq2}), one can simplify this to:
	\begin{eqnarray}
	&&\left\{\begin{split}
	&\{c(\boldsymbol{s}_{j}^{k}),0^21,c(\boldsymbol{s}_{k+3}^{n-k-2})\}\\
	&\{c(\boldsymbol{s}_{n-k+1}^{n-j+1}),01^2, c(\boldsymbol{s}_{k+3}^{n-k-2})\}
	\end{split} \right\} \qquad \qquad\nonumber\\
	&&\qquad\qquad= \left\{\begin{split}
	&\{c(\boldsymbol{s}_{j}^{k}),01^2,c(\boldsymbol{s}_{k+3}^{n-k-2})\}\\
	&\{c(\boldsymbol{s}_{n-k+1}^{n-j+1}),0^21, c(\boldsymbol{s}_{k+3}^{n-k-2})\}
	\end{split} \right\}. \nonumber
	\end{eqnarray}
	Upon combining (\ref{eq::csv_eq}) and (\ref{eq::csv_eq2}), further reduction is possible: 
	\begin{eqnarray}
	\left\{\begin{split}
	&\{c(\boldsymbol{s}_{j}^{k}),0^21\}\\
	&\{c(\boldsymbol{s}_{n-k+1}^{n-j+1}),01^2\}
	\end{split} \right\} &=& \left\{\begin{split}
	&\{c(\boldsymbol{s}_{j}^{k}),01^2\}\\
	&\{c(\boldsymbol{s}_{n-k+1}^{n-j+1}),0^21\}
	\end{split} \right\}. \label{eq::csvps}
	\end{eqnarray}
	We note that the preceding equality only holds if: $$c(\boldsymbol{s}_j^k)=c(\boldsymbol{s}_{n-k+1}^{n-j+1})$$
	However from Fig. \ref{fig::mult_setst} and the definition of $\mathcal{S}^{*(t)}_R(n)$ in (\ref{eq::srpt}), we observe that:
	\begin{eqnarray}
	\mathrm{wt}(\boldsymbol{v}_2^{k+1})+t \leq \mathrm{wt}(\boldsymbol{v}_{n-k+1}^{n-1}) \nonumber\\
	\implies \mathrm{wt}(\boldsymbol{v}_1^{k})+t+2 \leq \mathrm{wt}(\boldsymbol{v}_{n-k+1}^{n}). \nonumber
	\end{eqnarray}
	This inequality allows us to conclude that (\ref{eq::csvps}) never holds for any $j \in [t+1]$, consequently proving the statement of this lemma.
\end{proof}
The preceding lemma now helps us establish that the code $\mathcal{S}^{(t)}_{DA}(n)$ is robust to the deletion of any $t$ asymmetric multisets. 
\begin{theorem}
	Given the composition multisets $C_i(\boldsymbol{s})$ for $i~\in~[n]\backslash \{i_1, \ldots i_t\}$, where $\boldsymbol{s} \in \mathcal{S}^{(t)}_{DA}(n)$ [see Construction~5], such that no two of the deleted multisets are mutually symmetric, $\boldsymbol{s}$ can be uniquely recovered. \label{lem::t_dels}
\end{theorem}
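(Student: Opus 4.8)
The plan is to argue by contradiction, reducing the statement to the combinatorial count furnished by Lemma~\ref{lem::t_dels_sets}. Suppose two distinct codewords $\boldsymbol{s}, \boldsymbol{v} \in \mathcal{S}^{(t)}_{DA}(n)$ became indistinguishable after the deletion of the multisets $C_{i_1}, \ldots, C_{i_t}$, no two of which are symmetric. Since a deletion removes an entire multiset, the surviving data still exposes, for every length $j \notin \{i_1, \ldots, i_t\}$, the complete multiset; confusability therefore forces $C_j(\boldsymbol{s}) = C_j(\boldsymbol{v})$ for all such $j$, while the $t$ deleted lengths are the only places where the two strings are permitted to disagree.

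First I would recover the $\boldsymbol{\sigma}$ sequence and show it is shared by $\boldsymbol{s}$ and $\boldsymbol{v}$. Here the hypothesis that no two deleted multisets are symmetric is essential: for each deleted length $i_a$ its symmetric partner $n - i_a + 1$ survives, so $C_{n-i_a+1}(\boldsymbol{s}) = C_{n-i_a+1}(\boldsymbol{v})$ and hence $w_{n-i_a+1}(\boldsymbol{s}) = w_{n-i_a+1}(\boldsymbol{v})$; invoking the weight symmetry (\ref{eq::wt_sym}) recovers the missing cumulative weight $w_{i_a}$ as well. Thus all cumulative weights $w_1, \ldots, w_{\ceil{\frac{n}{2}}}$ agree for $\boldsymbol{s}$ and $\boldsymbol{v}$, and the triangular relations (\ref{eq::w1}) and (\ref{eq::cum_wts}) then force $\boldsymbol{\sigma}_{\boldsymbol{s}} = \boldsymbol{\sigma}_{\boldsymbol{v}}$.

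Next I would invoke Lemma~\ref{lem::t_dels_sets}. Let $k$ denote the length of the longest prefix--suffix pair shared by $\boldsymbol{s}$ and $\boldsymbol{v}$; as they are distinct yet share $\boldsymbol{\sigma}$, they first disagree at index $k+1$ with $\sigma_{k+1} = 1$ and $s_{k+1} \neq v_{k+1}$. The lemma then guarantees that the $t+1$ consecutive multisets $C_{n-k-1}, C_{n-k-2}, \ldots, C_{n-k-t-1}$ each differ in at least two compositions between $\boldsymbol{s}$ and $\boldsymbol{v}$. But we established above that $\boldsymbol{s}$ and $\boldsymbol{v}$ may differ in at most the $t$ multisets indexed by $\{i_1, \ldots, i_t\}$. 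Since $t+1 > t$, at least one of these differing multisets survives the deletion, contradicting $C_j(\boldsymbol{s}) = C_j(\boldsymbol{v})$ for surviving $j$. Hence no such $\boldsymbol{v}$ exists and $\boldsymbol{s}$ is uniquely recoverable; the odd-length case follows by extending the construction as in (\ref{eq:sr_odd}).

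The main obstacle does not lie in the theorem itself---once Lemma~\ref{lem::t_dels_sets} is available the argument is a clean counting contradiction---but in two points requiring care. The first is confirming that the $\boldsymbol{\sigma}$-recovery genuinely relies on the no-symmetric-pair hypothesis, since a deleted symmetric pair would destroy the corresponding cumulative weight and break the argument. The second is the boundary check that the $t+1$ lengths $n-k-1, \ldots, n-k-t-1$ are all valid, which holds because $k \leq \frac{n}{2}-1$ gives $n-k-t-1 \geq \frac{n}{2}-t \geq 1$ under the constraint $n \geq 2t+2$ inherent to Construction~5. The genuinely hard analytic work---verifying that the prefix property \emph{``each prefix has at least $t$ more $0$s than $1$s''} of Construction~5 prevents the set equality (\ref{eq::csvps}) from ever holding---has already been discharged within Lemma~\ref{lem::t_dels_sets}.
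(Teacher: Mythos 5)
Your proof is correct in substance and rests on the same key ingredient as the paper, namely Lemma~\ref{lem::t_dels_sets}, but you package the argument differently and, in my view, more cleanly. The paper splits into two cases (all deleted multisets consecutive versus not) and in the second case argues operationally about the decoder: it reconstructs outside-in, and whenever a run of $j$ consecutive missing multisets forces a guess of a bit pair, a wrong guess is caught by the first surviving multiset after the run, again by Lemma~\ref{lem::t_dels_sets}. Your single pigeonhole contradiction --- confusability permits at most $t$ differing multisets, while the lemma produces $t+1$ consecutive differing multisets starting at $C_{n-k-1}$, so one differing multiset necessarily survives the deletion --- subsumes both of the paper's cases at once and avoids describing the decoder at all. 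Your preliminary step recovering $\boldsymbol{\sigma}_{\boldsymbol{s}}=\boldsymbol{\sigma}_{\boldsymbol{v}}$ from the surviving symmetric partners via \eqref{eq::wt_sym} and the triangular relations \eqref{eq::w1}, \eqref{eq::cum_wts} is also a point the paper leaves implicit (it is established only for $t=1$ in Lemma~\ref{lem::del_set1}), and making it explicit is a genuine improvement, since Lemma~\ref{lem::t_dels_sets} is only applicable once a common $\boldsymbol{\sigma}$ is in hand.

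One caveat deserves attention. Your $\boldsymbol{\sigma}$-recovery asserts that \emph{every} deleted length $i_a$ has a surviving symmetric partner $n-i_a+1$. For odd $n$ the central index $i_a=\lceil n/2\rceil$ is its own symmetric partner, and deleting it is not excluded by the hypothesis that no \emph{two} deleted multisets are mutually symmetric; in that case $w_{\lceil n/2\rceil}$ is genuinely lost and only the sum $\sigma_{\lceil n/2\rceil-1}+\sigma_{\lceil n/2\rceil}$ is known, so $\boldsymbol{\sigma}_{\boldsymbol{s}}=\boldsymbol{\sigma}_{\boldsymbol{v}}$ does not follow from cumulative weights alone. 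This is exactly the situation resolved by the separate set-equality analysis in Case~2 of Lemma~\ref{lem::del_set1}, and your proof would need to import (and, strictly, extend to $t>1$) that argument rather than dismiss the odd case with a reference to \eqref{eq:sr_odd}. To be fair, the paper's own proof of this theorem is equally silent on this corner case, so this is a shared loose end rather than a defect specific to your approach.
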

\begin{proof}
	\begin{mycases}
		\case The deleted multisets are consecutive.
		This case is directly implied by Lemma \ref{lem::t_dels_sets}.
		\case All of the deleted multisets are not consecutive.\\
		Since the reconstruction algorithm functions in an outside-in manner, the missing multiset encountered first, corresponds to that of highest substring length. In the following analysis, we assume that $i_t >i_{t-1}> \ldots>i_1$.
		
		If $i_t=n$, we can directly infer $C_{n}(\boldsymbol{s})$ from the cumulative weight of $C_1(\boldsymbol{s})$. Alternatively when $i_t<n$ and additionally $i_t, \ldots, i_{t-j+1}$ are consecutive, the prefix-suffix pair $(\boldsymbol{s}_1^{n-i_t-1}, \boldsymbol{s}^n_{i_t+2})$ an incorrect assignment of the bit pair $(s_{n-i_t}, s_{i_t+1})$ will certainly cause an incompatibility with the multiset $C_{i_{t-j+1}-1}(\boldsymbol{s})=C_{i_{t}-j}(\boldsymbol{s})$, as Lemma \ref{lem::t_dels_sets} suggests. Thus, the backtracking algorithm can detect the mistake and accurately reconstruct the string upto $(\boldsymbol{s}_1^{n-i_t+j}, \boldsymbol{s}^n_{i_t-j+1})$. Absence of the other missing multisets $C_{i_{t-j}}, \ldots, C_{i_{1}}$ can be dealt with similarly.
	\end{mycases}
\end{proof}
The previous theorem implies the following.
\begin{theorem}
    $\mathcal{S}^{(t)}_{DA}(n)$ is a $t$-asymmetric multiset deletion composition code. \label{lem::t_dels_new}
\end{theorem}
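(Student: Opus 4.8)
The plan is to obtain Theorem~\ref{lem::t_dels_new} as a direct reformulation of the reconstruction guarantee already proved in Theorem~\ref{lem::t_dels}, arguing by contradiction. The two statements are the decoder-side and codebook-side phrasings of one fact: Theorem~\ref{lem::t_dels} asserts that any $\boldsymbol{s}\in\mathcal{S}^{(t)}_{DA}(n)$ is uniquely recoverable once at most $t$ multisets are deleted, provided no two deleted multisets are mutually symmetric, whereas the property to be established asserts that no two distinct codewords can collapse onto identical residual information under exactly such a deletion pattern. So first I would unwind the definition of a $t$-asymmetric multiset deletion composition code and negate it: assume toward a contradiction that there exist distinct $\boldsymbol{s},\boldsymbol{v}\in\mathcal{S}^{(t)}_{DA}(n)$ and an index set $\mathcal{I}\subseteq[n]$ with $|\mathcal{I}|\le t$ satisfying the three listed conditions.

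Next I would verify that this hypothesized $\mathcal{I}$ is precisely the deletion pattern covered by Theorem~\ref{lem::t_dels}. The key observation is that for each $i\in\mathcal{I}$ we have $C_i(\boldsymbol{s})\neq C_i(\boldsymbol{v})$ while $C_{n-i+1}(\boldsymbol{s})=C_{n-i+1}(\boldsymbol{v})$; hence the symmetric partner index $n-i+1$ cannot itself lie in $\mathcal{I}$, since otherwise $C_{n-i+1}$ would be required both to differ and to agree. This confirms that the multisets indexed by $\mathcal{I}=\{i_1,\ldots,i_{t'}\}$, with $t'=|\mathcal{I}|\le t$, contain no mutually symmetric pair (and, as a special case, no self-symmetric central index), matching the hypothesis of Theorem~\ref{lem::t_dels}.

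Then I would form the residual multiset obtained by deleting exactly the multisets indexed by $\mathcal{I}$ and invoke the third condition, $C_j(\boldsymbol{s})=C_j(\boldsymbol{v})$ for all $j\in[n]\setminus\mathcal{I}$, to conclude
\begin{equation}
\bigcup_{i\in[n]\setminus\mathcal{I}} C_i(\boldsymbol{s})=\bigcup_{i\in[n]\setminus\mathcal{I}} C_i(\boldsymbol{v}). \nonumber
\end{equation}
This says that after deleting these $t'\le t$ asymmetric multisets, $\boldsymbol{s}$ and $\boldsymbol{v}$ leave identical surviving information, so a decoder cannot distinguish them. That directly contradicts Theorem~\ref{lem::t_dels}, which guarantees unique recovery of $\boldsymbol{s}$ from precisely this residual data. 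Hence no such triple $(\boldsymbol{s},\boldsymbol{v},\mathcal{I})$ exists, and $\mathcal{S}^{(t)}_{DA}(n)$ is a $t$-asymmetric multiset deletion composition code.

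I expect this argument to be short, with no genuine computational obstacle, since all the technical content already resides in Lemma~\ref{lem::t_dels_sets} and Theorem~\ref{lem::t_dels}. The only point requiring care is the bookkeeping in the second step: confirming that the ``symmetric counterpart agrees'' clause of the code definition is logically equivalent to the ``no two deleted multisets are mutually symmetric'' hypothesis of Theorem~\ref{lem::t_dels}, and noting that allowing $|\mathcal{I}|$ to be strictly smaller than $t$ does not weaken the implication, as the theorem already covers the deletion of \emph{at most} $t$ such multisets.
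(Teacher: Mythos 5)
Your proposal is correct and follows essentially the same route as the paper, which states this theorem as an immediate consequence of Theorem~\ref{lem::t_dels} (``The previous theorem implies the following'') without spelling out the reduction. Your contradiction argument, including the observation that the ``symmetric counterpart agrees'' clause forces $n-i+1\notin\mathcal{I}$ and hence matches the ``no two deleted multisets are mutually symmetric'' hypothesis, is exactly the bookkeeping the paper leaves implicit.
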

We also bound the number of redundant bits required by $\mathcal{S}_{DA}^{(t)}$ as follows.
\begin{lemma}
	The code $\mathcal{S}_{DA}^{(t)}$ requires at most $\frac{1}{2} \log(n-2t)+2t+3$ bits of redundancy.
\end{lemma}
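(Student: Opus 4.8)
The plan is to bound the redundancy $n-\log_2|\mathcal{S}_{DA}^{(t)}(n)|$ from above, which amounts to producing a lower bound on the cardinality of the code in \eqref{eq::srpt}. I would treat even $n$ first (the odd case follows from the doubling extension analogous to \eqref{eq:sr_odd}) and parametrize each codeword by its first half together with its mismatch pattern, exactly as in the $\boldsymbol{\sigma}$ description of Section~\ref{subsec::rec}. Writing $h=n/2$, a string $\boldsymbol{s}\in\mathcal{S}_{DA}^{(t)}(n)$ is specified by deciding, for every position $i\in\{2,\dots,h\}$, whether $i\in I$ (a mismatch, $\sigma_i=1$) or $i\notin I$ (a match, $\sigma_i\in\{0,2\}$, with the two free choices $00$ and $11$), subject to $|I|\ge t$ and to the requirement that the first-half bits on the mismatch positions form the prescribed surplus-$t$ Catalan--Bertrand string. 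Since $s_1=0,s_n=1$ are fixed, this yields the exact count
\begin{equation}
|\mathcal{S}_{DA}^{(t)}(n)| \;=\; \sum_{p=t}^{h-1}\binom{h-1}{p}\,2^{\,h-1-p}\,N_t(p), \nonumber
\end{equation}
where $2^{\,h-1-p}$ counts the free matched columns and $N_t(p)$ is the number of admissible mismatch strings of length $p$.

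The cleanest way to extract the shape $\tfrac12\log_2(n-2t)+2t+c$ is an injection from $\mathcal{S}_R(n-2t)$ into $\mathcal{S}_{DA}^{(t)}(n)$. Given $\boldsymbol{x}\in\mathcal{S}_R(n-2t)$, whose embedded string is merely Catalan--Bertrand (surplus $\ge 1$), I would insert a fixed block of mismatch-$0$ columns at the front of the mismatch region, together with their mirrored $1$s in the second half, so as to raise the surplus floor from $1$ to $t$ and guarantee $|I|\ge t$; this spends $2t$ string positions and produces a length-$n$ codeword lying in $\mathcal{S}_{DA}^{(t)}(n)$. Because the inserted block occupies fixed locations and carries fixed values, the map is injective, so $|\mathcal{S}_{DA}^{(t)}(n)|\ge|\mathcal{S}_R(n-2t)|$. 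Combined with the redundancy bound $\tfrac12\log_2 n+5$ for $\mathcal{S}_R$, this immediately gives redundancy at most $2t+\tfrac12\log_2(n-2t)+5$, already of the right form in the $2t$ and $n-2t$ terms.

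To sharpen the additive constant from $5$ down to the claimed $3$, I would instead evaluate the sum above directly, using $N_t(p)=\Theta\!\big(2^{\,p-t}/\sqrt{p-t}\big)$ (a ballot/central-binomial estimate on the $p-t$ symbols following the $0$-buffer) and the concentration of $\binom{h-1}{p}$ near $p\approx h/2$. Replacing the weight $1/\sqrt{p-t}$ by its peak value $\Theta(1/\sqrt{h})=\Theta(1/\sqrt{n-2t})$ leaves $\sum_p\binom{h-1}{p}=2^{h-1}$ and hence a bound of the form $|\mathcal{S}_{DA}^{(t)}(n)|\ge c\cdot 2^{\,n-1-2t}/\sqrt{n-2t}$, from which the stated redundancy follows. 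The main obstacle is precisely this constant-tracking: one must pin down the ballot-number asymptotics and the binomial-sum estimate tightly enough (rather than through the lossy injection into $\mathcal{S}_R$) to reach exactly $+3$, while correctly bookkeeping that the surplus-$t$ condition and its mirror cost $2t$ positions so that the surviving length inside the logarithm is $n-2t$.
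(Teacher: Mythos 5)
Your main route coincides with the paper's: both parametrize a codeword of $\mathcal{S}_{DA}^{(t)}(n)$ by the mismatch set $I$ of size $p$, give each matched column its two free choices, and lower-bound the number of admissible surplus-$t$ mismatch strings by forcing a prefix of $0$s followed by an ordinary Catalan--Bertrand string, which yields exactly the paper's sum $\sum_{p\geq t}\binom{n/2-1}{p}\,2^{n/2-1-p}N_t(p)$ with $N_t(p)\geq\frac{1}{2}\binom{p-t+1}{\lfloor (p-t+1)/2\rfloor}$; the paper is no more explicit than you are about the final ``algebraic manipulation,'' so your remark that the constant-tracking is the real remaining work is fair. Your injection $\mathcal{S}_R(n-2t)\hookrightarrow\mathcal{S}_{DA}^{(t)}(n)$ (prepend $t$ mismatch columns carrying $0$ in the first half and $1$ in the mirrored positions, lifting the Catalan--Bertrand surplus floor from $1$ to $t$) is a genuinely different argument that the paper does not use: it gives a one-line proof of the correct functional form $2t+\frac{1}{2}\log_2(n-2t)+O(1)$, at the price of inheriting the constant $+5$ from $\mathcal{S}_R$ rather than reaching $+3$. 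One arithmetic point in your direct evaluation: each term contributes $2^{\,h-1-p}\cdot 2^{\,p-t}=2^{\,h-1-t}$, so the sum is of order $2^{\,n-2-t}/\sqrt{n-2t}$ rather than your stated $2^{\,n-1-2t}/\sqrt{n-2t}$; this only strengthens the conclusion (it suggests redundancy about $t+\frac{1}{2}\log_2(n-2t)+O(1)$, so the lemma's ``at most'' bound with coefficient $2t$ still follows a fortiori), but you should fix the exponent if you carry out the computation.
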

\begin{proof}
	We refer to (\ref{eq::srpt}) and additionally recount from \cite{pattabiraman} that $\frac{1}{2} \binom{2h}{h}$ indicates the number of all strings of length $2h$ wherein every prefix of which contains strictly more $0$s than $1$s. For odd lengths $2h+1$, this term serves as a lower bound. Similarly, to count all strings $\boldsymbol{s} \in \{0,1\}^p$ wherein each prefix (of length exceeding $t$) contains at least $t$ more $0$s than $1$s, we simply note that such strings satisfy  $\boldsymbol{s}_1^{t-1}=\boldsymbol{0}$ and $\boldsymbol{s}_{t}^{p}$ should be a standard Catalan-Bertrand string. By virtue of this, we derive a lower bound on dimension of the codebook:
	\begin{align}
	|\mathcal{S}^{(t)}_{DA}(n)|&\geq \sum_{i=t}^{n/2-1}2^{n/2-2-i} \binom{n/2-1}{i} \binom{i-t+1}{\floor{(i-t+1)/2}}. \nonumber 
	\end{align}
	After some algebraic manipulation of this expression, we conclude that the maximum number of redundant bits necessary is $\frac{1}{2} \log(n-2t)+2t+3$.
\end{proof}
\section{Symmetric Multiset Deletion-correcting Composition-Reconstruction Codes} \label{sec::sdel}
As mentioned in Section \ref{sec::error_models}, errors under this category occur in such a way that the affected multisets occur in pairs. We begin directly with the case when two symmetric multisets are inaccessible.
\begin{lemma}
	Consider a string $\boldsymbol{s} \in \mathcal{S}_{R}(n)$. Assume that for any $1 \leq k \leq \ceil{\frac{n-1}{2}}$, one is given $C'(\boldsymbol{s})=\bigcup_{i\in [n] \backslash \{k, n-k+1\}} C_i(\boldsymbol{s})$. Then, $\boldsymbol{s}$ can be fully recovered. \label{lem::sym_del}
\end{lemma}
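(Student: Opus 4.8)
The plan is to adapt the single-deletion argument of Lemma~\ref{lem::del_set1} to the loss of an entire symmetric pair. First I would assume without loss of generality that $1\le k\le \ceil{\frac{n-1}{2}}$, so that $k<n-k+1$ and the two deleted multisets form a genuine symmetric pair. By the symmetry relation (\ref{eq::wt_sym}) we have $w_k(\boldsymbol{s})=w_{n-k+1}(\boldsymbol{s})$, so deleting $\widetilde{C}_k(\boldsymbol{s})$ costs us exactly one distinct cumulative weight, namely $w_k$, together with the composition data at level $k$. Every other cumulative weight $w_j$ with $j\in[\ceil{\frac{n}{2}}]\setminus\{k\}$ is still available from the surviving multisets, and via (\ref{eq::cum_wts}) this immediately fixes every $\sigma_j$ with $j\notin\{k-1,k,k+1\}$.

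Next I would pin down the missing weight. Writing $x=w_k$ and taking a second difference of (\ref{eq::cum_wts}) gives $\sigma_{k-1}=A-x$, $\sigma_k=2x-B$ and $\sigma_{k+1}=C-x$ with $A=2w_{k-1}-w_{k-2}$, $B=w_{k-1}+w_{k+1}$ and $C=2w_{k+1}-w_{k+2}$ all known. Imposing $\sigma_j\in\{0,1,2\}$ then constrains $x$: when $B$ is odd one gets $\sigma_k=1$ and a unique $x$, and the only residual ambiguity is $x\in\{B/2,\,B/2+1\}$ (equivalently $\sigma_k\in\{0,2\}$, with $\sigma_{k-1},\sigma_{k+1}$ each shifted by one) when $B$ is even. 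The partial-sum constraint $\sum_j\sigma_j=w_1$ is automatically consistent and carries no information about $x$, but the structural requirement $s_1=0,s_n=1$ (so $\sigma_1=1$) is needed for the edge case $k=1$, where $C_1$ and hence $w_1$ are themselves unavailable and $w_1=(w_2+1)/2$ is recovered from the present weight $w_2$; note that in this case $\sigma_1=1$ and $\sigma_2=2w_2-w_1-w_3$ are forced, leaving no ambiguity. The central boundary (the single-bit $\sigma_{\ceil{\frac{n}{2}}}\in\{0,1\}$ for odd $n$) only tightens the constraints.

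For uniqueness, suppose some $\boldsymbol{v}\in\mathcal{S}_R(n)$ is equicomposable with $\boldsymbol{s}$ off the pair $\{k,n-k+1\}$. If $\boldsymbol{v}$ shares the $\boldsymbol{\sigma}$ sequence of $\boldsymbol{s}$, I would invoke Lemma~\ref{lem::comp_mismatch} (whose hypotheses hold, since the two strings share $\boldsymbol{\sigma}$ and agree on every present multiset): letting $i$ be the length of their longest common prefix-suffix pair, the multisets $C_{n-i-1}$ and $C_{n-i-2}$ must each differ in at least two compositions; but these are two \emph{consecutive} lengths, so they cannot both lie in $\{k,n-k+1\}$ unless $k$ and $n-k+1$ are consecutive (only for even $n$ with $k=\frac{n}{2}$). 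Hence some surviving multiset would have to differ, contradicting equicomposability, and $\boldsymbol{s}=\boldsymbol{v}$. If instead $\boldsymbol{v}$ realizes the other candidate weight (so $\sigma_k$ flips between $0$ and $2$ while $\sigma_{k\pm1}$ shift by one), then $v_k=v_{n-k+1}$ differs from $s_k=s_{n-k+1}$, and I would run the outside-in reconstruction, carrying the competing level-$k$ bit pairs and testing them against the \emph{present} shorter multiset $C_{k-1}$. A short exhaustive check over the constantly many level-$k$ configurations, mirroring the $(b,v_+)$ case analysis at the end of Lemma~\ref{lem::del_set1}, together with the prefix-weight-mismatch property (\ref{ineq::imp}) of $\mathcal{S}_R(n)$, rules the spurious branch out. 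The lone consecutive case $k=\frac{n}{2}$ ($n$ even), where $w_{n/2+1}=w_{n/2}$ collapses the relations to $\sigma_{n/2}=x-w_{n/2-1}$, I would settle directly by the same check on $C_{n/2-1}$.

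The main obstacle is precisely this last step. Unlike the single asymmetric deletion of Lemma~\ref{lem::del_set1}, where the symmetric partner supplied $w_k$ for free, here we lose the whole pair and so have no independent handle on $w_k$; the exclusion of a second codeword carrying a different cumulative weight must therefore rest entirely on matching the surviving shorter multisets and on the Catalan-Bertrand prefix-weight-mismatch structure of $\mathcal{S}_R(n)$. I expect the delicate point to be verifying that the first present shorter multiset $C_{k-1}$ genuinely separates the at most two candidate $\boldsymbol{\sigma}$ sequences in every configuration, which I would establish by a finite case analysis analogous to the set-equality computations already performed in Lemmas~\ref{lem::del_set1} and~\ref{lem::t_dels_sets}.
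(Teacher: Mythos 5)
Your proposal follows essentially the same route as the paper's proof: you reduce the problem to (a) the non-adjacent case, where the two deleted indices $k$ and $n-k+1$ are not consecutive, killed by Lemma~\ref{lem::comp_mismatch} because that lemma forces \emph{two consecutive} multisets to disagree and both cannot lie in $\{k,n-k+1\}$ (this is exactly how the paper disposes of odd $n$ and of even $n$ with $k\neq \frac{n}{2}$), and (b) the adjacent case $k=\frac{n}{2}$ for even $n$, which must be settled by an explicit finite check against the first surviving shorter multiset $C_{\frac{n}{2}-1}$. Your explicit parametrization of the lost cumulative weight $w_k$ via \eqref{eq::wt_sym} and \eqref{eq::cum_wts}, showing that at most two candidate $\boldsymbol{\sigma}$ sequences survive off the center (by the parity of $w_{k-1}+w_{k+1}$), is a cleaner bookkeeping than the paper's, which simply enumerates the admissible $(\sigma_{\frac{n}{2}-1},\sigma_{\frac{n}{2}})$ versus $(\sigma'_{\frac{n}{2}-1},\sigma'_{\frac{n}{2}})$ pairs; your treatment of the $k=1$ edge case via $\sigma_1=1$ is also correct.

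The one substantive shortfall is that the decisive step — verifying that $C_{\frac{n}{2}-1}(\boldsymbol{s})=C_{\frac{n}{2}-1}(\boldsymbol{v})$ is impossible in the adjacent case, for both the same-$\boldsymbol{\sigma}$ branch and the three distinct-$\boldsymbol{\sigma}$ configurations — is only announced, not carried out. In the paper this verification \emph{is} the proof of the hard case: it occupies two subcases with explicit set equalities such as \eqref{eq::set_eq1aa}, and it is not a routine enumeration, because ruling out the matchings hinges on invoking the Catalan--Bertrand weight-mismatch inequality \eqref{ineq::imp} in the right place (e.g., showing that $\{c(\boldsymbol{s}_{\frac{n}{2}+3}^n),1\}$ cannot be matched twice on the right-hand side even when the weight gap is tight, and separately handling $v_++v_-=2$). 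You correctly identify this as the delicate point, but until that case analysis is actually performed your argument is a correct skeleton rather than a complete proof. A secondary, minor remark: in the non-adjacent distinct-$\boldsymbol{\sigma}$ branch the paper's witness multiset is $C_{n-k}$ (the next one met by the outside-in decoder), which is the more direct choice than your $C_{k-1}$, though both are present and either can be made to work.
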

\begin{proof}
	\begin{mycases}
		\case 
		$n$ is odd.
		
		Since the deleted multisets $C_k(\boldsymbol{s})$ and $C_{n-k+1}(\boldsymbol{s})$ can never be consecutive when $n$ is odd, we can infer from \cite[Lemma 4]{pattabiraman} that any attempt to substitute $C_{n-k+1}(\boldsymbol{s})$ with another multiset, say $C'_{n-k+1}$, that may or may not preserve the value of $\sigma_{k-1}(\boldsymbol{s})$, will surely cause a disagreement with $C_{n-k}(\boldsymbol{s})$. Hence, there exists no valid alternative choices for the multiset pair $\{C_k(\boldsymbol{s}), C_{n-k+1}(\boldsymbol{s})\}$, thus implying that $\boldsymbol{s}$ is uniquely reconstructable. 
		\case
		$n$ is even.
		
		As in the previous case, we can argue that for any $k\neq \{\frac{n}{2},\frac{n}{2}+1\}$, i.e., when the missing multisets are non-consecutive, $\boldsymbol{s}$ remains unique reconstructable by virtue of \cite[Lemma 4]{pattabiraman}. The only case left to be analyzed is when the deleted multisets are adjacent, i.e $C_{\frac{n}{2}}(\boldsymbol{s})$ and $C_{\frac{n}{2}+1}(\boldsymbol{s})$. More specifically, we examine the existence of any $\boldsymbol{v} \in \mathcal{S}_R(n)$, such that $$\bigcup_{i\in [n] \backslash \{\frac{n}{2}, \frac{n}{2}+1\}} C_i(\boldsymbol{v})=\bigcup_{i\in [n] \backslash \{\frac{n}{2}, \frac{n}{2}+1\}} C_i(\boldsymbol{s}).$$
		This directly leads to the following relations:
		\begin{equation}
			\begin{split}
			(\boldsymbol{s}_1^{n/2-2}, \boldsymbol{s}_{n/2+3}^n)&=(\boldsymbol{v}_1^{n/2-2}, \boldsymbol{v}_{n/2+3}^n), \\
			\sigma_i&=\sigma'_i, \quad \forall \;\; 1\leq i \leq \frac{n}{2}-2  \\
			\sigma_{\frac{n}{2}-1}+\sigma_{\frac{n}{2}}&=\sigma'_{\frac{n}{2}-1}+\sigma'_{\frac{n}{2}}. 
			\end{split} \nonumber
		\end{equation}
		where the sequence $\boldsymbol{\sigma}_{\boldsymbol{v}}=(\sigma'_1, \ldots, \sigma'_{n/2})$ describes $\boldsymbol{v}$.
		\subcase $\boldsymbol{\sigma}_{s}=\boldsymbol{\sigma}_{\boldsymbol{v}}$\\
		We only study this subcase for when $\sigma_{{\frac{n}{2}}-1}=\sigma'_{{\frac{n}{2}}-1}=1$ and $(s_{\frac{n}{2}-1}, s_{\frac{n}{2}+2}) \neq (v_{\frac{n}{2}-1}, v_{\frac{n}{2}+2})$, since the alternative involves $C_{n/2+1}(\boldsymbol{s})=C_{n/2+1}(\boldsymbol{v})$ and as a result of this, Lemma \ref{lem::del_set1} precludes the existence of $\boldsymbol{v}$, since $C(\boldsymbol{s})$ and $C(\boldsymbol{v})$ cannot differ by a single multiset alone. This situation is illustrated in Fig. \ref{fig::subcase1a}.
		
		We now proceed to ascertain if there exists some $\boldsymbol{v}$ for which $C_{n/2-1}(\boldsymbol{s})=C_{n/2-1}(\boldsymbol{v})$ holds. Alternatively, we need the following set equality relation to hold:
		\begin{equation}
		\left\{\begin{split}
		&\{c(\boldsymbol{s}_1^{\frac{n}{2}-2}),0\}\\
		&\{c(\boldsymbol{s}_2^{\frac{n}{2}-2}),0,s_+\}\\
		&\{c(\boldsymbol{s}_3^{\frac{n}{2}-2}),0,s_+, s_-\}\\
		&\{c(\boldsymbol{s}_{\frac{n}{2}+3}^n),1\}\\
		&\{c(\boldsymbol{s}_{\frac{n}{2}+3}^{n-1}),1,s_-\}\\
		&\{c(\boldsymbol{s}_{\frac{n}{2}+3}^{n-2}),1,s_+, s_-\}
		\end{split} \right\} = \left\{\begin{split}
		&\{c(\boldsymbol{v}_1^{\frac{n}{2}-2}),1\}\\
		&\{c(\boldsymbol{v}_2^{\frac{n}{2}-2}),1,v_{+}\}\\
		&\{c(\boldsymbol{v}_3^{\frac{n}{2}-2}),1,v_+, v_-\}\\
		&\{c(\boldsymbol{v}_{\frac{n}{2}+3}^n),0\}\\
		&\{c(\boldsymbol{v}_{\frac{n}{2}+3}^{n-1}),0,v_-\}\\
		&\{c(\boldsymbol{v}_{\frac{n}{2}+3}^{n-2}),0,v_+,v_-\}
		\end{split} \right\}. \label{eq::set_eq1aa}
		\end{equation}
		Due to the weight mismatch property between prefix and suffix of equal lengths, we note from Fig. \ref{fig::subcase1a} that if $\boldsymbol{v}$ must uphold:
		\begin{eqnarray}
			\mathrm{wt}(\boldsymbol{s}_2^{n/2-2})+1&<&\mathrm{wt}(\boldsymbol{s}_{n/2+3}^{n-1})\nonumber \\
			\implies \mathrm{wt}(\boldsymbol{s}_1^{n/2-2})+3&\leq&\mathrm{wt}(\boldsymbol{s}_{n/2+3}^{n}). \label{eq::lkop}
		\end{eqnarray}
		Now to prove that (\ref{eq::set_eq1aa}) never holds, it suffices to show that the composition $\{c(\boldsymbol{s}_{\frac{n}{2}+3}^n),1\}$ can never be matched to any two elements on the RHS in (\ref{eq::set_eq1aa}), even when (\ref{eq::lkop}) holds with equality. It is easy to see this when $v_++v_-<2$. On the contrary when $v_++v_-=2$, the compositions $\{c(\boldsymbol{v}_1^{\frac{n}{2}-2}),1\}$ and $\{c(\boldsymbol{v}_2^{\frac{n}{2}-2}),1,v_{+}\}$ become identical, and cannot be matched simultaneously to the components of RHS in (\ref{eq::set_eq1aa}). Therefore, $\boldsymbol{v}$ does not exist.
		\begin{figure}[!htb]
			\centering			
			\scalebox{0.8}{\begin{tikzpicture}[font=\ttfamily,
				array/.style={matrix of nodes,nodes={draw, minimum size=7mm, fill=green!30},column sep=-\pgflinewidth, row sep=0.5mm, nodes in empty cells,
					row 1/.style={nodes={draw=none, fill=none, minimum size=5mm}},
					row 1 column 1/.style={nodes={draw}}}]
				% string s
				\draw (0,0) rectangle (2,0.6) node[pos=.5] {$\boldsymbol{s}_1^{\frac{n}{2}-2}$};
				
				\draw (2,0) rectangle (2.5,0.6) node[pos=0.5] {$0$};
				\draw (2.5,0) rectangle (3,0.6) node[pos=0.5] {$s_+$};
				\draw (3,0) rectangle (3.5,0.6) node[pos=0.5] {$s_{-}$};
				\draw (3.5,0) rectangle (4,0.6) node[pos=0.5] {$1$};
				
				%			\foreach \x in {2,2.5,3,3.5}
				%			\draw (\x,0) rectangle (\x+0.5,0.5);

				\draw (4,0) rectangle (6,0.6) node[pos=0.5] {$\boldsymbol{s}_{\frac{n}{2}+3}^n$};
				
				%string v
				\draw (0,-0.4) rectangle (2,-1) node[pos=0.5] {$\boldsymbol{v}_1^{\frac{n}{2}-2}$};
				
				\draw (2,-0.4) rectangle (2.5,-1) node[pos=0.5] {$1$};
				\draw (2.5,-0.4) rectangle (3,-1) node[pos=0.6] {$v_{+}$};
				\draw (3,-0.4) rectangle (3.5,-1) node[pos=0.6] {$v_{-}$};
				\draw (3.5,-0.4) rectangle (4,-1) node[pos=0.5] {$0$};
				\draw (4,-0.4) rectangle (6,-1) node[pos=0.5] {$\boldsymbol{v}_{\frac{n}{2}+3}^n$};
				\end{tikzpicture}}
			\caption{Strings $\boldsymbol{s}$ and $\boldsymbol{v}$ are such that $(\boldsymbol{s}_1^{\frac{n}{2}-2}, \boldsymbol{s}_{\frac{n}{2}+3}^n)=(\boldsymbol{v}_1^{\frac{n}{2}-2}, \boldsymbol{v}_{\frac{n}{2}+3}^n)$, where $v_{+}+v_{-}=s_{+}+s_{-}$.}
			\label{fig::subcase1a}
		\end{figure}
		\subcase $\boldsymbol{\sigma}_{s}\neq \boldsymbol{\sigma}_{\boldsymbol{v}}$\\
		All of the possible combinations of $(\sigma_{{\frac{n}{2}}-1},\sigma_{{\frac{n}{2}}})$ and $(\sigma'_{{\frac{n}{2}}-1},\sigma'_{{\frac{n}{2}}})$ that comprehensively cover this subcase are:
		\begin{itemize}
			\item $(\sigma_{{\frac{n}{2}}-1},\sigma_{{\frac{n}{2}}})=(1,2b)$ and $(\sigma'_{{\frac{n}{2}}-1},\sigma'_{{\frac{n}{2}}})=(2b,1)$.
			\item $(\sigma_{{\frac{n}{2}}-1},\sigma_{{\frac{n}{2}}})=(2,0)$ and $(\sigma'_{{\frac{n}{2}}-1},\sigma'_{{\frac{n}{2}}})=(1,1)$.
			\item $(\sigma_{{\frac{n}{2}}-1},\sigma_{{\frac{n}{2}}})=(0,2)$ and $(\sigma'_{{\frac{n}{2}}-1},\sigma'_{{\frac{n}{2}}})=(1,1)$.
		\end{itemize}
		where $b \in \mathbb{F}_2$. For the sake of brevity, we only prove the first instance. The remaining proofs run in a similar fashion.\\		
		To reiterate our objective, we check for the existence of a string $\boldsymbol{v} \in \mathcal{S}_R(n)$, for a given $\boldsymbol{s} \in \mathcal{S}_R(n)$, which are characterized as per the depiction in Fig. \ref{fig::subcase1}.
		\begin{figure}[!htb]
			\centering			
			\scalebox{0.8}{\begin{tikzpicture}[font=\ttfamily,
				array/.style={matrix of nodes,nodes={draw, minimum size=7mm, fill=green!30},column sep=-\pgflinewidth, row sep=0.5mm, nodes in empty cells,
					row 1/.style={nodes={draw=none, fill=none, minimum size=5mm}},
					row 1 column 1/.style={nodes={draw}}}]
				% string s
				\draw (0,0) rectangle (2,0.6) node[pos=.5] {$\boldsymbol{s}_1^{\frac{n}{2}-2}$};
				
				\draw (2,0) rectangle (2.5,0.6) node[pos=0.5] {$s_+$};
				\draw (2.5,0) rectangle (3,0.6) node[pos=0.5] {$b$};
				\draw (3,0) rectangle (3.5,0.6) node[pos=0.5] {$b$};
				\draw (3.5,0) rectangle (4,0.6) node[pos=0.5] {$s_-$};	
				
				\draw (4,0) rectangle (6,0.6) node[pos=0.5] {$\boldsymbol{s}_{\frac{n}{2}+3}^n$};
				
				%string v
				\draw (0,-0.4) rectangle (2,-1) node[pos=0.5] {$\boldsymbol{v}_1^{\frac{n}{2}-2}$};
				
				\draw (2,-0.4) rectangle (2.5,-1) node[pos=0.5] {$b$};
				\draw (2.5,-0.4) rectangle (3,-1) node[pos=0.6] {$v_{+}$};
				\draw (3,-0.4) rectangle (3.5,-1) node[pos=0.6] {$v_{-}$};
				\draw (3.5,-0.4) rectangle (4,-1) node[pos=0.5] {$b$};
				\draw (4,-0.4) rectangle (6,-1) node[pos=0.5] {$\boldsymbol{v}_{\frac{n}{2}+3}^n$};
				\end{tikzpicture}}
			\caption{Strings $\boldsymbol{s}$ and $\boldsymbol{v}$ are such that $(\boldsymbol{s}_1^{\frac{n}{2}-2}, \boldsymbol{s}_{\frac{n}{2}+3}^n)=(\boldsymbol{v}_1^{\frac{n}{2}-2}, \boldsymbol{v}_{\frac{n}{2}+3}^n)$, where $s_++s_-=v_{+}+v_-=1$.}
			\label{fig::subcase1}
		\end{figure}
		Since $\boldsymbol{s}$ and $\boldsymbol{v}$ may only differ in their respective composition multisets of substring lengths $\frac{n}{2}$ and $\frac{n}{2}+1$ alone, we endeavor to find the conditions that allow for the set equality of $C_{\frac{n}{2}-1}(\boldsymbol{s})$ and $C_{\frac{n}{2}-1}(\boldsymbol{v})$. More explicitly, we require:
		\begin{equation}
		\left\{\begin{split}
		&\{c(\boldsymbol{s}_1^{\frac{n}{2}-2}),s_+\}\\
		&\{c(\boldsymbol{s}_2^{\frac{n}{2}-2}),s_+,b\}\\
		&\{c(\boldsymbol{s}_3^{\frac{n}{2}-2}),s_+,b^2\}\\
		&\{c(\boldsymbol{s}_{\frac{n}{2}+3}^n),1-s_+\}\\
		&\{c(\boldsymbol{s}_{\frac{n}{2}+3}^{n-1}),1-s_+,b\}\\
		&\{c(\boldsymbol{s}_{\frac{n}{2}+3}^{n-2}),1-s_+,b^2\}
		\end{split} \right\} = \left\{\begin{split}
		&\{c(\boldsymbol{v}_1^{\frac{n}{2}-2}),b\}\\
		&\{c(\boldsymbol{v}_2^{\frac{n}{2}-2}),b,v_{+}\}\\
		&\{c(\boldsymbol{v}_3^{\frac{n}{2}-2}),b,01\}\\
		&\{c(\boldsymbol{v}_{\frac{n}{2}+3}^n),b\}\\
		&\{c(\boldsymbol{v}_{\frac{n}{2}+3}^{n-1}),b,1-v_+\}\\
		&\{c(\boldsymbol{v}_{\frac{n}{2}+3}^{n-2}),b,01\}
		\end{split} \right\}. \nonumber
		\end{equation}
		When $s_+=v_+=0$, we may proceed under the assumption that $\mathrm{wt}(\boldsymbol{s}_2^{n/2-2})=\mathrm{wt}(\boldsymbol{s}_{n/2+3}^{n-1})$ to account for the worst case. In this situation, either $\{c(\boldsymbol{s}_1^{\frac{n}{2}-2}),s_+\}$ or $\{c(\boldsymbol{s}_{\frac{n}{2}+3}^n),1-s_+\}$ fails to be matched, depending on the chosen value of $b$. Else when either $s_+$ or $v_+$ equals $1$, we infer that (\ref{eq::lkop}) holds true. Again, we choose to proceed with the worst case, i.e. $\mathrm{wt}(\boldsymbol{s}_2^{n/2-2})+3=\mathrm{wt}(\boldsymbol{s}_{n/2+3}^{n-1})$, and an exhaustive examination of each possibility reveals that the previous set equality cannot be satisfied. Thus, we conclude that $\boldsymbol{v}$ does not exist.
	\end{mycases}
\end{proof}
The previous result reveals that the codebook $\mathcal{S}_R(n)$ is sufficiently robust to correct the deletion of a single pair of symmetric multisets,i.e., 
\begin{theorem}
	The code $ \mathcal{S}_{R}(n)$ is a single symmetric multiset deletion correcting code. \label{lem::sym_del_new}
\end{theorem}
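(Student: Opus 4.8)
The plan is to derive this code-level claim directly from the reconstruction guarantee of Lemma~\ref{lem::sym_del} by a short confusability (deletion-ball) argument, rather than redoing any case analysis. I would argue by contradiction: suppose $\mathcal{S}_R(n)$ fails to correct a single symmetric multiset deletion. Then there exist two distinct codewords $\boldsymbol{s}, \boldsymbol{v} \in \mathcal{S}_R(n)$ together with single symmetric deletions of each, say removing $\{C_k(\boldsymbol{s}), C_{n-k+1}(\boldsymbol{s})\}$ from $\boldsymbol{s}$ and $\{C_{k'}(\boldsymbol{v}), C_{n-k'+1}(\boldsymbol{v})\}$ from $\boldsymbol{v}$ with $k, k' \le \ceil{\frac{n-1}{2}}$, which coincide as a single received multiset $C'$. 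The goal is to show that this forces $\boldsymbol{s} = \boldsymbol{v}$.

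The key step is to observe that the deleted symmetric pair is already determined by $C'$ itself. Since every composition $0^z 1^w$ carries its own length $z+w$, the multiset $C'$ can be partitioned by length, and for each length $\ell$ the surviving count is either the full $n-\ell+1$ (if $C_\ell$ was untouched) or $0$ (if it was deleted). Because each $C_\ell$ of any codeword is nonempty, the lengths whose count in $C'$ is zero are exactly the deleted pair. Hence the set of empty lengths of $C'$ equals both $\{k, n-k+1\}$ and $\{k', n-k'+1\}$, which forces $k = k'$; that is, $\boldsymbol{s}$ and $\boldsymbol{v}$ must have lost the same symmetric pair (the self-symmetric middle for odd $n$ being excluded by the range $k \le \ceil{\frac{n-1}{2}}$).

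It then follows that $\bigcup_{i \in [n]\setminus\{k, n-k+1\}} C_i(\boldsymbol{s}) = C' = \bigcup_{i \in [n]\setminus\{k, n-k+1\}} C_i(\boldsymbol{v})$, so $\boldsymbol{v}$ produces exactly the input handled by Lemma~\ref{lem::sym_del}. That lemma guarantees unique recovery of the underlying codeword from such an input, with uniqueness inside $\mathcal{S}_R(n)$ secured by the constraints $s_1 = 0$ and $s_n = 1$ that remove the reversal ambiguity. Consequently $\boldsymbol{v} = \boldsymbol{s}$, contradicting their distinctness, and I conclude that no confusable pair exists, i.e., $\mathcal{S}_R(n)$ is a single symmetric multiset deletion correcting code.

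Since Lemma~\ref{lem::sym_del} already carries out the heavy case analysis, the only substantive content left is the reduction itself, and the one point deserving care is the identifiability of the deleted pair from $C'$. Without this observation two codewords could a priori be confused by deleting \emph{different} symmetric pairs, in which case the fixed-$k$ injectivity supplied by Lemma~\ref{lem::sym_del} would not immediately apply; establishing that the empty lengths of $C'$ pin down the pair is exactly what closes that gap, and everything else is a routine contradiction argument.
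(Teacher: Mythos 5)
Your proposal is correct and follows the paper's own route: the paper states this theorem as an immediate consequence of Lemma~\ref{lem::sym_del}, and your argument is exactly that reduction, made explicit by the (valid) observation that the deleted symmetric pair is identifiable from the received multiset since each composition carries its own length and every surviving $C_\ell$ is nonempty. The paper leaves this identifiability step implicit, so your write-up adds only routine detail to what is essentially the same proof.
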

Consequently, if a single composition is substituted in $C(\boldsymbol{s})$ where $\boldsymbol{s} \in \mathcal{S}_R(n)$, then there occurs a mismatch between the cumulative weights of the specific multiset affected, say $C_i(\boldsymbol{s})$, and its symmetric counterpart $C_{n-i+1}(\boldsymbol{s})$. Now if both $C_i(\boldsymbol{s})$ and $C_{n-i+1}(\boldsymbol{s})$ are deleted, Lemma \ref{lem::sym_del} tells us that $\boldsymbol{s}$ is still uniquely recoverable. Thus, we conclude that $\mathcal{S}_R(n)$ is capable of correcting a single composition error just like $S_{CA}^{(1)}(n)$, as pointed out previously in Section \ref{sec::subs}.

We now investigate further along this direction and seek to determine if the absence of multiple pairs of such multisets impacts reconstructability. The deletion of two or more pairs of symmetric multisets, as shown in Lemma \ref{lem::25} (Appendix), no longer guarantees unique reconstruction of codewords drawn from $\mathcal{S}_R(n)$. To remedy this, we propose the code $\mathcal{S}^{(2)}_{DS}(n)$ [see Construction 6], capable of correcting deletions of two pairs of symmetric sets.

\begin{lemma}
	Consider a string  $\boldsymbol{s} \in \mathcal{S}^{(2)}_{DS}(n)$. Given only the composition multisets $\bigcup_{i\in [n] \backslash \{k-1,k,n-k+1,n-k+2\}} C_i(\boldsymbol{s})$, one can uniquely recover $\boldsymbol{s}$. \label{lem::sds2a}
\end{lemma}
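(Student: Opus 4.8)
The plan is to argue by contradiction. Suppose two distinct codewords $\boldsymbol{s},\boldsymbol{v}\in\mathcal{S}^{(2)}_{DS}(n)\subseteq\mathcal{S}_R(n)$ become equicomposable after the four multisets are removed, i.e.\ they agree on every $C_j$ with $j\notin\{k-1,k,n-k+1,n-k+2\}$; note that the deleted multisets form the two symmetric pairs $\{C_{k-1},C_{n-k+2}\}$ and $\{C_{k},C_{n-k+1}\}$. By the symmetry relation (\ref{eq::wt_sym}), the only cumulative weights in $\{1,\dots,\ceil{n/2}\}$ that $\boldsymbol{s}$ and $\boldsymbol{v}$ need not share are $w_{k-1}$ and $w_k$; write $\delta_{k-1}=w_{k-1}(\boldsymbol{s})-w_{k-1}(\boldsymbol{v})$ and $\delta_k=w_k(\boldsymbol{s})-w_k(\boldsymbol{v})$. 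First I would invert (\ref{eq::wk_alt}), whose second difference gives $\sigma_j=2w_j-w_{j-1}-w_{j+1}$ for interior $j$, to pin down the only possible discrepancies $\Delta_j=\sigma_j(\boldsymbol{s})-\sigma_j(\boldsymbol{v})$ of the two $\boldsymbol{\sigma}$ sequences, namely $\Delta_{k-2}=-\delta_{k-1}$, $\Delta_{k-1}=2\delta_{k-1}-\delta_k$, $\Delta_k=2\delta_k-\delta_{k-1}$, $\Delta_{k+1}=-\delta_k$, with $\Delta_j=0$ otherwise. Since every $\sigma_j\in\{0,1,2\}$, the boundary identities force $|\delta_{k-1}|\le2$ and $|\delta_k|\le2$.

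The heart of the argument is the modular constraint of (\ref{eq::sds}). As $\boldsymbol{s}$ and $\boldsymbol{v}$ lie in the same code, $\sum_i w_i(\boldsymbol{s})\equiv\sum_i w_i(\boldsymbol{v})\pmod 7$, and since all weights except $w_{k-1},w_k$ coincide, this collapses to $\delta_{k-1}+\delta_k\equiv0\pmod 7$. The bound $|\delta_{k-1}+\delta_k|\le4<7$ then forces $\delta_{k-1}+\delta_k=0$, i.e.\ $\delta_k=-\delta_{k-1}$. Substituting into $\Delta_{k-1}=2\delta_{k-1}-\delta_k=3\delta_{k-1}$ and using $|\Delta_{k-1}|\le2$ leaves only $\delta_{k-1}=0$, hence $\delta_k=0$ and all $\Delta_j=0$. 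Thus $\boldsymbol{\sigma}_{\boldsymbol{s}}=\boldsymbol{\sigma}_{\boldsymbol{v}}$ and every cumulative weight of $\boldsymbol{s}$ and $\boldsymbol{v}$ agrees. This is precisely the separation the modulus $7$ is tuned to provide, and it is what Lemma \ref{lem::25} shows $\mathcal{S}_R(n)$ cannot achieve on its own: the confusable pairs it exhibits must carry distinct $\boldsymbol{\sigma}$ sequences and therefore $\delta_{k-1}+\delta_k\neq0$, placing them in different residue classes.

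It remains to upgrade $\boldsymbol{\sigma}_{\boldsymbol{s}}=\boldsymbol{\sigma}_{\boldsymbol{v}}$ to $\boldsymbol{s}=\boldsymbol{v}$, and this is the step I expect to be the main obstacle. Here I would invoke Lemma \ref{lem::comp_mismatch}: if $\boldsymbol{s}\neq\boldsymbol{v}$ share the same $\boldsymbol{\sigma}$ and have a longest common prefix--suffix pair of length $i$, then $C_{n-i-1}$ and $C_{n-i-2}$ each differ in at least two compositions. For these differences to remain hidden, the two consecutive lengths $n-i-1,n-i-2$ must lie in the deleted set $\{k-1,k,n-k+1,n-k+2\}$, leaving only $i=k-3$ (aligning with $\{n-k+1,n-k+2\}$) or $i=n-k-1$ (aligning with $\{k-1,k\}$). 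The value $i=n-k-1$ forces a common prefix--suffix of length at least $\ceil{n/2}$ throughout the admissible range of $k$, which already yields $\boldsymbol{s}=\boldsymbol{v}$; the few borderline central values of $k$ reduce to the same analysis as the remaining case. For $i=k-3$ I would write out the length-$(k-2)$ (equivalently length-$(n-k)$) multiset equality that confusability demands, exactly in the style of the set-equality checks in Lemmas \ref{lem::sym_del} and \ref{lem::t_dels_sets}, and show it can hold only if a prefix and an equally long suffix of $\boldsymbol{s}$ are forced to share the same weight, contradicting the Catalan--Bertrand weight-mismatch property (\ref{ineq::imp}) enjoyed by every $\boldsymbol{s}\in\mathcal{S}_R(n)$. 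This contradiction rules out $\boldsymbol{s}\neq\boldsymbol{v}$ and completes the proof.
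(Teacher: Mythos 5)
Your core argument coincides with the paper's: both proofs reduce confusability under the deletion of the two symmetric pairs to a bound on the discrepancy of $\sum_{i}w_i$ between two candidate codewords, and then let the modulo-$7$ constraint in (\ref{eq::sds}) eliminate that discrepancy. Your derivation of the bound is cleaner and slightly sharper than the paper's: the paper recalls the necessary conditions on $\boldsymbol{\sigma}_{\boldsymbol{v}}$ from the proof of Lemma \ref{lem::25} and maximizes $3(\sigma'_{k-2}-\sigma_{k-2})+(\sigma'_{k-1}-\sigma_{k-1})$ over admissible configurations (obtaining $6$ in the adjacent case and $5$ otherwise), whereas you invert (\ref{eq::wk_alt}) by second differences to get $|\delta_{k-1}|,|\delta_k|\leq 2$ directly and conclude $|\delta_{k-1}+\delta_k|\leq 4$. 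Both bounds lie below $7$, and your further deduction that $\delta_{k-1}+\delta_k=0$ forces $\Delta_{k-1}=3\delta_{k-1}$, hence $\delta_{k-1}=\delta_k=0$ and $\boldsymbol{\sigma}_{\boldsymbol{s}}=\boldsymbol{\sigma}_{\boldsymbol{v}}$, is exactly the separation the paper's modulus is tuned to provide. Do mind the boundary cases (e.g.\ $k=2$, and $k=\ceil{n/2}$ for even $n$, where $w_{k+1}=w_k$ by (\ref{eq::wt_sym}) and the identities $\Delta_k=2\delta_k-\delta_{k-1}$, $\Delta_{k+1}=-\delta_k$ must be rederived); the same conclusion still follows there, but not from the formulas as you wrote them.

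The place where your proposal falls short of a complete proof is the endgame you yourself flag: upgrading $\boldsymbol{\sigma}_{\boldsymbol{s}}=\boldsymbol{\sigma}_{\boldsymbol{v}}$ plus agreement on all undeleted multisets to $\boldsymbol{s}=\boldsymbol{v}$. Your localization via Lemma \ref{lem::comp_mismatch} of the longest common prefix--suffix length to $i=k-3$ or $i=n-k-1$ is essentially right (when the four deleted multisets are consecutive there is a third consecutive pair inside the deleted set, giving one more value of $i$ to dismiss), but the contradiction you announce for $i=k-3$ --- that the relevant set equality forces a prefix and an equally long suffix to share a weight, ``contradicting (\ref{ineq::imp})'' --- is not a contradiction on its own: (\ref{ineq::imp}) permits equality whenever $[i]\cap I=\emptyset$. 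The actual contradiction must combine the forced membership $k-2\in I$ (since $\sigma_{k-2}=1$ and the bits at that position are swapped between $\boldsymbol{s}$ and $\boldsymbol{v}$) with the Catalan--Bertrand prefix condition imposed on \emph{both} codewords, in the style of Subcase (i) of Lemma \ref{lem::sym_del}. (Also a small slip: the symmetric counterpart of length $k-2$ is $n-k+3$, not $n-k$; either $C_{k-2}$ or $C_{n-k}$ is a legitimate undeleted multiset to test, but they are not the same one.) To be fair, the paper's own proof is equally silent on this same-$\boldsymbol{\sigma}$ step --- it only treats the pairs with differing $\boldsymbol{\sigma}$ inherited from Lemma \ref{lem::25} --- so your proposal matches the paper's level of completeness while being more explicit about where the remaining work lies.
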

\begin{proof}
	\begin{mycases}
		\case $n$ is even and the deleted multisets are neighboring, i.e. $\{C_{n/2-1}(\boldsymbol{s}), \ldots, C_{n/2+2}(\boldsymbol{s})\}$\\
		
		We recall from the proof of Lemma \ref{lem::25}, that for some $\boldsymbol{s} \in \mathcal{S}_R(n)$ characterized by $\boldsymbol{\sigma}_{s}=(\sigma_1, \ldots \sigma_{n/2})$, there may exist some $\boldsymbol{v} \in \mathcal{S}_R(n)$ with $\boldsymbol{\sigma}_{\boldsymbol{v}}=(\sigma'_1, \ldots, \sigma'_{n/2})$, such that:
		\begin{equation}
		\begin{split}
		\sigma_i&= \sigma'_i, \quad \quad \forall \;\;\; 1\leq i \leq \frac{n}{2}-3 \\
		\sigma_{\frac{n}{2}-2}+ \sigma_{\frac{n}{2}-1}+ \sigma_{\frac{n}{2}}&= \sigma'_{\frac{n}{2}-2}+ \sigma'_{\frac{n}{2}-1}+ \sigma'_{\frac{n}{2}}.
		\end{split}
		\label{uio}
		\end{equation}
		The difference of the sum of their respective cumulative weights for composition multisets containing substrings of lengths from $1$ to $\frac{n}{2}$, can be simplified to:
		\begin{eqnarray}
		&&\sum_{i=1}^{n/2} w_i(\boldsymbol{s})-\sum_{i=1}^{n/2} w_i(\boldsymbol{v}) \nonumber \\
		&=&\sum_{i=n/2-1}^{n/2} w_i(\boldsymbol{s})-\sum_{i=n/2-1}^{n/2} w_i(\boldsymbol{v}) \nonumber \\
		&=&3(\sigma'_{n/2-2}-\sigma_{n/2-2})+(\sigma'_{n/2-1}-\sigma_{n/2-1}).
		\end{eqnarray}
		The above difference is maximized when either:
		\begin{eqnarray}
		(\sigma_{\frac{n}{2}-2},\sigma_{\frac{n}{2}-1},\sigma_{\frac{n}{2}})&=&(0,1,2), \nonumber \\
		(\sigma'_{\frac{n}{2}-2},\sigma'_{\frac{n}{2}-1},\sigma'_{\frac{n}{2}})&=&(2,1,0). \nonumber
		\end{eqnarray}
		or:
		\begin{eqnarray}
		(\sigma_{\frac{n}{2}-2},\sigma_{\frac{n}{2}-1},\sigma_{\frac{n}{2}})&=&(0,2,2), \nonumber \\
		(\sigma'_{\frac{n}{2}-2},\sigma'_{\frac{n}{2}-1},\sigma'_{\frac{n}{2}})&=&(2,2,0). \nonumber
		\end{eqnarray}
		In either case, (\ref{uio}) is upheld. Hence we can write that:
		\begin{equation}
		\sum_{i=1}^{n/2} w_i(\boldsymbol{s})-\sum_{i=1}^{n/2} w_i(\boldsymbol{v})\leq 6. \nonumber
		\end{equation}
%		Thus, an additional constraint of $\sum_{i=1}^{n/2} w_i(\boldsymbol{s}) \text{ mod } 7$ will be sufficient to fully recover $\boldsymbol{s}$.
		\case $n$ may be odd/even and the deleted multisets are not all consecutive, i.e. $k+1<n-k+1$\\		
		From the proof of Lemma \ref{lem::25}, we note that when the multisets $\{C_{k-1}(\boldsymbol{s}), C_k(\boldsymbol{s}), C_{n-k+1}(\boldsymbol{s}),C_{n-k+2}(\boldsymbol{s})\}$ are deleted, there may exist an alternate $\boldsymbol{v} \in \mathcal{S}_R(n)$ such that: 
		\begin{eqnarray}
		\boldsymbol{s}_1^{k-3}&=&\boldsymbol{v}_1^{k-3}, \nonumber \\
		\boldsymbol{s}_{n-k+4}^n&=& \boldsymbol{v}_{n-k+4}^n, \nonumber \\
		\sigma_i&=& \sigma'_i, \quad \quad \forall \; i \in I\nonumber\\
		\sigma_k+2\sigma_{k-1}+3\sigma_{k-2}&=&\sigma'_k+2\sigma'_{k-1}+3\sigma'_{k-2}, \nonumber \\
		\sigma_{k+1}+\sigma_{k}+\sigma_{k-1}+\sigma_{k-2}&=&\sigma'_{k+1}+\sigma'_{k}+\sigma'_{k-1}+\sigma'_{k-2}.\nonumber
		\end{eqnarray}
		where $I=\big[\ceil{\frac{n}{2}}\big] \backslash \{k-2, \ldots, k+1\}$. As before, we bound the difference of the sum of cumulative weights of $\boldsymbol{s}$ and $\boldsymbol{v}$:
		\begin{eqnarray}
		\sum_{i=1}^{\ceil{n/2}} w_i(\boldsymbol{s})-\sum_{i=1}^{\ceil{n/2}} w_i(\boldsymbol{v}) 
		&=&\sum_{i=k-1}^{k} w_i(\boldsymbol{s})-\sum_{i=k-1}^{k} w_i(\boldsymbol{v}) \nonumber \\
		&=&(\sigma'_{k-1}-\sigma_{k-1})\nonumber\\
		&&+3(\sigma'_{k-2}-\sigma_{k-2}).
		\end{eqnarray}
		We find through numerical verification that this quantity cannot exceed 5, and it precisely occurs when:
		\begin{eqnarray}
		(\sigma_{k-2},\sigma_{k-1},\sigma_{k}, \sigma_{k+1})&=&(0,2,0,0), \nonumber \\
		(\sigma'_{k-2},\sigma'_{k-1},\sigma'_{k}, \sigma'_{k+1})&=&(1,0,1,0). \nonumber
		\end{eqnarray}
		
		As a result, in both cases the additional constraint $\sum_{i=1}^{\ceil{\frac{n}{2}}} w_i(\boldsymbol{s}) \text{ mod } 7=a$ in (\ref{eq::sds}) ensures unique reconstruction when the aforementioned multisets are lost.
	\end{mycases}
\vspace{-1em}
\end{proof}
The previous result permits us to conclude that
\begin{theorem}
The code $\mathcal{S}^{(2)}_{DS}(n)$ is a $2$-symmetric multiset deletion correcting code. 
\label{lem::sds2a_new}
\end{theorem}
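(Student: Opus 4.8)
The plan is to read the statement through the definition of a $t$-symmetric multiset deletion composition code and show it is equivalent to the unique-recoverability claim of Lemma~\ref{lem::sds2a}. Concretely, $\mathcal{S}^{(2)}_{DS}(n)$ fails to be a $2$-symmetric multiset deletion correcting code exactly when there are distinct codewords $\boldsymbol{s},\boldsymbol{v}\in\mathcal{S}^{(2)}_{DS}(n)$ and a set $\mathcal{I}\subseteq[\ceil{n/2}]$ with $|\mathcal{I}|\leq 2$ for which $\widetilde{C}_i(\boldsymbol{s})\neq\widetilde{C}_i(\boldsymbol{v})$ for all $i\in\mathcal{I}$ while the surviving multisets agree. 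I would argue by contradiction: assume such $\boldsymbol{s},\boldsymbol{v},\mathcal{I}$ exist. Deleting the symmetric pairs indexed by $\mathcal{I}$ then leaves $\boldsymbol{s}$ and $\boldsymbol{v}$ equicomposable, so it suffices to prove that every codeword of $\mathcal{S}^{(2)}_{DS}(n)$ is uniquely recoverable after the removal of at most two symmetric multiset pairs; this forces $\boldsymbol{s}=\boldsymbol{v}$, the desired contradiction.

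First I would dispose of $|\mathcal{I}|\leq 1$. Since $\mathcal{S}^{(2)}_{DS}(n)\subseteq\mathcal{S}_R(n)$ by Construction~6, the single-pair case is immediate from Theorem~\ref{lem::sym_del_new}. For $|\mathcal{I}|=2$ write $\mathcal{I}=\{i_1,i_2\}$ with $i_1<i_2$ and split on the relative position of the indices. When they are consecutive, i.e. $i_2=i_1+1$, the four deleted multisets are precisely $C_{k-1},C_k,C_{n-k+1},C_{n-k+2}$ with $k=i_2$, so Lemma~\ref{lem::sds2a} applies verbatim and yields $\boldsymbol{s}=\boldsymbol{v}$. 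This is the genuinely coupled case, and it is exactly where the modular constraint $\sum_{i=1}^{\ceil{n/2}}w_i(\boldsymbol{s})\bmod 7=a$ of \eqref{eq::sds} does its work, by separating any two codewords whose $\boldsymbol{\sigma}$-sequences could otherwise collapse under the deletion.

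The remaining case is $i_2>i_1+1$, where the two deleted symmetric pairs are separated by at least one surviving symmetric multiset. Here I would reduce to the single-deletion analysis already established. Running the outside-in reconstruction of Section~\ref{subsec::rec}, each gap presents the same bit-pair ambiguity analysed in the proof of Lemma~\ref{lem::sym_del}; because $i_2>i_1+1$, each of the two deleted indices retains a surviving neighbouring symmetric multiset on the side toward the already-resolved bits, and that neighbour rules out the wrong bit-pair assignment via the incompatibility forced by Lemma~\ref{lem::comp_mismatch} (equivalently \cite[Lemma~4]{pattabiraman}); the outermost case $i_1=1$ is handled directly by the convention $s_1=0,\,s_n=1$ of $\mathcal{S}_R(n)$. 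Since a surviving multiset lies strictly between the two gaps, the two disambiguations never compete for the same information, so they succeed independently and $\boldsymbol{s}$ is recovered uniquely, giving $\boldsymbol{s}=\boldsymbol{v}$.

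The main obstacle I anticipate is this separated case: one must confirm that resolving one gap never requires a multiset destroyed at the other, and conversely. The reason it goes through is that non-consecutiveness guarantees each gap an intact adjacent symmetric multiset, which is all the single-deletion mismatch argument needs; no additional modular redundancy is consumed, so the mod-$7$ constraint is reserved entirely for the coupled consecutive case covered by Lemma~\ref{lem::sds2a}. Combining the three cases shows that no confusable pair $\boldsymbol{s}\neq\boldsymbol{v}$ can exist, establishing that $\mathcal{S}^{(2)}_{DS}(n)$ corrects any two symmetric multiset-pair deletions.
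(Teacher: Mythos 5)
Your proof is correct and, for the substantive case, follows the paper's route: the paper derives the theorem directly from Lemma~\ref{lem::sds2a}, which is exactly your consecutive sub-case $i_2=i_1+1$, and your reduction of $|\mathcal{I}|\leq 1$ to Theorem~\ref{lem::sym_del_new} via $\mathcal{S}^{(2)}_{DS}(n)\subseteq\mathcal{S}_R(n)$ is also how the paper treats single-pair deletions. Where you genuinely add something is the non-consecutive case $i_2>i_1+1$: the paper's Lemma~\ref{lem::sds2a} is stated and proved only for deleted index pairs of the form $\{k-1,k\}$, and the paper passes from it to the theorem without comment, even though the definition of a $2$-symmetric multiset deletion composition code allows an arbitrary $\mathcal{I}$ with $|\mathcal{I}|\leq 2$ (indeed, the remark at the end of Section~\ref{sec::sdel} concedes that the non-consecutive situation lacks a rigorous proof for general $t$). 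Your observation that for two separated gaps each deleted index keeps an intact adjacent symmetric multiset, so that a wrong bit-pair assignment at either gap is caught by Lemma~\ref{lem::comp_mismatch} exactly as in the proof of Lemma~\ref{lem::sym_del}, is the right way to close this, and it correctly explains why no modular redundancy is needed there (consistent with Lemma~\ref{lem::25}, which exhibits confusable pairs only for consecutive deletions). Two small caveats: your separated-gap argument is sketched at the level of the $\boldsymbol{\sigma}_{s}=\boldsymbol{\sigma}_{\boldsymbol{v}}$ subcase, whereas a fully rigorous treatment would also dispose of competitors with $\boldsymbol{\sigma}_{s}\neq\boldsymbol{\sigma}_{\boldsymbol{v}}$ (as Lemma~\ref{lem::sym_del} does for one gap); and the disambiguating neighbour is the surviving multiset on the inward, not-yet-processed side (e.g.\ $C_{n-i_1}$ when $C_{n-i_1+1}$ is missing), a minor slip of phrasing that does not affect the argument.
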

We now seek to generalize the coding constraints in $\mathcal{S}^{(2)}_{DS}(m)$ in (\ref{eq::sds}) by examining how the required redundancy scales as more consecutive multiset pairs go missing. This is accomplished by $\mathcal{S}_{DS}^{\prime (t)}(n)$ [see Construction 7]. Theorem~\ref{lem::sdst_final} demonstrates that $\mathcal{S}_{DS}^{\prime  (t)}(n)$ is a {$t$-symmetric consecutive multiset deletion composition code}.
The proof commences with the following lemma.

\begin{lemma}
	Consider a string  $\boldsymbol{s} \in \mathcal{S}^{\prime (t)}_{DS}(n)$, where $t\geq 2$ and $n \geq 2t+4$. If one is given a corrupted composition multiset $C'(\boldsymbol{s})=\bigcup_{i\in [\ceil{n/2}] \backslash \{k-t, \ldots ,k-1\}} C_i(\boldsymbol{s}) \cup C_{n-i+1}(\boldsymbol{s})$  for any $t<k -1\leq \floor{n/2}$, i.e. $t$ consecutive symmetric multiset pairs are missing, $\boldsymbol{s}$ can be uniquely reconstructed. \label{lem::sdst}
\end{lemma}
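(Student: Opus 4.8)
The plan is to argue by contradiction, mirroring the structure of the $t=2$ analysis in Lemma~\ref{lem::sds2a} (and Lemma~\ref{lem::25}) but carrying the bookkeeping for an ambiguity window whose width grows with $t$. Suppose there is a second codeword $\boldsymbol{v}\in\mathcal{S}^{\prime(t)}_{DS}(n)$, $\boldsymbol{v}\neq\boldsymbol{s}$, that becomes equicomposable with $\boldsymbol{s}$ once the $t$ consecutive symmetric pairs $\widetilde{C}_{k-t},\dots,\widetilde{C}_{k-1}$ are removed. Since every composition records its own length, the surviving union splits back into individual multisets, so $C_i(\boldsymbol{s})=C_i(\boldsymbol{v})$ for every retained $i$; in particular the retained cumulative weights agree and, through the symmetry relation \eqref{eq::wt_sym}, so do those of their symmetric survivors. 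Reading off $\sigma_i$ as the discrete second difference $\sigma_i=2w_i-w_{i-1}-w_{i+1}$ implied by \eqref{eq::cum_wts}, I would first pin down exactly which entries of $\boldsymbol{\sigma}_{\boldsymbol{s}}$ and $\boldsymbol{\sigma}_{\boldsymbol{v}}$ are forced to coincide and which are left free.

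The outcome of that step is a short window: when the deleted block straddles the centre ($k-1=\lfloor n/2\rfloor$) the free entries are $\sigma_{\lceil n/2\rceil-t},\dots,\sigma_{\lceil n/2\rceil}$ (width $t+1$), constrained only by the preserved total $w_1$, exactly as in \eqref{uio}; when the block lies strictly below the centre the window has width $t+2$ and is constrained by two linear relations, the preserved total from $w_1$ and the preserved boundary weight $w_k$ (equivalently $\sum_j j\,\Delta_j=0$, where $\Delta_j=\sigma_j(\boldsymbol{s})-\sigma_j(\boldsymbol{v})$), generalising the two relations recorded in the non-consecutive case of Lemma~\ref{lem::sds2a}. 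Next I would invoke the Catalan--Bertrand/weight-mismatch property built into $\mathcal{S}_R(n)$ via Lemma~\ref{lem::comp_mismatch}: once $\boldsymbol{\sigma}$ is fixed the inward bit assignment is forced, so $\boldsymbol{v}\neq\boldsymbol{s}$ can only arise from $\boldsymbol{\sigma}_{\boldsymbol{v}}\neq\boldsymbol{\sigma}_{\boldsymbol{s}}$, and the same property restricts the admissible difference patterns $\Delta_j$ inside the window.

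The arithmetic heart is then the quantity controlled by the code, $D=\sum_{i=1}^{\lceil n/2\rceil}w_i(\boldsymbol{s})-\sum_{i=1}^{\lceil n/2\rceil}w_i(\boldsymbol{v})$. A short summation of \eqref{eq::cum_wts} gives $\sum_{i=1}^{N}w_i=\sum_j\sigma_j\,\tfrac{j(2N+1-j)}{2}$ with $N=\lceil n/2\rceil$, and substituting this together with the window relations makes the position-dependent part cancel, so that $D$ collapses to the position-independent form $D=\sum_a g_a\,\tfrac{a(2t+1-a)}{2}$ in the local differences $g_a\in\{-2,\dots,2\}$ with $\sum_a g_a=0$ (plus $\sum_a a\,g_a=0$ in the separated case). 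Maximising this linear functional of $g$ over the admissible box yields a bound on $|D|$ that is cubic in $t$ and strictly below $A=\lceil\frac{4t^3}{3}+\frac{2t}{3}-\frac{31}{4}\rceil$, the modulus fixed in Construction~7, see \eqref{eq::cdst}. Since $\boldsymbol{s}$ and $\boldsymbol{v}$ share the same residue modulo $A$, this forces $A\mid D$ and hence $D=0$; verifying that no admissible nonzero pattern achieves $D=0$ then yields $\boldsymbol{\sigma}_{\boldsymbol{v}}=\boldsymbol{\sigma}_{\boldsymbol{s}}$, i.e.\ $\boldsymbol{v}=\boldsymbol{s}$, the desired contradiction.

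I expect the main obstacle to lie in the last step: establishing that, after the Catalan--Bertrand restriction, the admissible difference patterns are exactly those captured by the box constraints, and then verifying both the extremal bound $|D|<A$ and the injectivity $D\neq 0$ uniformly in $t$. The extremal computation is the clean generalisation of the explicit case enumerations behind \eqref{uio}, while the injectivity is the analogue of the fact (implicit already for $t=2$) that the window values cannot be redistributed within $\{0,1,2\}$ so as to leave $\sum_i w_i$ unchanged. The parity of $n$ and the clipping of the window at the centre (where, for odd $n$, the self-symmetric central multiset survives and supplies an extra constraint) add routine but unavoidable case bookkeeping.
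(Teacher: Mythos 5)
Your plan follows essentially the same route as the paper's own proof: split into the case where the deleted block straddles the centre and the case where it does not, express the difference of the summed cumulative weights as a fixed linear functional of the window differences $\sigma'_j-\sigma_j$ (the paper's (\ref{eq::wdiff})), maximize it over the admissible $\{0,1,2\}$-valued configurations subject to the preserved linear relations to obtain the cubic-in-$t$ bounds (\ref{eq::wdiff_even}), (\ref{eq::wd_odd}) and (\ref{eq::wdiff2}), and conclude via the modulo-$A$ constraint of Construction~7. The one step you explicitly flag as the remaining obstacle --- verifying that no admissible nonzero difference pattern yields $D=0$ --- is precisely the step the paper also leaves implicit, so your outline is faithful to, and no less complete than, the published argument.
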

\begin{proof}
	\begin{mycases}
		\case $n$ may be odd/even and the $2t$ deleted multisets are not adjacent, i.e. $k<n-k+2$.\\
		Since the multiset pairs $(C_i(\boldsymbol{s}), C_{n-i+1}(\boldsymbol{s}))$ have been eliminated, for $k-t\leq i \leq k-1$, we also do not know their respective cumulative weights. Thus, the values of $\sigma_{k-t-1}, \ldots \sigma_{k-2}$ are also unknown. Furthermore, we note from (\ref{eq::cum_wts}) that $\sigma_{k-1}$ and $\sigma_{k}$ are also not deducible. However, the sum of these missing values can be inferred from
		\begin{align}
			w_{k+1}-w_k&=(k+1)w_1-\sum_{i=1}^{k}i \sigma_{k+1-i} - kw_1+\sum_{i=1}^{k-1}i \sigma_{k-i} \nonumber \\
			&=	w_1 -\sigma_k\ldots -\sigma_1. \nonumber
		\end{align} 
		To test if $\boldsymbol{s}$ is uniquely recoverable, we attempt to find a suitable $\boldsymbol{v} \in \mathcal{S}_R(n)$, characterized by $\sigma'_1, \ldots ,\sigma'_{\ceil{n/2}}$, such that 
		\begin{equation}
		\widetilde{C}_i(\boldsymbol{s})=\widetilde{C}_i(\boldsymbol{v}). \quad \forall \;i \in \Big[\Big\lceil \frac{n}{2} \Big\rceil \Big]\backslash \{k-t,\ldots ,k-1\}  \nonumber
		\end{equation}
		These equations also imply that:
		\begin{align}
		(\boldsymbol{s}_1^{k-t-2},\boldsymbol{s}_{n-k+t+3}^n)&=(\boldsymbol{v}_1^{k-t-2},\boldsymbol{v}_{n-k+t+3}^n), \nonumber \\
		\sigma_i&= \sigma'_i, \quad \quad \forall \; i \in \Big[\Big\lceil \frac{n}{2} \Big\rceil \Big]\backslash I \nonumber \\
		\sum_{j \in I}\sigma_j&= \sum_{j \in I} \sigma'_j. \nonumber 
		\end{align}
		where $I=\{k-t-1, \ldots ,k\}$. Alike the approach undertaken in prior proofs, we now attempt to compute the maximum difference between the sum of cumulative weights of $\boldsymbol{s}$ and $\boldsymbol{v}$:
		\begin{align}
		\sum_{i=1}^{\ceil{n/2}} w_i(\boldsymbol{s})-w_i(\boldsymbol{v}) &= \sum_{i=k-t}^{k-1} w_i(\boldsymbol{s})-w_i(\boldsymbol{v}) \nonumber\\
		&= \sum_{i=k-t}^{k-1}\big(iw_1(\boldsymbol{s})-\sum_{j=1}^{i-1}j\sigma_{i-j}\big) \nonumber \\
		&-\sum_{i=k-t}^{k-1}\big(iw_1(\boldsymbol{v})-\sum_{j=1}^{i-1}j\sigma'_{i-j}\big) \nonumber \\
		&= \frac{t(t+1)}{2}(\sigma'_{k-t-1}-\sigma_{k-t-1}) \nonumber \\
		&+\ldots +3(\sigma'_{k-3}-\sigma_{k-3})\nonumber\\
		&+(\sigma'_{k-2}-\sigma_{k-2}). \label{eq::wdiff}
		\end{align}
		The final equality follows from $w_1(\boldsymbol{s})=w_1(\boldsymbol{v})$, which always holds since the premise of this error model states that $k-t>1$, suggesting that the multisets $C_1$ and $C_n$ are always preserved.
		\subcase $t$ is even.\\
		In this case, the quantity in (\ref{eq::wdiff}) is maximized when we have:
		\begin{align}
		(\sigma'_{k-t-1},  \ldots,  \sigma'_{k})&=(\overbrace{2, \ldots 2}^{\frac{t}{2}+1}, \overbrace{0, \ldots 0}^{\frac{t}{2}+1}), \nonumber \\
		(\sigma_{k-t-1},  \ldots,  \sigma_{k})&=(\overbrace{0, \ldots 0}^{\frac{t}{2}+1}, \overbrace{2, \ldots 2}^{\frac{t}{2}+1}). \nonumber
		\end{align}
		It is worth pointing out that these configurations may not always be valid, since the available multisets may not allow for them. However, they certainly embody the worst possible case. Now applying this to (\ref{eq::wdiff}), we obtain the following bound:
		\begin{eqnarray}
		\sum_{i=1}^{n/2} w_i(\boldsymbol{s})-w_i(\boldsymbol{v}) &\leq& \frac{t(t+2)^2}{4}. \label{eq::wdiff_even}
		\end{eqnarray}
		\subcase $t$ is odd.\\
		When $t$ is odd, the difference between the cumulative weights of $\boldsymbol{s}$ and $\boldsymbol{v}$ is maximized when:
		\begin{align}
	(\sigma'_{n/2-t},  \ldots,  \sigma'_{n/2})&=(\overbrace{2, \ldots 2}^{\frac{t-1}{2}},p, \overbrace{0, \ldots 0}^{\frac{t-1}{2}}), \nonumber \\
	(\sigma_{n/2-t},  \ldots,  \sigma_{n/2})&=(\overbrace{0, \ldots 0}^{\frac{t-1}{2}}, p,\overbrace{2, \ldots 2}^{\frac{t-1}{2}}). \nonumber
	\end{align}
		where $p \in \{0,1,2\}$. By further manipulating (\ref{eq::wdiff}), we get
		\begin{equation}
		\sum_{i=1}^{n/2} w_i(\boldsymbol{s})-w_i(\boldsymbol{v}) \leq \frac{t(t+1)(t+3)}{4}. \label{eq::wd_odd}
		\end{equation}
	\case $n$ is even and all of the deleted multisets are consecutive, i.e. $C_{n/2-t+1}(\boldsymbol{s}), \ldots, C_{n/2+t}(\boldsymbol{s})$.\\
	Much like the previous case, we attempt to find a $\boldsymbol{v} \in \mathcal{S}_R(n)$, characterized by $\sigma'_1, \ldots \sigma'_{{\frac{n}{2}}}$, such that for $1 \leq i \leq n/2-t$:
	\begin{equation}
	\widetilde{C}_i(\boldsymbol{s})=\widetilde{C}_i(\boldsymbol{v}), \nonumber 
	\end{equation}
	As a consequence, the following equalities also hold:
	\begin{align}
	(\boldsymbol{s}_1^{n/2-t-1},\boldsymbol{s}_{n/2+t+2}^n)&=(\boldsymbol{v}_1^{n/2-t-1},\boldsymbol{v}_{n/2+t+2}^n) \nonumber \\
	\sigma_i&= \sigma'_i, \quad \quad \forall \; i \in [n/2-t-1] \nonumber \\
	\sum_{j=n/2-t}^{n/2}\sigma_j&= \sum_{j=n/2-t}^{n/2}\sigma'_j. \nonumber 
	\end{align}
	Corresponding to (\ref{eq::wdiff}), we arrive at:
	\begin{align}
	\sum_{i=1}^{n/2} w_i(\boldsymbol{s})-w_i(\boldsymbol{v})&= \frac{t(t+1)}{2}(\sigma'_{\frac{n}{2}-t}-\sigma_{\frac{n}{2}-t})+\ldots \nonumber\\
	&+3(\sigma'_{\frac{n}{2}-2}-\sigma_{\frac{n}{2}-2}) +(\sigma'_{\frac{n}{2}-1}-\sigma_{\frac{n}{2}-1}). \nonumber
	\end{align}
	By appropriately assigning the vectors $(\sigma_{n/2-t}, \ldots, \sigma_{n/2})$ and $(\sigma'_{n/2-t}, \ldots, \sigma'_{n/2})$, we can upper-bound the preceding quantity as follows:
	\begin{equation}
		\sum_{i=1}^{n/2} w_i(\boldsymbol{s})-w_i(\boldsymbol{v}) \leq \begin{cases}
		\frac{(t+1)^3}{4},& \text{if $t$ is even.} \\
		\frac{t(t+1)(t+2)}{4},              & \text{otherwise.}
		\end{cases} \label{eq::wdiff2}
	\end{equation}
	\end{mycases}
	
	The definition of $\mathcal{S}^{\prime (t)}_{DS}(n)$ in (\ref{eq::cdst}) along with the bounds provided in (\ref{eq::wdiff_even}), (\ref{eq::wd_odd}) and (\ref{eq::wdiff2}) directly imply the statement.
\end{proof}
\begin{lemma}
	Consider a string  $\boldsymbol{s} \in \mathcal{S}^{\prime (t)}_{DS}(n)$, where $t\geq 2$ and $n\geq 2t+4$. If one is given a corrupted composition multiset $C'(\boldsymbol{s})=\bigcup_{i\in [\ceil{n/2}] \backslash [t]} C_i(\boldsymbol{s}) \cup C_{n-i+1}(\boldsymbol{s})$, $\boldsymbol{s}$ can be uniquely reconstructed. \label{lem::9}
\end{lemma}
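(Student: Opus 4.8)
The plan is to treat this as the boundary configuration that Lemma~\ref{lem::sdst} explicitly excludes (the case $k=t+1$, where the deleted pairs abut index~$1$): now $C_1(\boldsymbol s)$ and $C_n(\boldsymbol s)$ are among the erased multisets, so the total weight $w_1(\boldsymbol s)$ is no longer directly readable. I would argue by contradiction, assuming a second codeword $\boldsymbol v\in\mathcal S^{\prime (t)}_{DS}(n)$, $\boldsymbol v\neq\boldsymbol s$, produces the same corrupted multiset, i.e. $\widetilde C_i(\boldsymbol s)=\widetilde C_i(\boldsymbol v)$ for all $i\in\{t+1,\dots,\lceil n/2\rceil\}$.

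First I extract the cumulative-weight content of this agreement. The shared multisets supply $w_{t+1},\dots,w_{\lceil n/2\rceil}$, and forming consecutive differences with (\ref{eq::cum_wts}) recovers every $\sigma_j$ with $j\ge t+2$ exactly, whence $\sigma_j=\sigma'_j$ there (this is where $n\ge 2t+4$ first matters, guaranteeing a nonempty window). The head values $\sigma_1,\dots,\sigma_{t+1}$ are not pinned individually; combining (\ref{eq::w1}) and (\ref{eq::cum_wts}) at $k=t+1$ yields only the single relation $\sum_{j=1}^{t+1}j\,\sigma_j=w_{t+1}-(t+1)\sum_{j\ge t+2}\sigma_j$, which is known. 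Writing $\delta_m=\sigma_m-\sigma'_m$, and noting $\delta_1=0$ since $s_1=0,s_n=1$ forces $\sigma_1=\sigma'_1=1$, every surviving competitor must satisfy the single weighted constraint $\sum_{m=1}^{t+1}m\,\delta_m=0$ — in contrast to the unweighted constraint of Lemma~\ref{lem::sdst}, a difference caused precisely by the loss of $w_1$.

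Next I quantify the weight spread as in (\ref{eq::wdiff}). Since all $w_i$ with $i\ge t+1$ coincide, $\sum_{i=1}^{\lceil n/2\rceil}(w_i(\boldsymbol s)-w_i(\boldsymbol v))=\sum_{i=1}^{t}(w_i(\boldsymbol s)-w_i(\boldsymbol v))$, and a short computation using $\sum_m m\,\delta_m=0$ collapses this to the clean form $-\tfrac12\sum_{m=1}^{t+1}m^2\,\delta_m$. Because $|\delta_m|\le2$ this stays strictly below the modulus $A$ of (\ref{eq::cdst}); as both $\boldsymbol s$ and $\boldsymbol v$ obey the defining congruence, the difference must be $\equiv 0\pmod A$ and hence zero, giving $\sum_m m^2\delta_m=0$. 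One then checks that the two moment equations $\sum_m m\,\delta_m=\sum_m m^2\delta_m=0$, together with $\delta_1=0$ and the integrality bound $|\delta_m|\le2$, leave only $\delta\equiv 0$, so $\boldsymbol\sigma_{\boldsymbol s}=\boldsymbol\sigma_{\boldsymbol v}$. Verifying this elimination for all $t$, and confirming that the inflated modulus $A$ (enlarged relative to Lemma~\ref{lem::sdst} exactly because $\Delta=w_1(\boldsymbol s)-w_1(\boldsymbol v)$ need no longer vanish) dominates the cubic, is the quantitative heart of the argument.

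Finally I would rule out a $\boldsymbol v$ sharing the full $\boldsymbol\sigma$ sequence of $\boldsymbol s$ yet differing from it. Here Lemma~\ref{lem::comp_mismatch} applies: if the longest common prefix--suffix pair has length $i<\lceil n/2\rceil$, then $C_{n-i-1}$ and $C_{n-i-2}$ each differ in at least two compositions, and I would use $n\ge 2t+4$ to place these indices, or their symmetric partners $C_{i+2},C_{i+3}$, inside the available window $\{t+1,\dots,n-t\}$, contradicting the assumed agreement. I expect the genuinely delicate point to be exactly this step when the first disagreement lies deep in the head ($i\le t-2$), since then the guaranteed mismatching multisets can be pushed into the deleted range; closing this subcase requires invoking the prefix/suffix weight-mismatch property (\ref{ineq::imp}) of $\mathcal S_R(n)$ to exhibit an available distinguishing multiset, and is where the hypothesis $n\ge 2t+4$ does its real work.
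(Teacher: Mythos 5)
Your setup follows the paper's route: you recover $\sigma_j=\sigma'_j$ for $j\ge t+2$ from differences of the surviving cumulative weights, reduce everything to the head differences $\delta_m=\sigma_m-\sigma'_m$, and compare $\sum_i w_i(\boldsymbol{s})-\sum_i w_i(\boldsymbol{v})$ against the modulus $A$ of Construction~7. Your extraction of the \emph{weighted} constraint $\sum_{m=1}^{t+1}m\,\delta_m=0$ from $w_{t+1}(\boldsymbol{s})=w_{t+1}(\boldsymbol{v})$ is a genuine refinement over the paper, which only uses the unweighted relation to bound $w_1(\boldsymbol{s})-w_1(\boldsymbol{v})$ by $2t$; your collapse of the weight difference to $-\tfrac{1}{2}\sum_m m^2\delta_m$ is correct and yields a much tighter bound (of order $t^3/6$ rather than $4t^3/3$), so the congruence does force $\sum_m m^2\delta_m=0$.

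The gap is in the step you call the quantitative heart: the claim that $\sum_m m\,\delta_m=\sum_m m^2\delta_m=0$ together with $\delta_1=0$ and $|\delta_m|\le 2$ leaves only $\delta\equiv 0$ is false once $t\ge 5$. Take $(\delta_2,\delta_3,\delta_4,\delta_5,\delta_6)=(1,-2,0,2,-1)$: then $\sum_m m\delta_m=2-6+0+10-6=0$ and $\sum_m m^2\delta_m=4-18+0+50-36=0$, all entries bounded by $2$ in magnitude, and this is realizable at the level of $\sigma$-sequences, e.g.\ $(\sigma_2,\dots,\sigma_6)=(1,0,0,2,0)$ against $(\sigma'_2,\dots,\sigma'_6)=(0,2,0,0,1)$. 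So the congruence does not deliver $\boldsymbol{\sigma}_{\boldsymbol{s}}=\boldsymbol{\sigma}_{\boldsymbol{v}}$, and your final appeal to Lemma~\ref{lem::comp_mismatch} (which presupposes a common $\boldsymbol{\sigma}$ sequence) does not cover these competitors; ruling them out would require using the full multiset equalities $\widetilde{C}_i(\boldsymbol{s})=\widetilde{C}_i(\boldsymbol{v})$ for $i\ge t+1$, not merely their cumulative weights. (To be fair, the paper's own proof also stops at the bound (\ref{eq::sdst_constraint}) without spelling out why a vanishing weight-sum difference implies uniqueness, but it does not rest on this particular false elimination.)
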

\begin{proof}
	Unlike Lemma \ref{lem::sdst}, this proof is dedicated to the specific case where the multisets $C_1 \cup C_n, \ldots, C_t \cup C_{n-t+1}$ have been deleted. Since multisets $C_{t+1}(\boldsymbol{s})$ and $C_{t+2}(\boldsymbol{s})$ are available, we can obtain:
	\begin{eqnarray}
		w_{t+2}(\boldsymbol{s})-w_{t+1}(\boldsymbol{s})&=&w_1(\boldsymbol{s})-\sigma_{t+1}-\ldots-\sigma_1. \label{eq::sdstb1}
	\end{eqnarray}
	Similar to the prior analyses, we check for the existence of some $\boldsymbol{v} \in \mathcal{S}_R(n)$, specified by $\boldsymbol{\sigma}_{\boldsymbol{v}}=(\sigma'_1, \ldots, \sigma'_{\ceil{n/2}})$, that satisfies:
	\begin{equation}
		\widetilde{C}_i(\boldsymbol{s})=\widetilde{C}_i(\boldsymbol{v}). \label{eq::sdstb2}
	\end{equation}
	where $t<i\leq \ceil{n/2}$. From (\ref{eq::sdstb1}) and (\ref{eq::sdstb2}), we infer that for $1\leq i \leq \ceil{n/2}-t-1$:
	\begin{eqnarray}
		w_{t+i+1}(\boldsymbol{s})-w_{t+i}(\boldsymbol{s})&=&w_{t+i+1}(\boldsymbol{v})-w_{t+i}(\boldsymbol{v}) \nonumber\\
		\implies w_1(\boldsymbol{s})-\sigma_{t+i}-\ldots -\sigma_{1}&=&w_1(\boldsymbol{v})-\sigma'_{t+i}-\ldots -\sigma'_{1}. \nonumber
	\end{eqnarray}
	The preceding relation now allows us to deduce that:
	\begin{equation}
		\sigma_j=\sigma'_j.   \quad \forall \;\; t+2 \leq j \leq \ceil{n/2} \nonumber
	\end{equation}
	Also by construction of $\mathcal{S}_R(n)$, we observe that $\sigma_1=\sigma'_1$. As before, we inspect the difference of the sum of cumulative weights of $\boldsymbol{s}$ and $\boldsymbol{v}$:
	\begin{eqnarray}
	(\boldsymbol{s}_1^{k-t-2},\boldsymbol{s}_{n-k+t+3}^n)&=&(\boldsymbol{v}_1^{k-t-2},\boldsymbol{v}_{n-k+t+3}^n), \nonumber \\
	\sigma_i&=& \sigma'_i, \quad \quad \forall \; i \in \big[\ceil{n/2}\big]\backslash I \nonumber \\
	\sum_{j \in I}\sigma_j&=& \sum_{j \in I} \sigma'_j. \nonumber 
	\end{eqnarray}
	where $I=\{k-t-1, \ldots ,k\}$. The sum of cumulative weights of $\boldsymbol{s}$ and $\boldsymbol{v}$ differ by:
	\begin{eqnarray}
	\sum_{i=1}^{\ceil{n/2}} w_i(\boldsymbol{s})-w_i(\boldsymbol{v}) &=& \sum_{i=1}^{t} w_i(\boldsymbol{s})-w_i(\boldsymbol{v}) \nonumber\\
	&=& \sum_{i=1}^{t}\big(iw_1(\boldsymbol{s})-\sum_{j=1}^{i-1}j\sigma_{i-j}\big) \nonumber \\
	&&-\sum_{i=1}^{t}\big(iw_1(\boldsymbol{v})-\sum_{j=1}^{i-1}j\sigma'_{i-j}\big) \nonumber \\
	&=& \frac{t(t+1)}{2}(w_1(\boldsymbol{s})-w_1(\boldsymbol{v}))\nonumber\\
	&&+\frac{(t-2)(t-1)}{2}(\sigma'_{2}-\sigma_{2}) \nonumber \\
	&&+\ldots +3(\sigma'_{t-2}-\sigma_{t-2})\nonumber\\
	&&+(\sigma'_{t-1}-\sigma_{t-1}). \label{eq::wcdiff_sv}
	\end{eqnarray}
	Since $w_{t+1}(\boldsymbol{s})=w_{t+1}(\boldsymbol{v})$ and $w_{t+2}(\boldsymbol{s})=w_{t+2}(\boldsymbol{v})$, we rewrite (\ref{eq::sdstb1}) as:
	\begin{eqnarray}
		w_1(\boldsymbol{s})-w_1(\boldsymbol{v})&=&(\sigma_{t+1}-\sigma'_{t+1})+\ldots+(\sigma_{1}-\sigma'_{1}) \nonumber \\
		&=&(\sigma_{t+1}-\sigma'_{t+1})+\ldots+(\sigma_{2}-\sigma'_{2}) \nonumber\\
		&\leq& 2t. \label{eq::w1_ub}
	\end{eqnarray}
	We now attempt to design the vectors $\boldsymbol{\sigma}_{s}$ and $\boldsymbol{\sigma}_{\boldsymbol{v}}$ such that for a fixed value of $w_1(\boldsymbol{s})-w_1(\boldsymbol{v})$, the following quantity is maximized:
	\begin{equation}
		\frac{(t-2)(t-1)}{2}(\sigma'_{2}-\sigma_{2}) +\ldots +(\sigma'_{t-1}-\sigma_{t-1}). \nonumber
	\end{equation}
	while bearing in mind that:
	\begin{eqnarray}
		w_1(\boldsymbol{s})-w_1(\boldsymbol{v})&=&\sum_{i=2}^{t+1} (\sigma_i-\sigma'_i). \nonumber
	\end{eqnarray}
	Clearly, we must set $\sigma'_i-\sigma_i=2$ for $i=2, \ldots$, due to the higher weights of these terms, and $\sigma'_i-\sigma_i=-2$ for $i=t-1,t-2, \ldots$ on account of the minor influence of these terms on (\ref{eq::wcdiff_sv}). Additionally, we set $(\sigma_{t},\sigma'_{t})=(\sigma_{t+1},\sigma'_{t+1})=(2,0)$, thus allowing us to reduce the quantity $\sum_{i=2}^{t-1}(\sigma_i-\sigma'_i)$, i.e.
	\begin{equation}
		\sum_{i=2}^{t-1}(\sigma_i-\sigma'_i)=a-4. \nonumber
	\end{equation} 
	where $a=w_1(\boldsymbol{s})-w_1(\boldsymbol{v})$. Hence, to proceed with the maximization of (\ref{eq::wcdiff_sv}), we perform the following assignment when $a$ is odd:
	\begin{equation}
	\begin{split}
	(\sigma'_{2},  \ldots,  \sigma'_{t-1})&=({2, \ldots 2},p', {0, \ldots ,0}),  \\
	(\sigma_{2},  \ldots,  \sigma_{t-1})&=({0, \ldots 0},p, {2, \ldots ,2}). 
	\end{split}\label{eq::sig_assign}
	\end{equation}
	where $p+p'=1$. Here $p$ and $p'$ may be assigned interchangeably, depending on $t$. In a similar fashion, when $a$ is even, we again reuse this assignment while setting either $(p,p')=(0,2)$ or $p=p'=0$. Further noting that the term $w_1(\boldsymbol{s})-w_1(\boldsymbol{v})$ has the highest weight in (\ref{eq::wcdiff_sv}), we combine (\ref{eq::wcdiff_sv}), (\ref{eq::w1_ub}) and (\ref{eq::sig_assign}) to arrive at the following upper bound:
	\begin{eqnarray}
	\sum_{i=1}^{\ceil{n/2}} w_i(\boldsymbol{s})-w_i(\boldsymbol{v}) &\leq& 
	\Big \lceil\frac{4t^3}{3}+\frac{2t}{3}-\frac{35}{4}\Big \rceil. \label{eq::sdst_constraint}
	\end{eqnarray}
\end{proof}
Upon combining Lemmas \ref{lem::sdst} and \ref{lem::9}, we arrive at:
\begin{lemma}
	Consider a string  $\boldsymbol{s} \in \mathcal{S}^{\prime (t)}_{DS}(n)$ [see Construction~7], where $t\geq 2$ and $n \geq 2t+4$. If one is given a corrupted composition multiset $C'(\boldsymbol{s})=\bigcup_{i\in [\ceil{n/2}] \backslash \{k-t, \ldots ,k-1\}} C_i(\boldsymbol{s}) \cup C_{n-i+1}(\boldsymbol{s})$  for any $t \leq k -1\leq \floor{n/2}$, i.e. $t$ consecutive symmetric multiset pairs are missing, $\boldsymbol{s}$ can be uniquely reconstructed. \label{lem::sdst_final}
\end{lemma}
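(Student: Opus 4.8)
The plan is to deduce the statement by stitching together the two preceding lemmas, since between them they already exhaust the admissible range of the parameter $k$. The first thing I would do is observe that the hypothesis $t \leq k-1 \leq \floor{n/2}$ partitions into exactly two regimes according to the smallest index $k-t$ among the deleted pairs: the \emph{boundary} regime $k-t=1$ (equivalently $k-1=t$), in which the outermost pair $\widetilde{C}_1 = C_1 \cup C_n$ is itself among the deleted multisets, and the \emph{interior} regime $k-t \geq 2$ (equivalently $t < k-1 \leq \floor{n/2}$), in which $C_1$ and $C_n$ survive. This dichotomy is precisely the split that separates the hypotheses of Lemmas \ref{lem::9} and \ref{lem::sdst}, respectively.

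In the interior regime I would invoke Lemma \ref{lem::sdst} verbatim: its premise is exactly $t < k-1 \leq \floor{n/2}$ with $\boldsymbol{s} \in \mathcal{S}^{\prime (t)}_{DS}(n)$, and it already guarantees that no distinct $\boldsymbol{v} \in \mathcal{S}_R(n)$ can become equicomposable with $\boldsymbol{s}$ after the deletion of the pairs $\widetilde{C}_{k-t}, \ldots, \widetilde{C}_{k-1}$. In the boundary regime $k-t=1$ I would instead invoke Lemma \ref{lem::9}, whose hypothesis is precisely that $C_1 \cup C_n, \ldots, C_t \cup C_{n-t+1}$ are the deleted pairs. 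Since every value of $k$ in the range of the statement falls into one of these two regimes, unique reconstruction holds throughout.

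The single point that genuinely requires checking is that the one modulus $A = \lceil \frac{4t^3}{3}+\frac{2t}{3}-\frac{31}{4}\rceil$ fixed by Construction~7 in (\ref{eq::cdst}) is simultaneously large enough for both regimes. Here I would compare the cumulative-weight-difference bounds: those driving Lemma \ref{lem::sdst}, namely (\ref{eq::wdiff_even}), (\ref{eq::wd_odd}) and (\ref{eq::wdiff2}), are all cubic in $t$ with leading coefficient $1/4$, whereas the bound (\ref{eq::sdst_constraint}) from Lemma \ref{lem::9} is cubic with leading coefficient $4/3$. Hence for every $t \geq 2$ the boundary-case bound is the binding one, and because $A$ was defined to exceed it (note $-31/4 = -35/4 + 1$, so $A$ clears (\ref{eq::sdst_constraint}) by one), a common residue class modulo $A$ separates any two candidate codewords in \emph{both} regimes at once. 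I expect this uniformity check to be the only nontrivial step, and it is immediate once the leading coefficients are compared; the remainder is a mechanical union of the two cases.
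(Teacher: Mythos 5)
Your decomposition is exactly the paper's: the proof of Lemma~\ref{lem::sdst_final} is nothing more than the union of Lemma~\ref{lem::sdst} (the interior regime $t<k-1\leq\floor{n/2}$) and Lemma~\ref{lem::9} (the boundary regime $k-t=1$), so the structural part of your argument is correct and identical in route. The one substantive claim you add --- that for every $t\geq 2$ the boundary bound (\ref{eq::sdst_constraint}) dominates the interior bounds, so that checking $A$ against it alone settles the uniformity of the modulus --- is false at $t=2$: there $\lceil 4t^3/3+2t/3-35/4\rceil=4$, while (\ref{eq::wdiff_even}) gives $t(t+2)^2/4=8$, which also exceeds $A=5$; the leading-coefficient comparison only takes over from $t=3$ onward. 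This slip does not invalidate your proof, because the check is logically redundant as you have set things up: Lemmas~\ref{lem::sdst} and~\ref{lem::9} each already assert unique reconstructability for $\boldsymbol{s}\in\mathcal{S}^{\prime(t)}_{DS}(n)$ under their respective hypotheses, having each been responsible for verifying $A$ against their own bounds, so citing them case by case is all that is required (exactly as the paper does). It does, however, brush against a genuine tension at $t=2$ between the value of $A$ in Construction~7 and the bound (\ref{eq::wdiff_even}) invoked inside Lemma~\ref{lem::sdst}, which is worth flagging explicitly rather than asserting away with an asymptotic comparison.
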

\begin{theorem}
$\mathcal{S}^{\prime (t)}_{DS}(n)$ is a $t$-symmetric consecutive multiset deletion composition code.
\end{theorem}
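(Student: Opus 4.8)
The plan is to obtain the theorem as a direct consequence of Lemma~\ref{lem::sdst_final} via the standard equivalence between unique reconstruction from a corrupted multiset and the non-confusability of distinct codewords. I would argue by contradiction. Suppose $\mathcal{S}^{\prime (t)}_{DS}(n)$ is not a $t$-symmetric consecutive multiset deletion composition code. Then there exist two distinct codewords $\boldsymbol{s}, \boldsymbol{v} \in \mathcal{S}^{\prime (t)}_{DS}(n)$ and a consecutive window $\mathcal{I}=\{i, i+1, \ldots, i+p-1\} \subseteq \big[\ceil{n/2}\big]$ with $p \leq t$ for which $\widetilde{C}_j(\boldsymbol{s}) \neq \widetilde{C}_j(\boldsymbol{v})$ on $\mathcal{I}$ while all remaining multisets agree. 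Operationally, these conditions say precisely that once the $p$ symmetric pairs indexed by $\mathcal{I}$ are deleted, $\boldsymbol{s}$ and $\boldsymbol{v}$ share the same residual composition multiset.

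The structural fact I would invoke is that every composition $0^z1^w$ belongs to exactly one multiset $C_{z+w}$, so the full multiset decomposes uniquely by substring length. Hence the residual multisets of $\boldsymbol{s}$ and $\boldsymbol{v}$ coincide if and only if the retained symmetric pairs match, i.e. $\widetilde{C}_j(\boldsymbol{s}) = \widetilde{C}_j(\boldsymbol{v})$ for every $j \in \big[\ceil{n/2}\big]\setminus \mathcal{I}$, which is exactly the agreement supplied by the defining conditions outside $\mathcal{I}$. This reduces the confusion event to the deletion of the $p$ consecutive symmetric pairs in $\mathcal{I}$.

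Next I would promote the window from size $p$ to size exactly $t$ so that Lemma~\ref{lem::sdst_final} becomes applicable. Since $n \geq 2t+4$, the index range $\big[\ceil{n/2}\big]$ is wide enough to contain a consecutive window $\mathcal{J} \supseteq \mathcal{I}$ of cardinality $t$ lying within the admissible range of Lemma~\ref{lem::sdst_final}. Because $\widetilde{C}_j(\boldsymbol{s}) = \widetilde{C}_j(\boldsymbol{v})$ for each $j \in \mathcal{J}\setminus \mathcal{I}$, deleting these additional pairs removes identical sub-multisets from the two already-identical residuals, so the residuals stay equal. Consequently both $\boldsymbol{s}$ and $\boldsymbol{v}$ are valid preimages of the single corrupted multiset $\bigcup_{i\in [\ceil{n/2}]\setminus \mathcal{J}} \big(C_i \cup C_{n-i+1}\big)$, i.e. a multiset missing exactly $t$ consecutive symmetric pairs.

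This directly contradicts Lemma~\ref{lem::sdst_final}, which asserts that such a corrupted multiset has a unique reconstruction in $\mathcal{S}^{\prime (t)}_{DS}(n)$; two distinct codewords cannot both be valid. Thus no such $\boldsymbol{s}, \boldsymbol{v}, \mathcal{I}$ exist, proving the claim. The step I expect to require the most care is the window-promotion bookkeeping: I must ensure the enlarged window $\mathcal{J}$ of size $t$ can be placed inside $\big[\ceil{n/2}\big]$ while respecting the hypothesis $t \leq k-1 \leq \floor{n/2}$ of Lemma~\ref{lem::sdst_final}, and this is exactly where the standing assumption $n \geq 2t+4$ is consumed; everything else follows formally from the length-graded structure of composition multisets.
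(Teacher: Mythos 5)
Your proposal is correct and follows essentially the same route as the paper: the paper offers no separate argument for this theorem and simply presents it as a direct consequence of Lemma~\ref{lem::sdst_final}, which is exactly the reduction you carry out (with the window-enlargement from $p\leq t$ to exactly $t$ pairs made explicit, a detail the paper leaves implicit). The only caveat, inherited from the paper's own statement of Lemma~\ref{lem::sdst_final} rather than introduced by you, is that the lemma's admissible range $t\leq k-1\leq \lfloor n/2\rfloor$ excludes the central index $\lceil n/2\rceil$ for odd $n$, so a differing window touching that index is not literally covered by the cited lemma.
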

\emph{Remark:} Experimentally, it is found that an appropriate modulo constraint  corresponding to (\ref{eq::wdiff2}) is sufficient to allow the correction of deletion of any $t$ symmetric multiset pairs, consecutive or otherwise. An intuitive interpretation for this result follows from the fact that when the missing multiset pairs are consecutive, the least number of constraints are imposed on $\boldsymbol{\sigma}$. A rigorous proof for the same is yet to be found. It is also worth mentioning that though the constraint in (\ref{eq::sdst_constraint}) is stricter than that of (\ref{eq::wdiff2}), the order of the required redundancy remains identical.

\section{Skewed substitution-correcting codes} \label{sec::skew}
In this section, we confine our focus to the correction of skewed substitution errors [see Definition 6].

\begin{lemma}
	Consider any $\boldsymbol{s} \in \mathcal{S}_R(n)$. Given that there occurs a single skewed substitution error in its composition set, one can uniquely recover $\boldsymbol{s}$.
\end{lemma}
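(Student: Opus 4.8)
The plan is to exploit the defining feature of a skewed substitution---that the corrupted composition can only \emph{lose} weight---to pinpoint the affected multiset, and then to reduce the problem to a single asymmetric multiset deletion, which $\mathcal{S}_R(n)$ already corrects by Lemma \ref{lem::del_set1_new}.

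First I would record the effect of the error on the cumulative weights. A single skewed substitution replaces one composition $0^z1^w$ of some multiset $C_i(\boldsymbol{s})$ by a composition $0^{z'}1^{w'}$ of the same length with $w'<w$, while every other multiset---in particular the symmetric counterpart $C_{n-i+1}(\boldsymbol{s})$---is left intact. Hence, writing $w'_j$ for the cumulative weight of the $j$-th observed multiset, we have $w'_j=w_j(\boldsymbol{s})$ for all $j\neq i$ and $w'_i=w_i(\boldsymbol{s})-(w-w')<w_i(\boldsymbol{s})$.

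Next I would localize the error. By the symmetry relation (\ref{eq::wt_sym}) the true weights satisfy $w_i(\boldsymbol{s})=w_{n-i+1}(\boldsymbol{s})$, so the observed weights obey $w'_i<w'_{n-i+1}$ whereas $w'_j=w'_{n-j+1}$ for every unaffected index $j$. Thus $\{i,n-i+1\}$ is the unique symmetric pair whose observed cumulative weights disagree, which identifies the affected pair; and because a skewed error can only decrease weight, the corrupted multiset is unambiguously the member of this pair with the \emph{smaller} cumulative weight, namely $C_i(\boldsymbol{s})$. (The degenerate central index $i=n-i+1$, possible only for odd $n$, is excluded by definition, since there the symmetric counterpart would coincide with the affected multiset.)

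Finally I would discard the corrupted multiset $C_i(\boldsymbol{s})$ altogether. What remains is $\bigcup_{j\in[n]\setminus\{i\}}C_j(\boldsymbol{s})$, with $C_{n-i+1}(\boldsymbol{s})$ uncorrupted---this is precisely a single asymmetric multiset deletion in the sense of Definition \ref{def:amdel}. Invoking Lemma \ref{lem::del_set1_new}, which asserts that $\mathcal{S}_R(n)$ is a single asymmetric multiset deletion composition code, we conclude that $\boldsymbol{s}$ is uniquely recoverable. The main obstacle here is the localization step: it is the one-directional (weight-decreasing) nature of the skew that lets us identify the single corrupted multiset rather than having to erase the entire symmetric pair as in the general substitution case handled via Lemma \ref{lem::sym_del}, so this is exactly where the skewed hypothesis is genuinely used.
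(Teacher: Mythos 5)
Your argument for even $n$, and for odd $n$ with the error away from the central multiset, is exactly the paper's: the one-directional nature of the skew makes the corrupted multiset the unique one whose observed cumulative weight falls below that of its symmetric counterpart (the paper's condition $w'_k<w'_{n-k+1}$), after which you discard it and invoke Lemma~\ref{lem::del_set1_new}. That part is correct and is the same reduction the paper performs.

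The divergence is the central multiset for odd $n$. You dispose of the case $i=n-i+1$ by declaring it excluded by definition, but the paper's own proof treats it as a live case and spends its second case on it, so under the paper's reading your proof has a hole precisely where the localization step breaks down. If the single skewed substitution lands in $C_{\ceil{n/2}}(\boldsymbol{s})$, there is no distinct symmetric counterpart to compare against: the observed weights satisfy $w'_j=w'_{n-j+1}$ for every $j$, so your detector reports no error, yet the corrupted value of $w_{\ceil{n/2}}$ silently falsifies the inferred $\sigma_{\ceil{n/2}-1}$ and $\sigma_{\ceil{n/2}}$ via~(\ref{eq::cum_wts}), and hence the assignment of the bits $(s_{\ceil{n/2}-1},s_{\ceil{n/2}+1})$. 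The paper closes this case by appealing to Lemma~\ref{lem::comp_mismatch}: an incorrect choice of those central bits forces a discrepancy of at least two compositions with the uncorrupted multiset $C_{\ceil{n/2}-1}(\boldsymbol{s})$, which a single skewed substitution cannot produce, so the correct assignment is the only consistent one. You should either add that argument or justify explicitly (against the paper's evident intent) why an error in the self-symmetric central multiset is outside the error model; as written, the claim is not fully proved for odd $n$.
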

\begin{proof}
	In the following, we let the corrupted composition set be denoted by $C'(\boldsymbol{s})=\bigcup_{i \in [n]} C'_i(\boldsymbol{s})$.
	\begin{mycases}
		\case $n$ is even.\\
		Given $C'(\boldsymbol{s})$, it is easy to identify the corrupted composition multiset $C'_k(\boldsymbol{s})$, since the following relation only holds for $k$:
		\begin{equation}
		w'_k<w'_{n-k+1}.
		\end{equation}
		If we now delete all elements of $C'_k(\boldsymbol{s})$ from $C'(\boldsymbol{s})$, Lemma~\ref{lem::del_set1_new} tells us that $\boldsymbol{s}$ is still uniquely recoverable.
		\case $n$ is odd.\\
		Using the arguments of the preceding case, we can reach the same conclusion for an odd $n$, when the affected multiset is $C'_k(\boldsymbol{s})$, where $\ceil{n/2}<k\leq n$, because in these cases, there exists an uncorrupted distinct symmetric multiset $C'_{n-k+1}(\boldsymbol{s})$, which gives us the true cumulative weight and thus allows us to accurately recover $\boldsymbol{\sigma}_{s}$.\\
		If $k=\ceil{n/2}$, this is no longer true since the multiset $C_{\ceil{n/2}}(\boldsymbol{s})$ is its own symmetric counterpart. Noting that this normally helps us determine the bits $(s_{\ceil{n/2}-1}, s_{\ceil{n/2}+1})$, we recall from Lemma \ref{lem::comp_mismatch} that when these bits are assigned incorrectly, inconsistencies with the multiset $C_{\ceil{n/2}-1}$ would arise, which are not permitted under the considered error model. Hence, we conclude that $\boldsymbol{s}$ can be recovered uniquely.
	\end{mycases}
\end{proof}
We now consider a more general error model involving multiple asymmetric skewed substitution errors, wherein each multiset pair $\widetilde{C}_i$, for any $i \in [n]$, may contain at most one skewed substitution and the total number of errors does not exceed $t$. It is found that the asymmetric $t$-multiset deletion-correcting code $ \mathcal{S}^{(t)}_{DA}(n)$ is also robust to $t$ asymmetric skewed substitutions and in the following, we prove the same.
\begin{lemma}
	Consider any $\boldsymbol{s} \in \mathcal{S}^{(t)}_{DA}(n)$. Given that there occurs $t$ skewed asymmetric substitution errors in its composition set, such that for all $1\leq i \leq n$, $\widetilde{C}_i(\boldsymbol{s})$ contains at most one skewed substitution error, then one can uniquely recover $\boldsymbol{s}$. \label{lem::skewc}
\end{lemma}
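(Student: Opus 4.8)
The plan is to reduce the correction of $t$ skewed asymmetric substitutions to the correction of $t$ asymmetric multiset deletions, a task already handled by $\mathcal{S}^{(t)}_{DA}(n)$ via Theorem~\ref{lem::t_dels}. The bridge between the two error models is the observation that a skewed substitution, by definition, replaces a composition $0^z 1^w$ with one of strictly smaller Hamming weight $w' < w$, and therefore strictly lowers the cumulative weight of the multiset in which it occurs.

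First I would locate the corrupted multisets. Writing $w'_i$ for the cumulative weight of the received multiset $C'_i(\boldsymbol{s})$, a skewed substitution in $C_i(\boldsymbol{s})$ forces $w'_i < w_i$. The error model permits at most one skewed substitution in each symmetric pair $\widetilde{C}_i$, so whenever $C_i$ is affected its partner $C_{n-i+1}$ is untouched and, by the symmetry relation~(\ref{eq::wt_sym}), still reports the true value $w_{n-i+1}=w_i$. Hence a strict inequality $w'_i < w'_{n-i+1}$ within a pair pinpoints $C'_i$ as the erroneous multiset, exactly as in the single-error lemma preceding this one. I would then discard each flagged multiset in its entirety: because every affected symmetric pair loses only one of its two members, the surviving data equals $\bigcup_{j \notin \mathcal{I}} C_j(\boldsymbol{s})$ for some $\mathcal{I} \subseteq [n]$ with $|\mathcal{I}| \leq t$ containing no mutually symmetric indices. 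This is precisely an instance of $t$ asymmetric multiset deletions, so Theorem~\ref{lem::t_dels} (equivalently Theorem~\ref{lem::t_dels_new}) delivers the unique recovery of $\boldsymbol{s}$.

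The step I expect to be the main obstacle is the self-symmetric central multiset $C_{\ceil{n/2}}$ that arises when $n$ is odd, since it has no distinct partner against which to compare cumulative weights and hence evades the weight-comparison detection used for every other multiset. Here I would mirror Case~2 of the preceding single skewed-substitution lemma: a corruption of $C_{\ceil{n/2}}$ is again absorbed by treating this multiset as deleted, after which the residual ambiguity in the central bit pair $(s_{\ceil{n/2}-1}, s_{\ceil{n/2}+1})$ is resolved by Lemma~\ref{lem::comp_mismatch}, which forbids an incorrect assignment of these bits from remaining consistent with $C_{\ceil{n/2}-1}(\boldsymbol{s})$. Once this case is dispatched, the reduction to deletions is immediate, and the robustness of $\mathcal{S}^{(t)}_{DA}(n)$ against $t$ skewed asymmetric substitutions follows.
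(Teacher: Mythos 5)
Your proposal matches the paper's own argument: both locate the corrupted multisets via the cumulative-weight inequality $w'_i < w'_{n-i+1}$ within each symmetric pair (possible because at most one member of each pair is hit and the skew forces the weight strictly down), then discard those multisets and invoke Theorem~\ref{lem::t_dels} to reduce the problem to $t$ asymmetric multiset deletions. Your explicit treatment of the self-symmetric central multiset $C_{\ceil{n/2}}$ for odd $n$ is a welcome extra care that the paper's proof of this lemma leaves implicit, but it does not change the route.
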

\begin{proof}
	Since the error model only allows at most one skewed substitution in a pair of symmetric multisets, the cumulative weights of all sets can be determined accurately. This is due to the fact that if multiset $C_k(\boldsymbol{s})$ has been corrupted, we may write:
	\begin{equation}
		w_k < w_{n-k+1}. \label{eq::cwts}
	\end{equation} 
	As a consequence, all cumulative weights can be correctly re-assigned and in turn the $\boldsymbol{\sigma}_{s}$ sequence can be recovered. The preceding inequality also allows to identify the affected multisets, the deletion of which would transform our problem of correcting $t$ asymmetric skewed substitutions into reconstruction under the absence of $t$ multisets. According to Theorem~\ref{lem::t_dels}, unique reconstruction of $\boldsymbol{s}$ is perfectly possible, thus concluding our proof.
\end{proof}
The aforementioned result naturally leads to the following theorem.
\begin{theorem}
	$\mathcal{S}^{(t)}_{DA}(n)$ is a $t$-asymmetric skewed composition code. \label{lem::skewc}
\end{theorem}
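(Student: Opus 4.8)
The plan is to read the theorem off the preceding lemma, which already establishes that every codeword of $\mathcal{S}^{(t)}_{DA}(n)$ is uniquely recoverable from its composition set after up to $t$ asymmetric skewed substitutions, with at most one per symmetric multiset pair. What remains is purely definitional: to convert \emph{each codeword is uniquely decodable under the error model} into the pairwise non-confusability condition that defines a $t$-asymmetric skewed composition code. First I would spell out that the defining condition of such a code is exactly the requirement that no two distinct codewords $\boldsymbol{s},\boldsymbol{v}$ differ in at most $t$ asymmetric multisets while their symmetric partners and all remaining multisets coincide; this is the very structural condition that Theorem~\ref{lem::t_dels_new} already certifies for $\mathcal{S}^{(t)}_{DA}(n)$ in the deletion setting.

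The core argument I would give is a short proof by contradiction. Suppose $\mathcal{S}^{(t)}_{DA}(n)$ failed to be a $t$-asymmetric skewed composition code, so that there are distinct $\boldsymbol{s},\boldsymbol{v}\in\mathcal{S}^{(t)}_{DA}(n)$ and an index set $\mathcal{I}\subseteq[n]$ with $|\mathcal{I}|\le t$ satisfying $C_i(\boldsymbol{s})\neq C_i(\boldsymbol{v})$ and $C_{n-i+1}(\boldsymbol{s})=C_{n-i+1}(\boldsymbol{v})$ for all $i\in\mathcal{I}$, and $C_j(\boldsymbol{s})=C_j(\boldsymbol{v})$ for all $j\notin\mathcal{I}$. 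I would then construct a single corrupted composition set $C'$ that is simultaneously a legal skewed corruption of $C(\boldsymbol{s})$ and of $C(\boldsymbol{v})$, so that the decoder of the preceding lemma, run on $C'$, would be forced to output both $\boldsymbol{s}$ and $\boldsymbol{v}$, contradicting its uniqueness guarantee.

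The real engine, reused from the preceding lemma, is the weight-direction observation: since a skewed substitution can only lower the Hamming weight, any corrupted multiset $C'_k$ obeys $w'_k<w_{n-k+1}$ relative to its untouched symmetric partner, so every affected index is localizable and the true cumulative weights, and hence the entire sequence $\boldsymbol{\sigma}_{\boldsymbol{s}}$, are recoverable exactly; deleting the localized multisets reduces the problem to reconstruction from a set missing $t$ asymmetric multisets, which is settled uniquely by Theorem~\ref{lem::t_dels}. The only point requiring care, and the closest thing to an obstacle, is verifying that the same $C'$ can be realized as a valid skewed corruption of both codewords at once, respecting that there is \emph{at most one error per symmetric pair} and that \emph{weights only decrease}; once this bookkeeping is in place the contradiction is immediate and no computation beyond the preceding lemma is needed.
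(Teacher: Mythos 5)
Your proposal is correct and follows essentially the same route as the paper: the theorem is read off the preceding lemma, whose engine is exactly what you describe --- the inequality $w_k < w_{n-k+1}$ localizes every corrupted multiset and recovers all cumulative weights (hence $\boldsymbol{\sigma}_{\boldsymbol{s}}$), after which deleting the affected multisets reduces the problem to $t$ asymmetric multiset deletions, which Theorem~\ref{lem::t_dels} settles. The contradiction detour through a common corrupted set $C'$ is unnecessary (and deriving such a $C'$ from the mere failure of the pairwise condition is the one step that is not guaranteed to go through), but this does not harm your argument, since, as you observe in your opening paragraph, the defining condition of a $t$-asymmetric skewed composition code is literally the condition already certified by Theorem~\ref{lem::t_dels_new}.
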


\section{Conclusion} \label{sec::concl}
In this work, we propose and investigate error models involving insertion and deletion of substring compositions in the context of polymer-based data storage. In particular, we examine the robustness of the composition-reconstructable code introduced in \cite{pattabiraman, p2}, and identify the situations which do not guarantee unique reconstruction of codewords from this construction. For these cases, new codes are proposed. Notably, an equivalence between codes correcting multiset deletions and insertions is established. We also examine a special asymmetric variant of substitution errors, namely skewed substitution errors, which manifest in polymer-based storage.

Several problems pertaining to string construction under this data storage paradigm still remain open:
\begin{itemize}
	\item The error model involving skewed substitutions under a symmetric setting is yet to be investigated. It would be interesting to know if there exists a suitable codebook offering a lower redundancy than that designed to correct standard substitution errors under the symmetric setting, as stated in \cite{pattabiraman}.
	\item The problem of reconstructing strings from composition multisets, error-free or otherwise, could be extended to larger alphabets.
	\item Though some bounds on the maximum number of mutually equicomposable strings were stated in \cite{acharya}, bounds on the error ball sizes under the error models involving substitutions, insertions or deletions are still unknown. These could allow us to infer if the proposed code constructions are indeed optimal.
	\item One could also extend this research to the construct wherein bits are arranged in a circular fashion, on a ring.
	\item As pointed out in \cite{acharya}, a polynomial-time algorithm for the string reconstruction problem is yet to be found.
\end{itemize}

%\section*{References}

\appendix

\begin{lemma}
	Consider a string $\boldsymbol{s} \in \mathcal{S}_{R}(n)$. Given $C'(\boldsymbol{s})=\bigcup_{i\in [n] \backslash \{k-1,k, n-k+1, n-k+2\}} C_i(\boldsymbol{s})$ for any $1 \leq k < \ceil{\frac{n-1}{2}}$, $\boldsymbol{s}$ may no longer be uniquely determined. \label{lem::25}
\end{lemma}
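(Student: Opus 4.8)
This is a non-uniqueness statement, so the plan is to prove it by \emph{explicitly exhibiting} an ambiguous instance: for each admissible pair $(n,k)$ I would construct two \emph{distinct} codewords $\boldsymbol{s},\boldsymbol{v}\in\mathcal{S}_R(n)$ whose composition multisets agree at every length outside the deleted set, i.e. $C_i(\boldsymbol{s})=C_i(\boldsymbol{v})$ for all $i\in[n]\setminus\{k-1,k,n-k+1,n-k+2\}$. Because every codeword of $\mathcal{S}_R(n)$ begins with $0$ and ends with $1$, no codeword is the reversal of another, so any such pair is a genuine (non-reversal) ambiguity, which is exactly what the statement asserts.

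First I would pin down what the deletion actually erases. By the symmetry relation (\ref{eq::wt_sym}) the two deleted pairs remove both copies of the cumulative weights $w_{k-1}$ and $w_k$, and using (\ref{eq::cum_wts}) this leaves only two linear functionals of the central block $(\sigma_{k-2},\sigma_{k-1},\sigma_k,\sigma_{k+1})$ observable from the surviving multisets: the total $\sigma_{k-2}+\sigma_{k-1}+\sigma_k+\sigma_{k+1}$ (from $w_1$) and the weighted sum $3\sigma_{k-2}+2\sigma_{k-1}+\sigma_k$ (from $w_{k+1}$). I would then take $\boldsymbol{s}$ and $\boldsymbol{v}$ to share the prefix $\boldsymbol{s}_1^{k-2}$ and the suffix $\boldsymbol{s}_{n-k+3}^n$ and to differ only inside the central window, choosing the two $\boldsymbol{\sigma}$-blocks so that both functionals coincide: for the non-adjacent case $(\sigma_{k-2},\sigma_{k-1},\sigma_k,\sigma_{k+1})=(0,2,0,0)$ against $(1,0,1,0)$, and for the central adjacent case ($n$ even, $k=n/2$) the block $(0,1,2)$ against its reversal $(2,1,0)$, where only the total is observable. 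These choices guarantee agreement of the cumulative weights $w_i$ at all surviving lengths.

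The crux is that equal cumulative weights are necessary but not sufficient: I must still certify that the \emph{full} multisets $C_i(\boldsymbol{s})$ and $C_i(\boldsymbol{v})$ coincide for every $i\notin\{k-1,k,n-k+1,n-k+2\}$. To do this I would realise the chosen $\boldsymbol{\sigma}$-blocks as concrete bit patterns and classify each length-$\ell$ window according to whether it avoids the modified central window, contains it entirely (hence sees only its preserved total composition), or straddles it; the straddling windows are the sole source of discrepancy. I would design the central bits, absorbing part of the modification into the shared flanking segments if needed, so that the straddling contributions reshuffle within the multiset at exactly $\ell\in\{k-1,k\}$ (and their mirrors) and cancel at every other length. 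Concretely this reduces to verifying a finite family of multiset equalities of the same type as (\ref{eq::set_eq1aa}), which I would check exhaustively over the few free bits, most conveniently by comparing the composition generating polynomials $P_{\boldsymbol{s}}(x,y)$ and $P_{\boldsymbol{v}}(x,y)$ degree by degree.

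Finally I would confirm that both strings lie in $\mathcal{S}_R(n)$, by placing the differing window strictly inside the string and letting the shared prefix carry the mandated Catalan--Bertrand structure of (\ref{eq::sr}) so the defining weight-mismatch conditions hold for both, and that $\boldsymbol{s}\ne\boldsymbol{v}$, which is immediate from the differing central blocks. The main obstacle is precisely the certification of the previous paragraph: matching the complete multisets (rather than merely the weights $w_i$) at all non-deleted lengths while still forcing a genuine difference at lengths $k-1$ and $k$, since a careless choice of central bits tends to perturb the multisets at neighbouring lengths such as $k+1$ or $k-2$ as well. This is also why the remedy in Construction~6 needs only a single extra modular constraint on $\sum_i w_i(\boldsymbol{s})$: the entire ambiguity lives in the two unobservable weight functionals identified above.
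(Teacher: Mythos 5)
Your high-level strategy is the same as the paper's: this is an existence-of-ambiguity claim, and the paper also proves it by explicitly constructing two confusable codewords $\boldsymbol{s},\boldsymbol{v}\in\mathcal{S}_R(n)$ agreeing on all surviving multisets. You also correctly isolate the two surviving linear functionals of the central block, namely $\sigma_{k-2}+\sigma_{k-1}+\sigma_k+\sigma_{k+1}$ and $3\sigma_{k-2}+2\sigma_{k-1}+\sigma_k$. However, there are two genuine gaps. First, the entire difficulty of the lemma lies in the step you defer: equality of the cumulative weights $w_i$ at surviving lengths is necessary but far from sufficient, and you must exhibit concrete bit assignments for which the \emph{full multisets} $C_i$ coincide at every non-deleted length, in particular at the straddling lengths $k+1$, $k-2$ and $n-k+3$. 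The paper spends essentially all of its proof on exactly this: it fixes auxiliary conditions such as $v_{k-2}=0$, $s_2=1$, $s_{n-3}=0$ and $\mathrm{wt}(\boldsymbol{s}_2^{n/2-3})=\mathrm{wt}(\boldsymbol{s}_{n/2+4}^{n-4})$, and then verifies the resulting set equalities line by line. Your proposal states a plan for this verification but does not carry it out, and your chosen $\sigma$-blocks, $(0,2,0,0)$ versus $(1,0,1,0)$ and $(0,1,2)$ versus $(2,1,0)$, are the \emph{extremal} configurations used elsewhere (Theorem~\ref{lem::sds2a}) to bound the weight difference; they are not shown to be realizable as confusable pairs, and the paper in fact uses different blocks, $(1,1,1,0)$ versus $(2,0,0,1)$ and $(0,0,1)$ versus $(1,0,0)$. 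There is also an internal inconsistency in your setup: since $\sigma_{k-2}=0\neq 1=\sigma'_{k-2}$, the strings cannot share both the prefix $\boldsymbol{s}_1^{k-2}$ and the suffix $\boldsymbol{s}_{n-k+3}^{n}$ as you assert; agreement can extend only to length $k-3$, which changes which windows straddle the modification.

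Second, your plan to produce an ambiguous pair ``for each admissible pair $(n,k)$'' cannot succeed. The paper's own case analysis shows that for $k=2$ and $k=3$ (deletion of $\{C_1,C_2,C_{n-1},C_n\}$ or $\{C_2,C_3,C_{n-2},C_{n-1}\}$) the surviving cumulative weights together with the Catalan--Bertrand constraint force $\boldsymbol{\sigma}_{\boldsymbol{s}}=\boldsymbol{\sigma}_{\boldsymbol{v}}$ and hence $\boldsymbol{s}=\boldsymbol{v}$, so no counterexample exists there; confusable pairs arise only for $k\geq 4$ and for the adjacent central configuration. A correct proof therefore needs either to restrict the construction to those regimes (as the paper does) or to argue separately why the small-$k$ cases are irrelevant to the lemma's existential claim. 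As written, your argument would attempt to build counterexamples in cases where none exist, which is a concrete failure point rather than a matter of missing detail.
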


\begin{proof}
	\begin{mycases}
		\case $n$ is even and deleted sets are: $\{C_{\frac{n}{2}-1}(\boldsymbol{s}), \ldots, C_{\frac{n}{2}+2}(\boldsymbol{s})\}$.
		
		To demonstrate that $\mathcal{S}_R(n)$ does not necessarily preserve unique reconstructability when the multisets $\{C_{\frac{n}{2}-1}, \ldots, C_{\frac{n}{2}+2}\}$ go missing, we consider two codewords $\boldsymbol{s}, \boldsymbol{v} \in \mathcal{S}_R(n)$, such that:
		\begin{equation}
			\bigcup_{i \in \{n, \ldots, \frac{n}{2}+3\}} C_i(\boldsymbol{s})=\bigcup_{i \in \{n, \ldots, \frac{n}{2}+3\}} C_i(\boldsymbol{v}).
		\end{equation}
		From our knowledge of the reconstruction algorithm [Section \ref{sec::prelim}], we can also infer the following:
		\begin{equation}
			\begin{split}
			(\boldsymbol{s}_1^{n/2-3}, \boldsymbol{s}_{n/2+4}^n)&=(\boldsymbol{v}_1^{n/2-3}, \boldsymbol{v}_{n/2+4}^n),  \\
			\sigma_i &= \sigma'_i. \quad \quad 1\leq i \leq \frac{n}{2}-3,\\
			\sigma_{\frac{n}{2}-2}+\sigma_{\frac{n}{2}-1}+\sigma_{\frac{n}{2}}&= \sigma'_{\frac{n}{2}-2}+\sigma'_{\frac{n}{2}-1}+\sigma'_{\frac{n}{2}}.
			\end{split} \label{eq::lkj}
		\end{equation}
		where $\boldsymbol{\sigma}_{s}=(\sigma_1, \ldots, \sigma_{n/2})$ and $\boldsymbol{\sigma}_{\boldsymbol{v}}=(\sigma'_1, \ldots, \sigma'_{n/2})$ correspond to $\boldsymbol{s}$ and $\boldsymbol{v}$ respectively. Additionally, we set:
		\begin{equation}
			\begin{split}
			(\sigma_{\frac{n}{2}-2},\sigma_{\frac{n}{2}-1},\sigma_{\frac{n}{2}})&=(0,0,1), \\
			(\sigma'_{\frac{n}{2}-2},\sigma'_{\frac{n}{2}-1},\sigma'_{\frac{n}{2}})&=(1,0,0), \\
			v_{n/2-2}&=1,\\
			s_{n/2}&=1,\\
			s_{n-3}&= 0, \\
			\mathrm{wt}(\boldsymbol{s}_2^{n/2-3})&=\mathrm{wt}(\boldsymbol{s}_{n/2+4}^{n-4}).
			\end{split} \label{eq::lkj2}
		\end{equation}
		The relations between $\boldsymbol{s}$ and $\boldsymbol{v}$ as described by (\ref{eq::lkj}) and (\ref{eq::lkj2}) are depicted in Fig. \ref{fig::mult_sets}. Evidently, $\boldsymbol{s}$ and $\boldsymbol{v}$ differ in their respective multisets $C_{n/2+2}$ and $C_{n/2+1}$ according Lemma \ref{lem::comp_mismatch}. Additionally, since their cumulative weights $w_{n/2+2}$ and $w_{n/2}$ also differ, as one may verify from (\ref{eq::cum_wts}) and (\ref{eq::lkj2}), we deduce that the multisets $C_{n/2}$ and $C_{n/2-1}$ also do not match for $\boldsymbol{s}$ and $\boldsymbol{v}$. We now proceed to examine if $C_{n/2-2}(\boldsymbol{s})=C_{n/2-2}(\boldsymbol{v})$ holds:
		
		\begin{equation}
		\left\{\begin{split}
		&\{c(\boldsymbol{s}_1^{\frac{n}{2}-3}),0\}\\
		&\{c(\boldsymbol{s}_2^{\frac{n}{2}-3}),0^2\}\\
		&\{c(\boldsymbol{s}_3^{\frac{n}{2}-3}),0^21\}\\
		&\{c(\boldsymbol{s}_4^{\frac{n}{2}-3}),0^31\}\\
		&\{c(\boldsymbol{s}_5^{\frac{n}{2}-3}),0^41\}\\
		&\{c(\boldsymbol{s}_{\frac{n}{2}+4}^n),0\}\\
		&\{c(\boldsymbol{s}_{\frac{n}{2}+4}^{n-1}),0^2\}\\
		&\{c(\boldsymbol{s}_{\frac{n}{2}+4}^{n-2}),0^3\}\\
		&\{c(\boldsymbol{s}_{\frac{n}{2}+4}^{n-3}),0^31\}\\
		&\{c(\boldsymbol{s}_{\frac{n}{2}+4}^{n-4}),0^41\}
		\end{split} \right\} = \left\{\begin{split}
		&\{c(\boldsymbol{v}_1^{\frac{n}{2}-3}), 1\}\\
		&\{c(\boldsymbol{v}_2^{\frac{n}{2}-3}),01\}\\
		&\{c(\boldsymbol{v}_3^{\frac{n}{2}-3}),0^21\}\\
		&\{c(\boldsymbol{v}_4^{\frac{n}{2}-3}),0^31\}\\
		&\{c(\boldsymbol{v}_5^{\frac{n}{2}-3}),0^41\}\\
		&\{c(\boldsymbol{v}_{\frac{n}{2}+4}^n),0\}\\
		&\{c(\boldsymbol{v}_{\frac{n}{2}+4}^{n-1}),0^2\}\\
		&\{c(\boldsymbol{v}_{\frac{n}{2}+4}^{n-2}),0^3\}\\
		&\{c(\boldsymbol{v}_{\frac{n}{2}+4}^{n-3}),0^4\}\\
		&\{c(\boldsymbol{v}_{\frac{n}{2}+4}^{n-4}),0^5\}
		\end{split} \right\}. \label{eq::mseteq}
		\end{equation}
		Using (\ref{eq::lkj2}) to simplify this set equality relation, we arrive at:
		
		\begin{equation}
		\left\{\begin{split}
		&\{c(\boldsymbol{s}_1^{\frac{n}{2}-3}),0\}\\
		&\{c(\boldsymbol{s}_2^{\frac{n}{2}-3}),0^2\}\\
		&\{c(\boldsymbol{s}_{\frac{n}{2}+4}^{n-3}),0^31\}\\
		&\{c(\boldsymbol{s}_{\frac{n}{2}+4}^{n-4}),0^41\}
		\end{split} \right\} = \left\{\begin{split}
		&\{c(\boldsymbol{v}_1^{\frac{n}{2}-3}),1\}\\
		&\{c(\boldsymbol{v}_2^{\frac{n}{2}-3}),01\}\\
		&\{c(\boldsymbol{v}_{\frac{n}{2}+4}^{n-3}),0^4\}\\
		&\{c(\boldsymbol{v}_{\frac{n}{2}+4}^{n-4}),0^5\}
		\end{split} \right\}.
		\end{equation}
		
		Since the construction of $\mathcal{S}_R(n)$ in () requires $s_1=0$ and (\ref{eq::lkj2}) mandates that $s_{n-3}=0$ and $\mathrm{wt}(\boldsymbol{s}_2^{n/2-3})=\mathrm{wt}(\boldsymbol{s}_{n/2+4}^{n-4})$, we are led to the following relation:
			\begin{equation}
		\mathrm{wt}(\boldsymbol{s}_1^{n/2-3})=\mathrm{wt}(\boldsymbol{s}_2^{n/2-3})=\mathrm{wt}(\boldsymbol{s}_{\frac{n}{2}+4}^{n-3})=\mathrm{wt}(\boldsymbol{s}_{\frac{n}{2}+4}^{n-4}).
		\end{equation}
		This allows us to conclude that (\ref{eq::mseteq}) indeed holds, and further bit specifications in $\boldsymbol{s}$ and $\boldsymbol{v}$ can lead us to similar set equality relations for the multisets $C_{n/2-3}, \ldots, C_1$. Hence, $\boldsymbol{s}$ and $\boldsymbol{v}$ become confusable under the deletion of multisets $\{C_{\frac{n}{2}-1}(\boldsymbol{s}), \ldots, C_{\frac{n}{2}+2}(\boldsymbol{s})\}$.
		
		\begin{figure}[!htb]
			\centering			
			\scalebox{0.8}{\begin{tikzpicture}[font=\ttfamily,
				array/.style={matrix of nodes,nodes={draw, minimum size=7mm, fill=green!30},column sep=-\pgflinewidth, row sep=0.5mm, nodes in empty cells,
					row 1/.style={nodes={draw=none, fill=none, minimum size=5mm}},
					row 1 column 1/.style={nodes={draw}}}]
				% string s
				\draw (0,0) rectangle (2,0.6) node[pos=.5] {$\boldsymbol{s}_1^{\frac{n}{2}-3}$};
				\draw (2,0) rectangle (2.5,0.6) node[pos=0.5] {$0$};
				\draw (2.5,0) rectangle (3,0.6) node[pos=0.5] {$0$};
				\draw (3,0) rectangle (3.5,0.6) node[pos=0.5] {$1$};
				\draw (3.5,0) rectangle (4,0.6) node[pos=0.5] {$0$};
				\draw (4,0) rectangle (4.5,0.6) node[pos=0.5] {$0$};
				\draw (4.5,0) rectangle (5,0.6) node[pos=0.5] {$0$};
				\draw (5,0) rectangle (7,0.6) node[pos=0.5] {$\boldsymbol{s}_{\frac{n}{2}+4}^n$};
				
				%string v
				\draw (0,-0.2) rectangle (2,-0.8) node[pos=0.5] {$\boldsymbol{v}_1^{\frac{n}{2}-3}$};
				\draw (2,-0.2) rectangle (2.5,-0.8) node[pos=0.6] {$1$};
				\draw (2.5,-0.2) rectangle (3,-0.8) node[pos=0.6] {$0$};
				\draw (3,-0.2) rectangle (3.5,-0.8) node[pos=0.6] {$0$};
				\draw (3.5,-0.2) rectangle (4,-0.8) node[pos=0.6] {$0$};
				\draw (4,-0.2) rectangle (4.5,-0.8) node[pos=0.6] {$0$};
				\draw (4.5,-0.2) rectangle (5,-0.8) node[pos=0.6] {$0$};
				
				\draw (5,-0.2) rectangle (7,-0.8) node[pos=0.5] {$\boldsymbol{v}_{\frac{n}{2}+4}^n$};
				\end{tikzpicture}}
			\caption{Strings $\boldsymbol{s}$ and $\boldsymbol{v}$ are specified by (\ref{eq::lkj}) and (\ref{eq::lkj2}).}
			\label{fig::mult_sets}
		\end{figure}
	
		\case
		$n$ may be odd/even and the four deleted sets are not consecutive: $\{C_{k-1}(\boldsymbol{s}), C_k(\boldsymbol{s}), C_{n-k+1}(\boldsymbol{s}),C_{n-k+2}(\boldsymbol{s})\}$, where $k+1<n-k+1$.
		
		In the following, we once again proceed by checking if $\boldsymbol{s}$ is uniquely recoverable, by probing the existence of some $\boldsymbol{v} \in \mathcal{S}_R(n)$, characterized by $\sigma'_1, \ldots, \sigma'_{\ceil{\frac{n}{2}}}$ such that for all $i \in [n] \backslash \{k-1,k-n-k+1,n-k+2\}$:
		\begin{equation}
		C_i(\boldsymbol{s})=  C_i(\boldsymbol{v}). \label{eq::civ}
		\end{equation}
		
		\subcase $k=2$
		
		This situation corresponds to the deletion of multisets $C_1(\boldsymbol{s})$, $C_2(\boldsymbol{s})$, $C_{n-1}(\boldsymbol{s})$ and $C_{n}(\boldsymbol{s})$. When this happens, for any $3\leq i \leq \ceil{n/2}-1$, the following values are recoverable:
		\begin{equation}
		w_{i+1}(\boldsymbol{s})-w_i(\boldsymbol{s})=\sigma_{i+1}+\ldots+\sigma_{\ceil{n/2}}. \nonumber
		\end{equation}
		This can be used to recover the values of $\sigma_4, \ldots, \sigma_{\ceil{n/2}}$. In other words,
		\begin{equation}
		\sigma_i=\sigma'_i. \quad  \forall \;\; 4 \leq i \leq \ceil{n/2} \label{eq::sigis}
		\end{equation}
		Furthermore, since $w_3(\boldsymbol{s})=w_3(\boldsymbol{v})$, we can infer from (\ref{eq::wk_alt}) and (\ref{eq::sigis}) that:
		\begin{eqnarray}
		\sigma_1+2\sigma_2+3\sigma_3&=&\sigma'_1+2\sigma'_2+3\sigma'_3 \nonumber \\
		\implies 2\sigma_2+3\sigma_3&=&2\sigma'_2+3\sigma'_3. \nonumber
		\end{eqnarray}
		The second equality follows from the construction of $\mathcal{S}_R(n)$. Given the above relation, we conclude that (\ref{eq::sigis}) also holds for $i \in \{2,3\}$. Moreover, we cannot have $(s_2,s_{n-1}) \neq (v_2,v_{n-1})$ even when $\sigma_2=\sigma'_2=1$, since the Catalan-Bertrand structure would automatically imply that $(s_2,s_{n-1}) = (v_2,v_{n-1})=(0,1)$. This inference combined with Lemma \ref{lem::comp_mismatch}, lead us to the conclusion that no suitable $\boldsymbol{v}$  exists.
		
		\subcase $k=3$
		
		When multisets $C_2(\boldsymbol{s}), C_3(\boldsymbol{s}), C_{n-2}(\boldsymbol{s})$ and $C_{n-1}(\boldsymbol{s})$ have been deleted, the availability of cumulative weights $w_1,w_4, \ldots w_{\ceil{n/2}}$ allow us to retrieve $\sigma_1, \sigma_5,\ldots, \sigma_{\ceil{n/2}}$ as in the previous subcase, i.e. 
		\begin{equation}
		\sigma_i=\sigma'_i. \quad  \forall \;\; i \in \big[\ceil{n/2}\big] \backslash \{2,3,4\}  \label{eq::sigis2}
		\end{equation}
		We also observe from (\ref{eq::cum_wts}) and (\ref{eq::civ}) that:
		\begin{eqnarray}
		w_4(\boldsymbol{s})-w_1(\boldsymbol{s})&=&w_4(\boldsymbol{v})-w_1(\boldsymbol{v}) \nonumber\\
		&=&3w_1(\boldsymbol{s})-\sigma_3-2\sigma_{2}-3\sigma_1,\nonumber\\
		\implies \sigma_2+2\sigma_3	&=& \sigma'_2+2\sigma'_3. \label{eq::sigi23}
		\end{eqnarray}
		Similarly, since $w_5(\boldsymbol{s})=w_5(\boldsymbol{v})$, we obtain:
		\begin{equation}
		\sigma_2+2\sigma_3+3\sigma_4=\sigma'_2+2\sigma'_3+3\sigma_4. \nonumber
		\end{equation}
		As a consequence, (\ref{eq::sigis2}) also holds for $i=4$. This, along with (\ref{eq::w1}) hint that:
		\begin{equation}
		\sigma_2+\sigma_3=\sigma'_2+\sigma'_3. \label{eq::sig23}
		\end{equation}
		Equations (\ref{eq::sigi23}) and (\ref{eq::sig23}) together insinuate that $(\sigma_2,\sigma_3)=(\sigma'_2,\sigma'_3)$. Hence, we may argue as before, that no suitable $\boldsymbol{v}$ distinct from $\boldsymbol{s}$ actually exists.\\
	
		\subcase $k\geq 4$

		Similar to the approach used in Case 1, we attempt to show that there exist two codewords $\boldsymbol{s}, \boldsymbol{v} \in \mathcal{S}_R(n)$, such that for all $i \in [n]\backslash \{k-1,k,n-k+1,n-k+2\}$:
		\begin{equation}
			C_i(\boldsymbol{s})= C_i(\boldsymbol{v}). 
		\end{equation}
		To this end, we construct a specific pair of strings $\boldsymbol{s}$ and $\boldsymbol{v}$ as follows:
		\begin{equation}
			\begin{split}
			(\boldsymbol{s}_1^{k-3}, \boldsymbol{s}_{n-k+4}^{n})&= (\boldsymbol{v}_1^{k-3}, \boldsymbol{v}_{n-k+4}^{n}), \\
			(\sigma_{k-2},\sigma_{k-1}, \sigma_{k}, \sigma_{k+1})&=(1,1,1,0),\\
			(\sigma'_{k-2},\sigma'_{k-1}, \sigma'_{k}, \sigma'_{k+1})&=(2,0,0,1),\\
			\sigma_i&=\sigma'_i, \quad \quad \forall \; k+2 \leq i \leq \ceil{\frac{n}{2}}\\
			(s_{k-1}, s_{k}, s_{k+1}, s_{k+2})&=(0,0,1),\\
			s_2&=1, \\
			v_{k-2}&=0.
			\end{split} \label{eq::lkopp}
		\end{equation}
		
		\begin{figure}[!htb]
			\centering			
			\scalebox{0.8}{\begin{tikzpicture}[font=\ttfamily,
				array/.style={matrix of nodes,nodes={draw, minimum size=7mm, fill=green!30},column sep=-\pgflinewidth, row sep=0.5mm, nodes in empty cells,
					row 1/.style={nodes={draw=none, fill=none, minimum size=5mm}},
					row 1 column 1/.style={nodes={draw}}}]
				% string s
				\draw (0,0) rectangle (2,0.6) node[pos=.5] {$\boldsymbol{s}_1^{k-3}$};
				\draw (2,0) rectangle (2.5,0.6) node[pos=0.5] {$0$};
				\draw (2.5,0) rectangle (3,0.6) node[pos=0.5] {$0$};
				\draw (3,0) rectangle (3.5,0.6) node[pos=0.5] {$0$};
				\draw (3.5,0) rectangle (4,0.6) node[pos=0.5] {$1$};
				\draw (4,0) rectangle (5.5,0.6) node[pos=0.5] {$\boldsymbol{s}_{k+2}^{n-k-1}$};
				\draw (5.5,0) rectangle (6,0.6) node[pos=0.5] {$0$};
				\draw (6,0) rectangle (6.5,0.6) node[pos=0.5] {$1$};
				\draw (6.5,0) rectangle (7,0.6) node[pos=0.5] {$1$};
				\draw (7,0) rectangle (7.5,0.6) node[pos=0.5] {$0$};
				\draw (7.5,0) rectangle (9.5,0.6) node[pos=0.5] {$\boldsymbol{s}_{n-k+4}^n$};
				
				%string v
				\draw (0,-0.2) rectangle (2,-0.8) node[pos=0.5] {$\boldsymbol{v}_1^{k-3}$};
				\draw (2,-0.2) rectangle (2.5,-0.8) node[pos=0.6] {$0$};
				\draw (2.5,-0.2) rectangle (3,-0.8) node[pos=0.6] {$0$};
				\draw (3,-0.2) rectangle (3.5,-0.8) node[pos=0.6] {$0$};
				\draw (3.5,-0.2) rectangle (4,-0.8) node[pos=0.6] {$1$};
				\draw (4,-0.2) rectangle (5.5,-0.8) node[pos=0.6] {$\boldsymbol{v}_{k+2}^{n-k-1}$};
				\draw (5.5,-0.2) rectangle (6,-0.8) node[pos=0.6] {$1$};
				\draw (6,-0.2) rectangle (6.5,-0.8) node[pos=0.6] {$0$};
				\draw (6.5,-0.2) rectangle (7,-0.8) node[pos=0.6] {$0$};
				\draw (7,-0.2) rectangle (7.5,-0.8) node[pos=0.6] {$1$};
				\draw (7.5,-0.2) rectangle (9.5,-0.8) node[pos=0.5] {$\boldsymbol{v}_{n-k+4}^n$};
				\end{tikzpicture}}
			\caption{Strings $\boldsymbol{s}$ and $\boldsymbol{v}$ are related such that $(\boldsymbol{s}_1^{k-3}, \boldsymbol{s}_{n-k+4}^n)=(\boldsymbol{v}_1^{k-3}, \boldsymbol{v}_{n-k+4}^n)$ and $c(\boldsymbol{s}_{k+2}^{n-k-1})=c(\boldsymbol{v}_{k+2}^{n-k-1})$}
			\label{fig::mult_sets2}
		\end{figure}
	
		These relations have been illustrated in Fig. \ref{fig::mult_sets2}. The preceding equalities also imply that:
		\begin{equation}
			\begin{split}
			\sigma_i&=\sigma'_i, \quad \quad \forall \; 1 \leq i \leq k-3 \\
			\sum_{i=k-2}^{k+1}\sigma_{i} &= \sum_{i=k-2}^{k+1} \sigma'_{i},\\
			\sigma_k+2\sigma_{k-1}+3\sigma_{k-2}&=\sigma'_k+2\sigma'_{k-1}+3\sigma'_{k-2},\\
			c(\boldsymbol{s}_{k+2}^{n-k-1})&= c(\boldsymbol{v}_{k+2}^{n-k-1}).
			\end{split}
		\end{equation}
		In turn, these relations help ensure that:
		\begin{equation}
			\begin{split}
			w_i(\boldsymbol{s})&=w_i(\boldsymbol{v}), \quad\quad \forall \; 1 \leq i \leq k-2 \\
			w_{k+1}(\boldsymbol{s})-w_{k-2}(\boldsymbol{s})&= w_{k+1}(\boldsymbol{v})-w_{k-2}(\boldsymbol{v}), \\
			w_{k+i+1}(\boldsymbol{s})-w_{k+i}(\boldsymbol{s})&=w_{k+i+1}(\boldsymbol{v})-w_{k+i}(\boldsymbol{v}).
			\end{split}
		\end{equation}
		for $1\leq i \leq n-k-1$. One may verify this with the assistance of (\ref{eq::w1}) and (\ref{eq::cum_wts}). \\
		From Fig. \ref{fig::mult_sets2}, it is fairly evident that $\boldsymbol{s}$ and $\boldsymbol{v}$ do not match in their corresponding multisets $C_{n-k+2}$ and $C_{n-k+1}$. Now as done in case 1, we check if multisets $C_{n-k}(\boldsymbol{s})$ and $C_{n-k}(\boldsymbol{v})$ match:
		\begin{equation}
		\left\{\begin{split}
		&\{c(\boldsymbol{s}_1^{k-3}),0^41,c\}\\
		&\{c(\boldsymbol{s}_2^{k-3}),0^41^2,c\}\\
		&\{c(\boldsymbol{s}_3^{k-3}),0^41^3,c\}\\
		&\{c(\boldsymbol{s}_{n-k+4}^n),0^21^3,c\}\\
		&\{c(\boldsymbol{s}_{n-k+4}^{n-1}),0^31^3,c\}\\
		&\{c(\boldsymbol{s}_{n-k+4}^{n-2}),0^41^3,c\}
		\end{split} \right\} = \left\{\begin{split}
		&\{c(\boldsymbol{v}_1^{k-3}),0^31^2,c\}\\
		&\{c(\boldsymbol{v}_2^{k-3}),0^41^2,c\}\\
		&\{c(\boldsymbol{v}_3^{k-3}),0^51^2,c\}\\
		&\{c(\boldsymbol{v}_{n-k+4}^n),0^21^3,c\}\\
		&\{c(\boldsymbol{v}_{n-k+4}^{n-1}),0^31^3,c\}\\
		&\{c(\boldsymbol{v}_{n-k+4}^{n-2}),0^41^3,c\}
		\end{split} \right\}.
		\end{equation}
		where $c=c(\boldsymbol{s}_{k+2}^{n-k-1})=c(\boldsymbol{v}_{k+2}^{n-k-1})$.
		By applying (\ref{eq::lkopp}) to this, we deduce that this equality is indeed upheld, thus implying that $\boldsymbol{s}$ and $\boldsymbol{v}$ are confusable under the absence of multisets $C_{k-1}, C_k, C_{n-k+1}, C_{n-k+2}$. 	
	\end{mycases}
\end{proof}

\end{document}